\documentclass[acmsmall]{acmart}

\def\fullFlag{0}

\AtBeginDocument{%
	}

\usepackage{amsmath,amssymb}
\usepackage{bm}
\usepackage{graphicx}
\usepackage{subcaption}
\usepackage{hyperref}
\usepackage{mathtools}
\usepackage{array}
\usepackage{dashbox}
\usepackage[all,cmtip]{xy}
\usepackage[table,dvipsnames]{xcolor}
\usepackage{prftree}

\usepackage{tikz}
\usetikzlibrary{trees}
\usepackage{xspace}

\usepackage{algorithm}
\usepackage{algpseudocode}
\usepackage[shortlabels]{enumitem}
\newcommand{\ie}{\textit{i.e.}\xspace}
\newcommand{\eg}{\textit{e.g.}\xspace}
\newcommand{\etc}{\textit{etc.}\xspace}

\newtheorem{definition}{Definition}
\newtheorem{theorem}{Theorem}
\newtheorem{lemma}{Lemma}

\newtheorem{corollary}{Corollary}

\newcommand{\mycircled}[2][none]{
 \tikz[baseline = (a.base)]\node[draw, circle, inner sep = 1pt, outer sep = 0pt, fill = #1](a){\ensuremath #2\strut};
}

\newcommand{\s}{\mathsf}

\newcommand{\inp}{\scalebox{0.85}{${\boxdot}$}}
\newcommand{\oup}{\scalebox{0.85}{${\boxtimes}$}}
\newcommand{\inm}{\mathsf{im}}
\newcommand{\oum}{\mathsf{om}}

\newcommand{\tx}{\mathsf{tx}}
\newcommand{\id}{\mathsf{id}}

\newcommand{\ith}{i^{\mathrm{th}}}
\newcommand{\jth}{j^{\mathrm{th}}}
\newcommand{\bmtop}{\boldsymbol{\top}}
\newcommand{\bmbot}{\boldsymbol{\bot}}

\newcommand{\ri}{\mathsf{ri}} 

\newcommand{\concat}{\mathbin{\|}}

\newcommand{\aeval}{\mathsf{aEval}_{\Gamma, \tx}}
\newcommand{\geval}{\mathsf{gEval}_{\Gamma, \tx}}
\newcommand{\beval}{\mathsf{bEval}_{\Gamma, \tx}}
\newcommand{\nonf}{\{ \mathsf{U}, \bmtop \}}
\newcommand{\ctvp}{CTV\raisebox{0.5ex}{\Large +}\xspace}

\newcommand{\accept}{\mathsf{Accept}}

\algnewcommand{\IfReturn}[2]{%
	\State \algorithmicif\ #1 \algorithmicthen\ \algorithmicreturn\ #2%
}
\algnewcommand{\ForDo}[2]{%
	\State \algorithmicfor\ #1 \algorithmicdo\ #2%
}

\newcommand{\acmdashbox}[2][]{%
\tikz[baseline=(B.base)]{%
\node[draw,dashed,inner sep=2pt,outer sep=0pt,rounded corners=0pt,#1] (B) {#2};%
}%
}

\newcommand{\mysmallskip}{\par\everypar{}\setlength{\parskip}{3pt}\noindent}

\newcommand\itpara[1]{\mysmallskip\noindent\textit{#1}}

\newcommand{\CaptionAndDescription}[1]{%
	\caption{#1}%
	\Description{#1}%
}

\setcopyright{cc}
\setcctype{by}
\acmDOI{10.1145/3839489}
\acmYear{2026}
\acmJournal{PACMPL}
\acmVolume{10}
\acmNumber{OOPSLA2}
\acmArticle{357}
\acmMonth{10}
\acmSubmissionID{oopslab26main-p815-p}
\received{2026-03-17}
\received[accepted]{2026-06-10}

\begin{document}
\hyphenation{Sherman}

\author{Shuyang Tang}
\affiliation{
\institution{Shanghai\hspace{0.11em} Jiao\hspace{0.11em} Tong\hspace{0.11em} University}
	\city{Shanghai}
	\country{China}}
\email{htftsy@gmail.com}
\orcid{0000-0001-9379-3609}

\author{Sherman S. M. Chow}
\affiliation{
	\institution{The Chinese University of Hong Kong}
	\city{Shatin}
	\state{N.T.}
	\country{Hong Kong}}
\email{smchow@ie.cuhk.edu.hk}
\orcid{0000-0001-7306-453X}

\author{Hongfei Fu}
\authornote{Corresponding author.}
\affiliation{%
	\institution{Shanghai University of Finance and Economics}
	\city{Shanghai}
	\country{China}
}
\email{fuhongfei@mail.sufe.edu.cn}
\orcid{0000-0002-7947-3446}

\author{Zihan Guo}
\affiliation{
	\institution{Sun Yat-sen University}
	\city{Guangzhou}
	\country{China}}
\email{guozh29@mail2.sysu.edu.cn}
\orcid{0000-0002-2401-9174}

\author{Guoqiang Li}
\affiliation{
	\institution{Shanghai Jiao Tong University}
	\city{Shanghai}
	\country{China}}
\email{li.g@sjtu.edu.cn}
\orcid{0000-0001-9005-7112}


\renewcommand{\shortauthors}{Tang, Chow, Fu, Guo, and Li}

\title{Staged Multi-step UTXO Workflows via Recursive Invariants}

\begin{abstract}
Stateless UTXO-style execution validates transactions using local and referenced data, enabling parallel validation and predictable serialized-size/weight accounting. 
However, multi-step workflows must thread state across outputs, and a prepared next-step transaction may become stale when another valid spend confirms first. 
Explicit state threading therefore shifts consistency maintenance, off-chain tracking, and transaction rebuilding onto the protocol boundary, potentially increasing coordination cost and latency. 
Recursive invariants (RIs), our proposed transaction-level logic and toolchain, address this gap by expressing workflow rules as transaction-level predicates over a transaction's inputs and indexed successor positions referenced by the RI. 
Modeled this way, an accepted transaction that realizes such a successor position re-checks the predecessor's RI one step later, carrying the workflow rule forward without introducing application-level shared mutable state or executable logic attached to outputs. 
Accordingly, multi-step protocol rules preserve validation-time locality and admit explicit cost accounting, while cross-transaction guarantees arise from repeated one-step checking. 
Not all successor clauses are checkable when the current transaction is validated, so our small statically typed domain-specific language (DSL) uses three-valued semantics over true, false, unknown to defer future-dependent obligations until they become checkable. 
Co-designed with this DSL, our framework formalizes UTXO validation and ledger extension, identifies the validation-time-evaluable one-step fragment, and proves the deduction system sound with respect to the three-valued semantics. 
Here, we also give validation and ledger-extension algorithms corresponding to the formal model. 
On the systems side, we implement a prototype RI interpreter and benchmarking toolchain for the six reported workloads. 
With six practice-motivated case studies, the reported benchmark traces exhibit approximately linear cumulative validation-cost proxy growth, while illustrating staged workflow constraints without committing each step to a preconstructed successor transaction.
\end{abstract}

\begin{CCSXML}
<ccs2012>
	<concept>
		<concept_id>10011007.10011006.10011039</concept_id>
		<concept_desc>Software and its engineering~Formal language definitions</concept_desc>
		<concept_significance>500</concept_significance>
		</concept>
	<concept>
		<concept_id>10011007.10011074.10011099.10011692</concept_id>
		<concept_desc>Software and its engineering~Formal software verification</concept_desc>
		<concept_significance>300</concept_significance>
		</concept>
	<concept>
		<concept_id>10011007.10011006.10011050.10011017</concept_id>
		<concept_desc>Software and its engineering~Domain specific languages</concept_desc>
		<concept_significance>100</concept_significance>
		</concept>
 </ccs2012>
\end{CCSXML}

\ccsdesc[500]{Software and its engineering~Formal language definitions}
\ccsdesc[300]{Software and its engineering~Formal software verification}
\ccsdesc[100]{Software and its engineering~Domain specific languages}

\keywords{Blockchain, Distributed Ledgers, UTXO, Smart Contracts, Recursive Invariants, Domain-specific Language, Transaction Validation}

\maketitle

\section{Introduction}
Blockchains are append-only ledgers that record validated transactions.
Smart-contract applications, which span domains from financial services to the Internet of Things~\cite{nss/MengWWLYLZC18}, commonly follow one of two execution models:
account-based systems and \emph{unspent transaction output}
(UTXO)-based systems.
Account-based systems, \eg, \emph{Ethereum}, maintain a shared global state that contracts read and write.
UTXO-based systems, \eg, \emph{Bitcoin}, represent state as UTXOs.

Account-based chains offer shared storage and general programmability, but shared state introduces serial effects, reentrancy, and state-dependent execution work that complicate ex-ante resource accounting.
Fees remain market-dependent in either model;
we focus on local validation dependencies and explicit resource accounting, for which UTXO execution is attractive.
In a UTXO-based system, a transaction consumes prior outputs as \emph{inputs} and creates new \emph{outputs}.
Scripts specify spending conditions but do not provide persistent contract storage~\cite{fcw/ChakravartyCMMJ20}.

\subsection{Problem and Approach}
\itpara{Multi-step UTXO Workflows are Hard.}
Bitcoin-like UTXO validation is largely local to a transaction and its referenced UTXOs, enabling substantial parallel checking and predictable serialized size of the transaction encoding and Bitcoin-style transaction weight~\cite{fgcs/BartolettiMZ26,fc/CromanDEGJKMSSS16}.
Because validation inspects only data carried or referenced by a transaction, the resulting proof obligations are local and the validation work can be accounted for explicitly.
The cost is that multi-step workflows must carry their state explicitly through outputs.
Each step spends a UTXO encoding protocol state and creates a successor UTXO with updated state.
Before confirmation, transaction ordering is provisional, so a prepared successor transaction can become stale if another valid spend confirms first.
This occurs due to mempool competition, fee-driven reordering, delayed confirmation, and protocol branches such as timeout, refund, and dispute paths.
Participants must therefore track the latest state-carrying UTXO off-chain and rebuild after conflicts, increasing coordination and latency.

Bitcoin-like UTXO scripts support per-transaction checks on signatures, timelocks, and covenant-style templates such as CTV-op
(\textsf{OP\_CHECKTEMPLATEVERIFY})~\cite{fc/OConnorP17,fc/MoserES16,bip119}.
These mechanisms strengthen transaction constraints, but they do not provide persistent shared mutable contract storage or a reusable interface for multi-step workflow invariants.
A multi-step workflow is a sequence of transactions that jointly implement a protocol, while shared mutable contract storage allows application state to be updated and reused across steps.
Some higher-expressivity designs rely on optimistic disputes, which make worst-case on-chain cost adversary-dependent;
richer covenant designs may still require explicit state passing and off-chain coordination.

We instead enforce multi-step rules with validation-time locality and future-dependent checking bounded to one successor step, with explicit accounting of validation work.
We call this transaction/validation model and RI interface \ctvp.
Rather than commit to a specific successor transaction, an RI constrains whichever transaction later realizes a referenced successor position.

\itpara{Recursive Invariants.}
We propose \emph{recursive invariants} (RIs), transaction-level predicates over inputs ($\inp_i$) and indexed successor positions ($\oup_i$);
a position referenced by a successor operator is \emph{designated}, with no separate designation field.
An accepted realization of such a position re-checks the inherited RI one step later, carrying workflow rules forward without application-level shared mutable state or executable output logic.
Viewed against account-based execution, RIs separate persistent workflow rules from shared mutable contract state: staged protocols may need the former without the latter.

\itpara{Validation-Time Locality and Future-Dependent Clauses.}
The validator checks the current transaction and inherited RIs of its parents;
successive such checks thereby enforce cross-transaction guarantees without exploring future paths.
Some successor-dependent clauses cannot be decided when the current transaction is validated, so our statically typed DSL uses Kleene's strong three-valued semantics: such a clause is $\mathsf{unknown}$ until realized and then re-checked.
The validation-time locality claims apply to the evaluable one-step fragment.
More generally, deferred obligations and the evaluable-fragment restrictions keep RI evaluations stable at unrelated prior transactions, yielding parent-local rather than ledger-wide re-checking.

\itpara{Formal Results.}
We formalize UTXO validation and ledger extension and give a judgmental deduction system sound for the three-valued semantics.
These results apply to the evaluable fragment.

\itpara{Turing-completeness.}
Under mathematical-integer semantics and an unbounded transaction chain, the evaluable one-step fragment simulates UTXO counter machines and is Turing-complete~\cite{csf/BartolettiLZ21};
\S\ref{subsec:turing} gives the construction.

\itpara{Implementation and Evaluation Preview.}
We implement an RI benchmark interpreter and cost-estimation toolchain and evaluate six workflows~(\S\ref{sec:evaluation}).
\ifnum\fullFlag = 1
Appendix~\ref{app:future}
\else
Appendix~C of our full version~\cite{eprint/TangCFGL26}
\fi
discusses additional limitations and future directions.

Our contributions are:
(1) RIs and a statically typed, three-valued DSL over inputs and indexed successors;
(2) a formal UTXO execution model, one-step evaluability conditions, a sound deduction system, and a counter-machine simulation~\cite{csf/BartolettiLZ21}; and 
(3) an RI benchmark interpreter and cost-estimation toolchain whose six workloads show approximately linear cumulative validation-cost proxy growth over the evaluated traces.

The BIP-119 opcode \textsf{OP\_CHECKTEMPLATEVERIFY} is often called CTV~\cite{bip119}.
We call it CTV-op here to distinguish it from \ctvp.

\subsection{Scope and Limitations}
\label{subsec:limitations}
\ctvp and neighborhood covenants, which attach constraints per output, are both Turing-complete;
our distinction is validation interface and validation-time locality, not worst-case computability.
\ctvp checks the current transaction, referenced predecessors, and RI-referenced successor positions, but has no native ledger-wide scan or quantification over unrelated UTXOs;
such properties require carried summaries, commitments, or authenticated witnesses.
It is not a general-purpose shared-mutable-state platform.

Validation-time locality concerns ledger dependencies:
checking follows transaction relationships selected by the transaction and RI syntax rather than scanning unrelated ledger state.
It does not imply constant time for arbitrary RI size, nesting depth, or operand cost.
Parent-local re-checking concerns which previously accepted RI roots must be re-evaluated after an extension, not the complete read set of an RI evaluation;
each checked RI may follow the syntax-delimited transaction dependencies permitted by the language.
The formal semantics uses mathematical domains, whereas the benchmark implementation uses finite $256$-bit representations.
RI safety does not imply liveness or recovery:
successor clauses constrain realizations if they occur, and an evaluable RI formula may admit no continuation;
retries, timeouts, refunds, and satisfiability checks require additional protocol logic or tooling.

The concrete checker stores at most one RI formula and one checker-side instance label per transaction;
the denotational semantics abstracts from labels.
Continuation-marker values/tags encode modes;
supporting multiple named formulas or instances would require a richer representation.
\subsection{Related Work}
\label{sec:related}

\itpara{UTXO Systems and Covenants.}
Neighborhood covenants~\cite{csf/BartolettiLZ21} are our closest baseline for UTXO programmability.
Given their comparable worst-case expressiveness, we compare their validation interfaces and cost distribution along three main axes.

First, neighborhood covenants attach constraints per output;
\ctvp states one transaction-level invariant over its inputs and referenced successor positions.
This avoids duplicating a shared transaction-wide rule;
output-attached constraints remain natural for per-output state.

Second, \ctvp supports nested modality, allowing one invariant to dereference a syntax-bounded number of earlier steps.
Neighborhood-covenant encodings can express similar dependencies by propagating selected fields through intermediate transactions or outputs.
The tradeoff is cost placement: nesting pays for modal dereferencing, while cloning pays for repeated state propagation.
\ifnum\fullFlag = 1
Appendix~\ref{app:usecase} gives a small rollback micro-example.
\else
Appendix~B.1 of the full version~\cite{eprint/TangCFGL26} gives a small rollback micro-example.
\fi

Third, \ctvp interprets future-dependent RI clauses using Kleene's three-valued semantics, distinguishing facts fixed at validation from future-dependent facts.
This makes one-step successor obligations explicit in the validation semantics and accounts for their validation work when they become checkable.
Neighborhood-covenant encodings can likewise enforce future constraints through an output-attached interface.

CTV-op and related covenant proposals constrain future spends through template commitments and support constructions such as congestion control, vaults, and payment pools~\cite{bip119,BIP345_OPVAULT,Optech_Topic_Vaults,Optech_Topic_CTV}.
CTV-op itself constrains an output's next spend, while \ctvp provides one transaction-level RI over inputs and indexed successor positions for multi-step constraints.
Extended UTXO (EUTXO) supports forward chaining and branching through data-carrying outputs and script contexts~\cite{fcw/ChakravartyCMMJ20,lsfa/VinogradovaS24}.
Both EUTXO and \ctvp support branching, but place continuation rules differently: EUTXO organizes continuation state per output, while \ctvp states one transaction-level RI over several successor positions.
Both neighborhood covenants and \ctvp can encode timelocks and incorporate externally determined values once the relevant transaction data are fixed.
We use timelocks and block-derived values in the case studies as stress tests for validation-time locality and explicit accounting of validation~work.

\itpara{Contract Languages and Verification Frameworks.}
BitML gives a calculus and verification framework for Bitcoin contracts, with emphasis on payment values and contract timing~\cite{ccs/BartolettiZ18}.
BitVM broadens Bitcoin expressivity by moving complex computation off-chain and enabling on-chain verification via fraud proofs~\cite{bitvm}.
Scilla~\cite{pacmpl/SergeyNJ0TH19} is an intermediate-level language designed for safer smart contracts, with automata-based semantics that support verification and formal reasoning.
Solidity~\cite{esop/HajduJ20} and Move~\cite{tool/move} target account-based execution models with shared mutable state, whereas Plutus~\cite{tool/plutus} targets Cardano's EUTXO model.
These systems provide programming, state-management, or off-chain-computation abstractions;
\ctvp instead fixes a transaction-level validation discipline for staged UTXO workflows.

\itpara{Ledger Semantics.}
Prior work on program logics for ledgers~\cite{fmbc/MelkonianSC25} models ledgers as sequences of state-transforming transactions and develops a Hoare-style concurrent separation logic for modular reasoning about ledger fragments.
\ctvp operates at the UTXO validation layer with recursive invariants, three-valued semantics for future-dependent clauses, and evaluability conditions supporting one-step successor obligations and parent-local re-checking.


\section{Operational Overview}
\label{sec:overview}
We first give the RI viewpoint, a running example, and a recursion-only example.
We use $[n] = \{1, \ldots, n\}$;
$|v|$ for sequence/set size (or integer absolute value);
$\emptyset/\epsilon$ for empty set/sequence;
and $\Omega/\s{U}$ for unknown field/Boolean values.

\itpara{Three-Valued Semantics for Deferred Obligations.}
Validation may precede realization of an RI-referenced successor.
Such future-dependent clauses use Kleene's K3 semantics~\cite{book/Kleene52}: $\s{U}$ records a deferred obligation until the successor appears.
Simply substituting $\bmtop$ would be non-compositional under negation (\eg, $\neg\oup_1 p$ would spuriously fail before realization).

\itpara{Main Idea.}
Each transaction carries one RI rather than per-output scripts.
$\inp_i\varphi$ evaluates $\varphi$ on input~$i$, while $\oup_i\varphi$ evaluates it on the transaction realizing successor position~$i$;
unmodalized fields belong to the current transaction.
Thus $\bigwedge_{j \in [\ell]}\oup_j p_j$ expresses per-position restrictions.
A successor position is realized only when a later transaction references it;
the realizing transaction and realized amount are then determined from the ledger.

\itpara{An Illustrative Example.}
Consider coins held by a junior, with a spending cap on each transaction.
A transaction $\tx_0$ can collect the coins as inputs and carry the recursive invariant
$
	F \coloneqq \text{\acmdashbox{$\oup_3 \bmbot\ \wedge\ \oup_1 \s{RI} \ \wedge\ \s{pk} = \oup_1 \s{pk} \ \wedge\ \oum_2 < 100$}}.
$

In the intended continuation, positions~1 and~2 carry retained and spent amounts $\oum_1, \oum_2$.
$\oup_3 \bmbot$ forbids position~3 and, by gap-free indexing, all later positions.
$\oup_1\s{RI}$ requires the realizing transaction to carry the same RI formula, thereby propagating it through position~1;
$\s{pk} = \oup_1 \s{pk}$ preserves its key;
and $\oum_2<100$ bounds position~2 if realized.

Before position~1 of $\tx_0$ is realized, $\oup_1\s{pk}$ evaluates to $\Omega$ and its Boolean clause to $\s{U}$.
When $\tx_1$ realizes that position, the RI carried by $\tx_0$ is re-checked and must not evaluate to $\bmbot$.
Re-checking reuses $\tx_0$'s committed inclusion context, so $\Lambda$ and $\s{Rnd}$ retain their values from its original inclusion.

For example, if $\tx_1$ changes the public key, the previously unknown clause $\s{pk} = \oup_1 \s{pk}$ becomes false when $\tx_1$ realizes position~1, so re-checking rejects it.

\itpara{Operational Reading.}
Inputs are ordered and realized successor positions are gap-free;
$\oup_i\s{RI}$ propagates the RI formula, and successor-dependent clauses are re-checked when realized.
The validator thus checks the candidate and inherited parent RIs.
To ensure parent-local re-checking, ordinary modal operands must be \emph{pure}: they contain no successor modality or ledger-derived $\s{out}_j/\oum_j$, so later realization of an ancestor output cannot change an accepted descendant (\S\ref{subsec:dsl}).

\itpara{A Compact Recursion Example.}
Besides constraining successor realizations, an RI can also constrain the current transaction and propagate to a successor.
Consider the invariant
$F \coloneqq (\oup_1 x_1 = x_1 + \inp_1 x_1) \wedge \oup_1 \s{RI}$.
\acmdashbox{$\oup_1 x_1 = x_1 + \inp_1 x_1$} is a one-step constraint across three consecutive positions in the chain.
Let $\tx_0,\tx_1,\ldots$ denote a continuation chain in which $\tx_{i + 1}$ realizes successor position~1 of $\tx_i$.
For $\tx_i$ in this chain, the unmodalized $x_1$ refers to $\tx_i.x_1$, 
$\inp_1x_1$ to $\tx_{i - 1}.x_1$, and
$\oup_1x_1$ to $\tx_{i + 1}.x_1$ when that successor is realized.
Thus, once $\tx_{i + 1}$ is proposed, the RI enforces
$\tx_{i + 1}.x_1 = \tx_i.x_1 + \tx_{i - 1}.x_1$.
\acmdashbox{$\oup_1 \s{RI}$} requires the transaction realizing successor position~1 to carry the same RI formula, so the constraint is recursively enforced along that continuation chain.
The base case is supplied externally by fixing the first two transactions in the chain, \eg, $\tx_0.x_1 = \tx_1.x_1 = 1$.
The RI then constrains all subsequent transactions in the chain, yielding the Fibonacci progression.
\section{Execution Model and Recursive-Invariant Language}
This section defines our execution model and RI interface.
\S\ref{subsec:ledgermodel} formalizes ledger extension and the transaction acceptance predicate $\accept(\Gamma, \tx)$, while \S\ref{subsec:dsl} defines the RI DSL and its \emph{evaluable fragment}.
\S\ref{subsec:algo} gives concrete transaction-verification and ledger-extension algorithms invoking \textsc{CheckRI}, the implementation-level counterpart of the denotational evaluator $\beval$ (\S\ref{subsec:semantics}).
\S\ref{subsec:turing} establishes Turing-completeness by simulating UTXO counter machines.
The semantics and soundness results in \S\ref{sec:dslri} apply to the evaluable fragment.


\subsection{Block Execution Model}
\label{subsec:ledgermodel}

We model the ledger as a canonical, append-only sequence of validated transactions, abstracting away consensus and reorgs because our results concern validation-time rules and ledger extension.
We fix an initial ledger $\Gamma_0$ containing the genesis state (\eg, coinbase-style value creation and pre-existing UTXOs).
Execution is modeled by a ledger-extension function $\s{Ext}(\Gamma, \tx)$ and a transaction acceptance predicate $\accept(\Gamma, \tx)$.
Inputs reference earlier transactions and identify output positions realized when the spending transaction is accepted.
Acceptance enforces parent existence, fresh gap-free output positions, non-negative amounts, balance consistency, and RI admissibility.
We now define the transaction structure and acceptance conditions.
Let $\mathbb{G}$ be the ambient group used for public keys, signatures, and group witnesses.

\begin{definition}[Transaction]
A transaction in our model is typed as
\[
	\tx \in \mathcal{M}_{\tx}
	=
	\mathbb{Z}
	\times \mathbb{G}
	\times (\mathbb{G})^+
	\times (\mathbb{Z})^+
	\times (\mathbb{G})^+
	\times ((\mathbb{Z})^+)^3
	\times bexpr
	\times (\mathbb{Z})^\ast
\]
and comprises:
\begin{enumerate}[nosep, topsep = 0pt]
\item
\emph{A unique identifier \underline{$(\tx.\id \in \mathbb{Z})$}.}

\item
\emph{A proposer public key \underline{$(\tx.\s{pk} \in \mathbb{G})$}.}
We assume $\tx.\s{pk} = \tx.y_1$.

\item
\emph{A tuple of signatures \underline{$(\tx.\sigma \in (\mathbb{G})^+)$}.}

\item
\emph{A tuple of integers \underline{$(\tx.x \in (\mathbb{Z})^+)$}.}
Elements in $\tx.x$ are denoted $x_1, x_2$, \etc

\item
\emph{A tuple of group members \underline{$(\tx.y \in (\mathbb{G})^+)$}.}
The transaction proposer's public key corresponds to the first element $y_1$.

\item
\emph{A tuple of inputs
\underline{$(\tx.\s{in} \in (\mathbb{Z})^+)$}
with associated amounts
\underline{$(\tx.\inm \in (\mathbb{Z})^+)$}.}
Each $\s{in}_i$ references the identifier of the $\ith$ input transaction.
The amount $\inm_i$ is the value assigned to the referenced output position by this input.

\item
\emph{A tuple of output indexers
\underline{$(\tx.\s{oi} \in (\mathbb{Z})^+)$}.}
For each input $\s{in}_i$, $\s{oi}_i = j$ means that this input realizes and spends the $\jth$ output position of the referenced transaction.
Output-spender identifiers and output amounts are not immutable components of $\tx$.
Instead, they are ledger-derived metadata determined by later spending transactions, as defined below.

\item
\emph{A recursive invariant ($\tx.\ri \in bexpr$) with a possibly empty sequence of continuation markers ($\tx.\lambda \in (\mathbb{Z})^{\ast}$).}
\end{enumerate}
\end{definition}

The model has no implicit per-output owner: $\s{Ver}(y_1, \sigma_1)$ authenticates the current transaction only, so any required successor ownership must be bound explicitly by the RI.

For a ledger $\Gamma$, we write $\s{lift}(\Gamma, \id)$ for the set of transaction identifiers appearing in $\Gamma$.
The notation extends analogously to other data fields when needed.

For denotational convenience, we write $\Gamma[\id']$ for the unique transaction $\tx' \in \Gamma$ such that $\tx'.\id = \id'$.
If no such transaction exists, we write $\Gamma[\id'] = \bot_{\tx}$.
This mapping is well-defined for a valid ledger because identifiers do not collide.

For $\tx \in \Gamma$, define the realized output-position set
\[
	\mathsf{OutPos}_{\Gamma}(\tx)
	\coloneqq
	\{j \in \mathbb{N}_{>0} \colon
		\exists\, \tx' \in \Gamma, k,
		\tx'.\s{in}_k = \tx.\id
		\land
		\tx'.\s{oi}_k = j\}.
\]
If a unique pair $(\tx',k)$ realizes position $j$, we define
$\tx.\s{out}^{\Gamma}_j \coloneqq \tx'.\id$
and
$\tx.\oum^{\Gamma}_j \coloneqq \tx'.\inm_k$.
Thus, $\s{out}_j$ and $\oum_j$ are \emph{ledger-derived output metadata}: semantically they are deterministic functions of the ledger, not immutable fields serialized in $\tx$.
An implementation may materialize these values in checker-maintained indexes or reconstruct them from ledger history.
If no unique realizing input exists, these quantities are undefined.
When the ledger is clear, we omit the superscript $\Gamma$.
The semantics in \S\ref{subsec:semantics} maps undefined ledger-derived fields to unknown.
Thus, an indexed position is a potential continuation slot: its spender/amount arise only upon realization, which successor clauses constrain but do not force.

A ledger $\Gamma \in \mathfrak{M}$ is a linear sequence of transactions, \ie,
$\mathfrak{M} = (\mathcal{M}_{\tx})^*$.
For a proposed transaction $\tx$, write
$\Gamma^+ \coloneqq \Gamma \cdot \tx$
for the candidate sequence obtained by appending $\tx$ to $\Gamma$.
All ledger-derived $\s{out}$ and $\oum$ fields in the acceptance test below are interpreted relative to the indicated ledger, in particular $\Gamma^+$ when checking the candidate extension.

For any $\Delta \in \mathfrak{M}$ and $q \in \Delta$, define the ledger-relative balance
$
	\mathsf{bal}_{\Delta}(q)
	\coloneqq
	\sum_{i = 1}^{|q.\s{in}|} q.\inm_i
	-
	\sum_{j \in \mathsf{OutPos}_{\Delta}(q)} q.\oum^{\Delta}_j
$.
%
We now define the condition \acmdashbox{$\accept(\Gamma, \tx)$} for accepting $\tx$.

Let
$P \coloneqq \{\tx.\s{in}_k \colon k \in [|\tx.\s{in}|]\}$
be the set of distinct parent identifiers referenced by $\tx$.

\begin{enumerate}[topsep = 0pt, label = {(o\arabic*)}]
\item \emph{Referenced output positions are well-formed, fresh, and gap-free.}
\begin{enumerate}[nosep, topsep = 0pt, label = {(\alph*)}]
\item \emph{Input tuples are aligned and referenced transactions exist:}
\[
	|\tx.\s{in}| = |\tx.\inm| = |\tx.\s{oi}|,
	\qquad
	\tx.\s{in}_1, \tx.\s{in}_2, \ldots \in \s{lift}(\Gamma, \id).
\]

\item
For each $p \in P$, let
$
	K_p \coloneqq \{k \in [|\tx.\s{in}|] \colon \tx.\s{in}_k = p\},
	J_p \coloneqq \{\tx.\s{oi}_k \colon k \in K_p\},
$
and let $m_p \coloneqq |\mathsf{OutPos}_{\Gamma}(\Gamma[p])|$.
We require
$
    |J_p| = |K_p|
$
and
$
    J_p = \{m_p + 1, \ldots, m_p + |K_p|\}
$.
Thus, no input duplicates an output position, no realized position is reused, and each parent is extended only by the next gap-free output position or positions.
\end{enumerate}

\item \emph{Non-negative amounts:}
for every input position $k$,
\acmdashbox{$\tx.\inm_k \ge 0$.}

\item \emph{Balance consistency:}
for the proposed transaction and every affected parent, \\
\acmdashbox{$\mathsf{bal}_{\Gamma^+}(q) \ge 0$ for every $q \in \{\tx\} \cup \{\Gamma[p] \colon p \in P\}$.}

\item \emph{Unique identifier:}
\acmdashbox{$\tx.\id \notin \s{lift}(\Gamma, \id)$.}

\item \emph{Recursive invariant admissibility:}
\acmdashbox{$(\Gamma^+, \tx) \models \tx.\ri$.}

\item \emph{Input verification:}
\label{cond:input_verification}
For every distinct parent identifier $p \in P$, we require
\acmdashbox{$(\Gamma^+, \Gamma[p]) \models \Gamma[p].\ri$.}
The admissibility relation $\models$ is defined in \S\ref{subsec:semantics}.
\end{enumerate}

If $\accept(\Gamma, \tx)$ holds, we define
$\s{Ext}(\Gamma, \tx) \coloneqq \Gamma^+$;
otherwise,
$\s{Ext}(\Gamma, \tx) \coloneqq \Gamma$.
No mutable transaction state beyond the ledger-derived metadata determined by the sequence is introduced.

Condition~\ref{cond:input_verification} is where deferred obligations become checkable.
A prior RI may evaluate a successor-dependent clause as $\s{U}$ when the prior transaction is first accepted.
Once the current transaction realizes the referenced successor position, the corresponding $\s{out}$ and $\oum$ metadata become defined in $\Gamma^+$, and the inherited clause is re-evaluated on the realized successor.
The candidate extension is accepted only if every affected parent's RI remains admissible.

Finally, we define valid ledgers as the background model for the DSL and its soundness statements.

\begin{definition}[Valid Ledger]
Fix an initial ledger $\Gamma_0$ satisfying the underlying issuance rules.
The valid ledgers are exactly those reachable from $\Gamma_0$ by successive accepted extensions:
$\Gamma_0$ is valid, and if $\Gamma$ is valid and $\accept(\Gamma, \tx)$ holds, then $\Gamma \cdot \tx$ is valid.
\end{definition}

By construction, every valid ledger has predecessor-before-successor order, collision-free identifiers, unique realization of each output position, gap-free realized output positions, non-negative realized amounts, and balance consistency.


\subsection{Recursive Invariants}
\label{subsec:dsl}
We formalize expressions, propositions, validation-time locality, and recursive invariants (RIs) as a domain-specific language for UTXO-like transaction scripts.
We first define arithmetic expressions $aexpr$ and group expressions $gexpr$.
For hash arguments, let $e$ range over either arithmetic or group expressions.
An arithmetic expression $a \in aexpr$ is defined by the grammar:
\[
\begin{aligned}
a \coloneqq {} & z \mid -a \mid a_1 + a_2 \mid a_1 \cdot a_2
\mid \s{H}(e_1, \ldots, e_\ell) \mid \lvert a \rvert \mid {}\\
& \inp_i a \mid \oup_i a \mid \Omega \mid \Lambda \mid \s{Rnd} \mid s_a,
\end{aligned}
\]
where $z \in \mathbb{Z}$, $s_a \in S_a$, and $\ell \geq 1$.
We use distinct typed hashes $H_a\colon\{0,1\}^*\to\mathbb{Z}$ and $H_g\colon\{0,1\}^*\to\mathbb{G}$, both written $\s{H}$ and applied to a canonical, type-tagged serialization;
context determines the output sort.
For readability, when hashing a group expression to an integer, we write
$\s{hashToInt}(g) \coloneqq \s{H}(g)$,
where $\s{H}(g)$ is interpreted in an arithmetic-expression context.

The operator $\inp_i$ evaluates an expression in the $\ith$ input transaction, while $\oup_i$ evaluates it in the transaction that realizes successor position~$i$.
The symbol $\Omega$ denotes an unknown evaluation at validation time.
The symbols $\Lambda$ and $\s{Rnd}$ denote the block height and an exogenous random value supplied by the transaction's canonical inclusion context;
that context is fixed upon inclusion and reused on later re-checking.
The symbol set $S_a$ is:
\[
S_a \coloneqq \{ \id, x_1, x_2, \ldots, \s{in}_1, \s{in}_2, \ldots, \s{out}_1, \s{out}_2, \ldots, \inm_1, \inm_2, \ldots, \oum_1, \oum_2, \ldots, \s{oi}_1, \s{oi}_2, \ldots \}.
\]
The sans-serif symbols (\eg, $\s{in}_i$, $\inm_i$, $\s{oi}_i$) are terminals distinct from the modal operators $\inp_i$ and $\oup_i$.
Among them, $\s{out}_i$ and $\oum_i$ denote the ledger-derived output metadata defined in \S\ref{subsec:ledgermodel};
the remaining transaction symbols denote immutable transaction data.
We further define $aexpr^\star$ analogously to $aexpr$ with an extended symbol set $S_a' \coloneqq S_a \cup \{\lambda_1, \lambda_2, \ldots\}$.
Each continuation marker $\lambda_i$ encodes a step-local parameter carried along recursive RI propagation.
The notation $a - b$ abbreviates $a + (-b)$.

A group expression $g \in gexpr$ is defined analogously:
\[
g \coloneqq
\theta
\mid g_1 \cdot g_2
\mid g^a
\mid \s{H}(e_1, \ldots, e_\ell)
\mid \inp_i g
\mid \oup_i g
\mid \Omega
\mid s_g
\]
where $\theta \in \mathbb{G}$, $s_g \in S_g$, $\ell \geq 1$, and
$
S_g \coloneqq \{ \s{pk}, \sigma_1, \sigma_2, \ldots, y_1, y_2, \ldots \}
$.
In a group-expression context, $\s{H}(e_1, \ldots, e_\ell)$ returns an element of $\mathbb{G}$, the ambient group for public keys, signatures, and group witnesses.
For simplicity, public keys and signatures are represented as single group elements.

Finally, Boolean expressions $b \in bexpr$ used in RIs are defined as:
\begin{align*}
b \coloneqq \s{U}
& \mid \bmbot
\mid \neg b
\mid b_1 \wedge b_2
\mid \s{Ver}(g_1, g_2)
\mid \s{SVer}(g_1, a, g_2)
\mid \inp_i b
\mid \oup_i b
\mid a_1 < a_2
\mid g_1 = g_2
\mid \s{HasIn}(i)
\\
& \mid \inp_i \s{RI}(a_1, a_2, \ldots)
\mid \oup_i \s{RI}(a_1, a_2, \ldots),
\end{align*}
where
$a, a_1, a_2, \ldots \in aexpr^\star$ and
$g_1, g_2 \in gexpr$.
The atom $g_1 = g_2$ is primitive equality over group expressions.
The predicate $\s{SVer}(g_1, a, g_2)$ represents signature verification, where $g_1$ is the public key, $a$ the message, and $g_2$ the signature.
$\s{Ver}(g_1, g_2)$ instead uses the hash of the current transaction excluding signatures as the message.
The atom $\inp_i/\oup_i \s{RI}(a_1, a_2, \ldots)$ asserts that the referenced transaction carries the same RI formula, with $\inp_i/\oup_i \lambda_j = a_j$ for each continuation argument.
The atom $\s{HasIn}(i)$ tests whether the current transaction has an $\ith$ input position;
unlike $\inp_i$, it does not dereference the predecessor transaction.

\itpara{RI Formulas versus $\s{RI}$ Matching Predicates.}
Each transaction carries one RI formula as data, written $\tx.\ri \in bexpr$.
In examples, $F \coloneqq \cdots$ names the RI formula stored in the current transaction.
In the DSL, $\s{RI}(a_1, \ldots, a_m)$, for $m \geq 0$, is instead a Boolean matching predicate: it checks that the referenced transaction carries the same RI formula and has exactly $m$ continuation markers, with marker $\lambda_j$ matching $a_j$ for every $j \in [m]$.
Accordingly, we use $\inp_i/\oup_i \s{RI}(\cdot)$ as RI-formula matching constraints.
When such a successor-matching constraint holds and the realizing transaction is accepted, that transaction carries the same RI formula and is checked under it;
we call this \emph{RI-formula propagation}.
With $m = 0$, $\s{RI}$ abbreviates $\s{RI}()$.

\itpara{Continuation Markers and Parameters.}
Each transaction stores continuation markers $\lambda_i$ associated with its RI formula;
their values may serve as step-local continuation parameters or state.
Recursive continuation uses matching predicates
$\oup_i \s{RI}(a_1, a_2, \ldots)$ whose arguments specify the required continuation-marker values.
For example, the formula
$(\lambda_1 < 100 \to \oup_1 \s{RI}(\lambda_1 + 1)) \wedge \oup_2 \bot$
requires, while $\lambda_1 < 100$, any transaction realizing successor position~$1$ to carry the same RI formula with continuation-marker value $\lambda_1 + 1$.
The continuation state is thus carried in the transaction's continuation markers associated with the RI formula rather than
encoded indirectly through ordinary transaction fields.

For Boolean expressions $\varphi, \psi \in bexpr$, we use the syntactic abbreviations $\varphi \to \psi \coloneqq \neg \varphi \vee \psi$ and $\varphi \vee \psi \coloneqq \neg(\neg \varphi \wedge \neg \psi)$.
Moreover,
$\bmtop \coloneqq \neg \bmbot$.
For arithmetic expressions $c, d \in aexpr^\star$, we write
$c = d \coloneqq \neg(c < d \vee d < c)$,
$c \ne d \coloneqq \neg(c = d)$,
$c \le d \coloneqq \neg(d < c)$, and
$c > d \coloneqq d < c$
(similarly for~$\ge$).
Equality of group expressions is the primitive atom $g_1 = g_2$ defined above.
Logic is interpreted under Kleene's three-valued semantics to accommodate future-dependent predicates during validation.
We now impose a syntactic restriction that prevents modal operands from depending on future output information.

\begin{definition}[Pure Expressions and Propositions]
\label{def:pure}
An arithmetic expression $A \in aexpr^\star$, a group expression $G \in gexpr$, or a proposition $t \in bexpr$ is \emph{pure} if it contains neither an output modality $\oup_i$ nor a ledger-derived output symbol $\s{out}_i$ or $\oum_i$, for any $i \in \mathbb{N}_{>0}$.
For propositions, this includes ordinary subformulas $\oup_i t$ and RI-matching atoms $\oup_i\s{RI}(\cdot)$;
the restriction applies recursively within nested expressions, hash arguments, and verification predicates.
\end{definition}

Under a successor modality, purity rules out output-of-output reasoning.
Under a predecessor modality, it prevents an already accepted transaction from observing output information of an ancestor that may be determined by a later ledger extension.


\begin{definition}[Evaluable Arithmetic and Group Expressions]
Validation-time evaluability for $aexpr^\star$ and $gexpr$ is defined inductively.
\begin{itemize}
\item A constant in $\mathbb{Z}$ (or $\mathbb{G}$), the symbol $\Omega$, any symbol in $S_a'$ (or $S_g$), the block height $\Lambda$, or the random value $\s{Rnd}$ is evaluable.

\item $-a$, $a + b$, $a \cdot b$, and $\lvert a \rvert$ are evaluable if $a$ and $b$ (if applicable) are evaluable.
The same holds for $gexpr$ for the applicable non-modal, non-hash operators.

\item $\s{H}(e_1, \ldots, e_\ell)$ is evaluable if each argument $e_j$ is evaluable according to its arithmetic or group sort.

\item
$\inp_i A$ and $\oup_i A$ are evaluable if $A$ is both pure and an evaluable arithmetic expression.
Likewise, $\inp_i G$ and $\oup_i G$ are evaluable if $G$ is both pure and an evaluable group expression.
\end{itemize}
\end{definition}

The evaluable fragment excludes output-of-output reasoning and future-sensitive ancestor-output dependencies, including through the ledger-derived symbols $\s{out}_j$ and $\oum_j$.
Based on evaluable arithmetic and group expressions, we define it below.

\begin{definition}[Validation-Time-Evaluable Recursive Invariant]
\label{def:validRI}
We define the validation-time evaluability of a recursive invariant formula
$F \in bexpr$ inductively.
\begin{itemize}
\item \emph{$F \coloneqq \s{U}$ or $F \coloneqq \bmbot$:}
$F$ is evaluable.

\item \emph{$F \coloneqq \neg F_L$:}
$F$ is evaluable if $F_L \in bexpr$ is evaluable.

\item \emph{$F \coloneqq \inp_i F_L$ or $F \coloneqq \oup_i F_L$:}
$F$ is evaluable if $F_L \in bexpr$ is both pure and evaluable.

\item \emph{$F \coloneqq F_L \wedge F_R$:}
$F$ is evaluable if $F_L, F_R \in bexpr$ are evaluable.

\item \emph{$F \coloneqq A_L < A_R$:}
$F$ is evaluable if $A_L, A_R \in aexpr^\star$
and $A_L$ and $A_R$ are evaluable arithmetic expressions.

\item \emph{$F \coloneqq G_L = G_R$:}
$F$ is evaluable if $G_L, G_R \in gexpr$
and $G_L$ and $G_R$ are evaluable group expressions.

\item \emph{$F \coloneqq \s{HasIn}(i)$:}
$F$ is evaluable.

\item \emph{$F \coloneqq \s{Ver}(G_L, G_R)$:}
$F$ is evaluable if $G_L, G_R \in gexpr$
and $G_L$ and $G_R$ are evaluable group expressions.

\item \emph{$F \coloneqq \s{SVer}(G_L, A_M, G_R)$:}
$F$ is evaluable if $G_L, G_R \in gexpr$ and $A_M \in aexpr^\star$,
and $G_L$ and $G_R$ are evaluable group expressions
and $A_M$ is an evaluable arithmetic expression.

\item \emph{$F \coloneqq \inp_i \s{RI}(A_1, A_2, \ldots)$:}
$F$ is evaluable if each $A_j \in aexpr^\star$
is an evaluable arithmetic expression.

\item \emph{$F \coloneqq \oup_i \s{RI}(A_1, A_2, \ldots)$:}
$F$ is evaluable if each $A_j \in aexpr^\star$
is an evaluable arithmetic expression.
\end{itemize}
\end{definition}

We write $\s{ValidRI}(F)$ when an RI formula $F$ is evaluable according to Definition~\ref{def:validRI}, matching the checker predicate.

\paragraph{Scope of parent-local re-checking.}
The validation-time locality claims assume RI-bearing transactions carry evaluable RI formulas and that their evaluations are field-well-formed: initial ones satisfy these conditions, and the checker checks each new RI for static evaluability and rejects out-of-range immutable fields.
Predecessor-side purity is essential;
otherwise, for example, $\inp_1 \oum_2 < c$ could change when an ancestor's second output is later realized even though the transaction carrying the clause is not a parent of that extension.

\begin{lemma}[Parent-local stability]
\label{lem:parent-local-stability}
Let $\Gamma^+ = \Gamma \cdot \tx$ be an accepted extension of a valid ledger and
$P = \{\tx.\s{in}_k \colon k \in [|\tx.\s{in}|]\}$.
If $q \in \Gamma$ carries an evaluable RI formula and $q.\id \notin P$, then
\[
\s{bEval}_{\Gamma^+, q}(q.\ri)=\s{bEval}_{\Gamma, q}(q.\ri).
\]
Hence only direct parents need RI re-checking.
\end{lemma}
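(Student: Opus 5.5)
The plan is a structural induction on the evaluable formula $q.\ri$ (Definition~\ref{def:validRI}), proving a stronger invariant about the evaluation contexts that arise. The ledger enters $\s{bEval}_{\Gamma,\cdot}$ only through lookups $\Gamma[\id']$ and through the ledger-derived metadata $\s{out}^{\Gamma}_j$, $\oum^{\Gamma}_j$, and $\mathsf{OutPos}_\Gamma$. Appending $\tx$ never changes $\Gamma[\id']$ for $\id'\in\s{lift}(\Gamma,\id)$, since identifiers are unique by (o4). It only changes metadata of transactions in $P$: for $r\in\Gamma$ with $r.\id\notin P$, no input of $\tx$ references $r$, so $\mathsf{OutPos}_{\Gamma^+}(r)=\mathsf{OutPos}_\Gamma(r)$ and every $\s{out}_j,\oum_j$ of $r$ keeps its value (defined or undefined). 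Immutable fields, $\Lambda$, $\s{Rnd}$, $\s{ri}$ and $\lambda$ of any $r\in\Gamma$ are unchanged, since the inclusion context is fixed.

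Next I would prove, by mutual induction over $aexpr^\star$, $gexpr$ and $bexpr$, two claims. \textbf{(A)} For any pure evaluable expression or proposition $e$ and any $r\in\Gamma$, $\mathsf{Eval}_{\Gamma^+,r}(e)=\mathsf{Eval}_{\Gamma,r}(e)$. \textbf{(B)} For any evaluable $e$ and any $r\in\Gamma$ with $r.\id\notin P$, the same equality holds. For (A), purity removes every $\oup_i$ and every $\s{out}_j,\oum_j$. The remaining symbols read immutable data of $r$. For $\inp_i e'$, the predecessor $\Gamma[r.\s{in}_i]$ lies in $\Gamma$ and is found identically in $\Gamma^+$, and $e'$ is pure, so the induction hypothesis for (A) applies at that predecessor. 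This is why (A) must quantify over all $r\in\Gamma$, including parents in $P$. Atoms $\s{HasIn}$, $\s{Ver}$, $\s{SVer}$, hashes and $\inp_i\s{RI}(\dots)$ are then congruences.

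For (B), the new cases are $\s{out}_j$ and $\oum_j$ of $r$ itself, which are unchanged by the first paragraph, and $\oup_i e'$ or $\oup_i\s{RI}(\dots)$. The successor of $r$ at position $i$ is determined by $\s{out}^{\Gamma}_i(r)$. If it is undefined in $\Gamma$, it stays undefined in $\Gamma^+$ because $r\notin P$, so both sides yield $\Omega$ or $\s{U}$. If it is defined, it names the same transaction in $\Gamma$, and $e'$ is pure, so (A) gives equality. Arguments $A_j$ of $\s{RI}(\dots)$ are evaluated at $r$ and handled by (B) itself. The Kleene connectives $\neg$ and $\wedge$ are compositional, so equality of subterm values suffices. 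Instantiating (B) at $r=q$ and $e=q.\ri$ yields the lemma. The ``hence'' part follows because the RI roots whose admissibility could change are then exactly those of $P$, which (o6) re-checks, while $\tx$ itself is checked by (o5).

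The main obstacle I expect is the $\oup_i$ case together with matching the exact denotational clauses of $\s{bEval}$ in \S\ref{subsec:semantics}. One must confirm that the semantics resolves successors only via $\s{out}^{\Gamma}$ uniqueness. One must also confirm that undefinedness maps uniformly to $\Omega/\s{U}$, so that the defined/undefined status, and not merely the value, is preserved. The other delicate point is keeping (A) separate from (B). Pure operands may be evaluated at transactions in $P$, where metadata does change. This is harmless only because purity forbids reading that metadata, so I would state (A) for arbitrary $r\in\Gamma$ first and derive (B) from it.
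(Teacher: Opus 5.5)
Your proposal is correct and follows the same route as the paper's proof. Only direct parents gain new $\s{out}/\oum$ metadata, while predecessor relations, RI formulas, continuation markers and inclusion contexts stay fixed; a structural induction then uses purity so that modal operands cannot observe the changed metadata, with RI-matching arguments evaluated at the current transaction. Your split into claim (A), for pure operands at arbitrary $r\in\Gamma$ including parents, and claim (B), for evaluable formulas at non-parents, just spells out the induction the paper leaves as a sketch.
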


\begin{proof}
Only direct parents gain new $\s{out}/\oum$ metadata;
existing predecessor relations, RI formulas, continuation markers, and inclusion contexts are immutable.
A structural induction then uses purity to show that ordinary modal operands cannot observe the changed metadata;
RI-matching arguments are stable by the same induction.
\end{proof}

The two exemplary RIs below illustrate the expressivity of \ctvp and are studied further in \S\ref{subsec:usecase}.

\itpara{Emulating Common UTXO Constraint Patterns.}
The DSL can encode common constraint patterns used by classical UTXO scripts and CTV-op.
Predicates $p_i$ (pure propositions) over the transaction realizing successor position~$i$ can be written as
$F \coloneqq \text{\acmdashbox{$\bigwedge_i \oup_i p_i$}}$.
Chain-shaped recursive invariants
(see the running example in \S\ref{sec:overview} and the collective payments and taxed transfer case studies in \S\ref{subsec:usecase})
naturally emulate EUTXO-style forward chaining.

\itpara{Multi-Step Flow.}
The DSL can also express distinct execution rules across steps.
Consider
$F \coloneqq \acmdashbox{$\oup_3 \bmbot\ \wedge\ \bigl( \Lambda \le 1000 \ \to\ \oup_1 \s{RI}(\lambda_1, \lambda_2, \ldots) \bigr) \ \wedge\ \s{pk} = \oup_1 \s{pk} \ \wedge\ \oum_2 < 100$}$.
While $\Lambda \le 1000$, any transaction realizing successor position~$1$ must carry the same RI formula.
The constraint $\oup_3 \bmbot$, together with gap-free output indexing, limits each step to at most two realized output positions.
The condition $\s{pk} = \oup_1 \s{pk}$ preserves the successor key, while $\oum_2 < 100$ bounds the second realized output amount if that position is realized.
Once $\Lambda > 1000$, RI propagation is no longer needed.
This supports workflows with staged rules, \eg, gaming, layer-two asset transfers, and temporal or cooldown-constrained flows (\S\ref{subsec:usecase}).

\subsection{Concrete Algorithms}
\label{subsec:algo}

Based on the block execution model (\S\ref{subsec:ledgermodel}) and the DSL for RIs (\S\ref{subsec:dsl}), we present transaction verification and ledger extension in Algorithm~\ref{algo:tx} and Algorithm~\ref{algo:ledger}.

\itpara{Data Structure.}
The internal transaction representation ($t \colon \textsc{Tx}$) augments immutable transaction data $\tx \in \mathcal{M}_{\tx}$ with a materialized representation of the ledger-derived output metadata.
Specifically, $t.\textrm{out}_j$ records the transaction spending output position $j$, while $t.\textrm{om}_j$ records the corresponding input amount;
neither is serialized in $\tx$.
To avoid duplicating RI text, the checker maintains a global repository $\mathsf{ISet}$ of RI formulas.
An incoming transaction carries its RI formula and continuation markers, and its internal representation receives a reference to the corresponding formula in the candidate repository.

The formula repository and fresh-label counter are checker-side state committed atomically with the ledger.
Ledger extension uses candidate versions for newly interned formulas and allocated RI-instance labels, so rejection leaves the committed checker state unchanged.

The concrete checker additionally associates each transaction with an RI-instance label.
This label is auxiliary checker-side metadata: it is not a component of the transaction model $\mathcal{M}_{\tx}$ in \S\ref{subsec:ledgermodel} and is
intentionally abstracted away by the denotational semantics of \S\ref{sec:dslri}.
Its purpose is to separate concurrent executions of the same RI formula in the concrete checker.

Specifically, the checker strengthens RI propagation with an instance-consistency condition.
A transaction that continues an existing RI instance reuses the label induced by that instance;
a transaction for which no such label is induced receives a fresh label.
If its inputs induce conflicting labels for the same RI formula, the checker rejects the transaction.

For an internal transaction $u$ with $u.\textrm{ri} = (q, c)$, write $\s{formula}(u) \coloneqq \star(q)$ and $\s{label}(u) \coloneqq c$.
For an internal transaction $t$ and RI formula $F$, define
\[
\s{IndLbl}_{F}(t)
\coloneqq
\left\{
\s{label}(\textrm{Ledger}[t.\textrm{in}_i])
\;\middle|\;
\begin{array}{l}
i \in [|t.\textrm{in}|],\\
\textrm{Ledger}[t.\textrm{in}_i] \neq \textrm{base},\\
\s{formula}(\textrm{Ledger}[t.\textrm{in}_i]) = F
\end{array}
\right\}.
\]
Under this conservative separation rule, an input whose parent carries the same RI formula induces that parent's instance label.
If $\s{IndLbl}_{F}(t) = \emptyset$, the transaction starts a fresh checker-side instance;
if it is a singleton, the transaction inherits that label;
otherwise, it is rejected.


\begin{algorithm}[t]
\caption{Transaction Verification}
\label{algo:tx}
\begin{algorithmic}[1]
\State \textbf{structure} \textsc{Tx}
\Comment{Augments $\mathcal{M}_{\tx}$ with materialized ledger-derived metadata}
\Statex \hspace{\algorithmicindent}$\textrm{id} \colon$ Integer
\Statex \hspace{\algorithmicindent}$\textrm{pk} \colon$ Group Element
\Statex \hspace{\algorithmicindent}$\textrm{x}, \textrm{in}, \textrm{im}, \textrm{om}, \textrm{oi}, \lambda \colon$ Vector of Integers
\Statex \hspace{\algorithmicindent}$\sigma, \textrm{y} \colon$ Vector of Group Elements
\Statex \hspace{\algorithmicindent}$\textrm{out} \colon$ Vector of (Integer or a symbol $\Omega$)
\Comment{$\textrm{om}$ and $\textrm{out}$ materialize ledger-derived metadata}
\Statex \hspace{\algorithmicindent}$\textrm{ri} \colon$ Pair of $\star bexpr$ and Integer
\Comment{RI-formula pointer and checker-side instance label}
\State \textbf{end structure}
\State $\textrm{Ledger} \colon$ Map from Integer to \textsc{Tx} \Comment{Global variables}
\State $\textrm{ISet} \colon$ Set of $bexpr$
\State $\textrm{CtRI} \colon$ Integer

\Procedure{VerifyTx}{$\Gamma \colon \textrm{Ledger},\ t \colon \textsc{Tx}$}
\State $m_i \gets |t.\textrm{in}|$, $m_o \gets |t.\textrm{om}|$

\IfReturn{$\exists i \in [m_i] \colon t.\textrm{im}_i < 0
	\;\lor\;
	\exists i \in [m_o] \colon t.\textrm{om}_i < 0$}
	{$\s{False}$}
\Comment{Non-negative}

\If{$\Gamma[t.\textrm{in}_1] = \textrm{base}$}
	\IfReturn{$\neg \s{coinBaseEval}(t)$}
		{$\s{False}$}
	\State \Return $\s{CheckRI}(\Gamma, t, \s{formula}(t))$
\EndIf
\Comment{Coinbase issuance and RI admissibility}

\IfReturn{$
	\sum_{i \in [m_i]} t.\textrm{im}_i
	<
	\sum_{i \in [m_o]} t.\textrm{om}_i
$}
	{$\s{False}$}
\Comment{Checks balance consistency}

\State \Return $\s{CheckRI}(\Gamma, t, \s{formula}(t))$
\Comment{Checks RI semantics and checker-side label consistency}
\EndProcedure
\end{algorithmic}
\end{algorithm}

\itpara{Transaction Verification.}
Algorithm~\ref{algo:tx} checks non-negative amounts and balance consistency, \ie,
that the total input value is at least the total value of the currently realized outputs.
It then evaluates the RI via \textsc{CheckRI}, which is the implementation-level realization of the semantic evaluator $\s{bEval}$ in the current ledger context and follows the admissibility relation $\models$ defined in \S\ref{subsec:semantics}.
It rejects out-of-range immutable fields.
For RI-formula and continuation-marker matching, \textsc{CheckRI} follows the denotational matcher of \S\ref{subsec:semantics} and additionally enforces the checker-side instance-label consistency condition described above.
We omit the unfolding of \textsc{CheckRI} and use Algorithm~\ref{algo:tx} as a callable subroutine in ledger extension.

\itpara{Ledger Extension.}
Algorithm~\ref{algo:ledger} parses an incoming transaction into $t$, initializes $out/om$ metadata as empty, and checks format, identifier uniqueness, parent existence, and fresh gap-free realized output positions.
The distinguished $\textrm{base}$ reference is existence-checked but excluded from ordinary parent realization and re-checking.



\begin{algorithm}
\caption{Ledger Extension}
\label{algo:ledger}
\begin{algorithmic}[1]
\Procedure{LedgerExt}{$\tx \colon \mathcal{M}_{\tx}$}
	\State Initialize the immutable fields of $t \colon \textsc{Tx}$ from $\tx$;
	$t.\textrm{out}, t.\textrm{om} \gets \epsilon$
	\State $m \gets |t.\textrm{in}|$
	\IfReturn{$m \ne |t.\textrm{im}| \lor m \ne |t.\textrm{oi}|$}{}
	\Comment{Aligned input tuples}
	\IfReturn{$t.\textrm{id} \in \s{lift}(\textrm{Ledger}, \textrm{id})
	\lor t.\textrm{pk} \ne t.\textrm{y}_1$}{}
	\Comment{Format and (o4)}
	\IfReturn{$\neg \s{ValidRI}(\tx.\ri)$}{}
	\Comment{Static RI-evaluability check}

	\State $P_{\mathrm{all}} \gets \{t.\textrm{in}_i \colon i \in [m]\}$
	\IfReturn{$\exists p \in P_{\mathrm{all}} \colon p \notin \s{lift}(\textrm{Ledger}, \textrm{id})$}{}
	\State $P \gets \{p \in P_{\mathrm{all}} \colon \textrm{Ledger}[p] \ne \textrm{base}\}$
	\For{$p \in P$}
		\State $I_p \gets \{i \in [m] \colon t.\textrm{in}_i = p\}$,
		$\ell_p \gets |\textrm{Ledger}[p].\textrm{om}|$,
		$J_p \gets \{t.\textrm{oi}_i \colon i \in I_p\}$

		\IfReturn{$|J_p| \ne |I_p|
		\lor J_p \ne \{\ell_p + 1, \ldots, \ell_p + |I_p|\}$}{}
		\Comment{Fresh, gap-free outputs}
	\EndFor

	\State $\textrm{CtRI}' \gets \textrm{CtRI}$;
	$\textrm{ISet}' \gets \textrm{ISet}$
	\Comment{Candidate formula repository \textrm{ISet}$'$}
	\State Let $q$ point to $\tx.\ri$ in $\textrm{ISet}'$, inserting $\tx.\ri$ if absent

	\State $C \gets \s{IndLbl}_{\tx.\ri}(t)$
	\State \algorithmicif\ $C = \emptyset$
	\algorithmicthen\ $c \gets \textrm{CtRI}' + 1$;
	$\textrm{CtRI}' \gets c$
	\State \algorithmicelse\ \algorithmicif\ $|C| = 1$
	\algorithmicthen\ $c \gets$ the unique element of $C$
	\State \algorithmicelse\ \algorithmicreturn
	\State \algorithmicend\ \algorithmicif
	\State $t.\textrm{ri} \gets (q, c)$;
%
	$\textrm{Ledger}' \gets \textrm{Ledger}$
	\Comment{Candidate ledger state}
	\ForDo{$i \in [m] \colon \textrm{Ledger}[t.\textrm{in}_i] \ne \textrm{base}$}
		{$\textrm{Ledger}'[t.\textrm{in}_i].(\textrm{out}_{t.\textrm{oi}_i},
		\textrm{om}_{t.\textrm{oi}_i})
		\gets (t.\textrm{id}, t.\textrm{im}_i)$}
	\State append $t$ to $\textrm{Ledger}'$

	\IfReturn{$\neg \textsc{VerifyTx}(\textrm{Ledger}', t)
	\lor \exists p \in P \colon
	\neg \textsc{VerifyTx}(\textrm{Ledger}', \textrm{Ledger}'[p])$}{}
	\State $(\textrm{Ledger}, \textrm{ISet}, \textrm{CtRI}) \gets (\textrm{Ledger}', \textrm{ISet}', \textrm{CtRI}')$
	\Comment{Atomic commit}
\EndProcedure
\end{algorithmic}
\end{algorithm}

The procedure creates a candidate formula repository $\textrm{ISet}'$, interns $t$'s RI formula if needed, assigns its RI reference, and computes $C \gets \s{IndLbl}_{\tx.\ri}(t)$.
If $C = \emptyset$, $t$ starts a fresh checker-side RI instance;
if $|C| = 1$, it inherits the unique induced label;
otherwise, it is rejected.

The candidate ledger then associates each newly realized parent output position with $t$ and the corresponding input amount.
The resulting $out$ and $om$ entries materialize ledger-derived metadata in the checker state;
they are not immutable parent data.
The spender identifier and realized amount arise only when a later transaction realizes a referenced successor position, which the parent's RI constrains.
The checker evaluates RI and balance conditions on the candidate state, verifies $t$, and re-checks each affected parent.
By Lemma~\ref{lem:parent-local-stability}, no wider re-validation is needed.
On failure, the candidate state is discarded;
otherwise, the checker state is atomically committed.

\itpara{Relationship to the formal model.}
For an ordinary transaction, successful completion of Algorithm~\ref{algo:ledger} implies the corresponding acceptance conditions of \S\ref{subsec:ledgermodel}: the algorithm checks input well-formedness, identifier uniqueness, non-negative amounts, balance consistency, the candidate RI, and the RIs of all affected parents on the candidate ledger state.
The checker additionally enforces the evaluable-fragment restriction of Definition~\ref{def:validRI} and the checker-side RI-instance separation condition described above.
Consequently, checker acceptance refines the formal model;
the converse need not hold because these checker-side conditions can reject a transaction admitted by the formal model.
Coinbase issuance is handled separately by the host-ledger-specific rule described below.

\itpara{On Coinbase Transactions.}
A coinbase transaction mints coins under the issuance rule of the underlying blockchain.
To preserve the transaction interface, the initial ledger contains a distinguished base sentinel whose reference identifies coinbase transactions.
A transaction is treated as coinbase when its first input references this sentinel;
Algorithm~\ref{algo:ledger} therefore neither mutates nor re-checks the base entry.
Algorithm~\ref{algo:tx} uses host-ledger-specific \textsc{coinBaseEval} in place of the ordinary balance check, then checks RI admissibility;
\textsc{coinBaseEval} is responsible for the host ledger's complete issuance rule, including the permitted coinbase input shape.

\itpara{On Concurrency and Failure Handling.}
We model the ledger as a serialized, append-only transaction log and interpret RIs as safety properties over its canonical history.
If a transaction depends on a predecessor that has not been included in the ledger, it is rejected by the parent-existence check.
Concurrent proposals may exist at the network or mempool level, but consensus commits at most one conflicting realization, and the model reasons only about the resulting canonical sequence.

Ledger extension is modeled as an atomic transition from the committed ledger to a candidate ledger state.
We retain a serialized canonical history and a single-realization discipline for output positions, while \ctvp augments validation with ledger-derived output metadata.
Algorithm~\ref{algo:ledger} validates the candidate state and commits it only if all checks succeed;
otherwise, the committed ledger is unchanged.

Handling liveness and coordination, such as retries, rollbacks, timeouts, and refunds, is orthogonal to RI safety and is typically enforced by the coordination layer in deployed protocols.
When explicit on-chain failure handling is needed, it can be encoded by adding RI cases, \eg, keyed on height, time, or output tags, at the cost of longer specifications and additional branching.
These coordination costs remain;
the associated RI safety checks retain \ctvp's validation-time locality.

\itpara{Integration Requirements for Bitcoin-like Ledgers.}
Applying our model to a Bitcoin-like system requires consensus-visible changes to transaction encoding and validation.
Transactions must carry an RI formula and continuation markers, validators must invoke the RI interpreter, and nodes must maintain or reconstruct ledger-derived output metadata.
These metadata are deterministic ledger functions, not application-visible mutable contract state.

Deployment further requires RI-aware parsing, relay, mempool policy, wallets, and tooling.
Clauses using $\Lambda$ or $\s{Rnd}$ take their values from the candidate inclusion context and reuse the committed values on re-checking.
Thus, \ctvp changes consensus-critical encoding and validity rules, but not transaction ordering or finality.

\subsection{Expressiveness: Turing-Completeness}
\label{subsec:turing}
Turing-completeness here concerns unbounded mathematical integers and ledger extensions, not the finite $256$-bit benchmark workloads of \S\ref{sec:evaluation}.
We simulate a UTXO counter machine~\cite{mst/FischerMR68,csf/BartolettiLZ21}, following neighborhood covenants~\cite{csf/BartolettiLZ21};
all RI formulas used satisfy Definition~\ref{def:validRI}.

A machine consists of $n$ counters and a finite instruction sequence $s = (s_1, \ldots, s_{|s|})$.
A machine state is $(v_1, \ldots, v_n, p)$, where each $v_i \in \mathbb{N}$ is a counter value and $p$ is the program counter.
Instructions are of the form $\s{inc}\ i$, $\s{dec}\ i$, $\s{zero}\ i$, $\s{ifGoto}\ i j$, and $\s{halt}$.
For the operational semantics used here, $\s{inc}\ i$ increments $v_i$ and advances to the next instruction;
$\s{dec}\ i$ decrements $v_i$ and advances only when $v_i > 0$ (and has no successor when $v_i = 0$);
$\s{zero}\ i$ sets $v_i$ to $0$ and advances;
$\s{ifGoto}\ i j$ jumps to $j$ when $v_i \ne 0$ and otherwise advances;
and $\s{halt}$ terminates.
We assume that every jump target $j$ is in $[|s|]$ and that every non-halting instruction either has a valid successor at position $k+1$ or the program is padded so that $k + 1 \in [|s|]$.

\begin{theorem}
\ctvp simulates any UTXO counter machine.
Since UTXO counter machines are Turing-complete, \ctvp is Turing-complete.
\end{theorem}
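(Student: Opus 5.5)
We simulate a machine $(s, n)$ by a single continuation chain $\tx_0, \tx_1, \ldots$ in which $\tx_{t+1}$ realizes successor position~$1$ of $\tx_t$. The machine state after $t$ steps is stored in the continuation markers of $\tx_t$: we take $\lambda_1, \ldots, \lambda_n$ to hold the counters $v_1, \ldots, v_n$ and $\lambda_{n+1}$ to hold the program counter $p$. Every transaction in the chain carries one fixed RI formula $F_s$, which depends only on the program~$s$. The initial transaction $\tx_0$ is created with markers $(v^0_1, \ldots, v^0_n, 1)$ encoding the initial configuration. The correspondence to prove is the following: for every valid ledger containing the chain up to $\tx_{t+1}$, the markers of $\tx_{t+1}$ encode the unique successor configuration of the configuration encoded by $\tx_t$. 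Conversely, whenever the machine has a successor, some accepted $\tx_{t+1}$ realizes it.

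The formula is a conjunction over program lines,
\[
F_s \coloneqq \oup_2 \bmbot \wedge \textstyle\bigwedge_{k \in [|s|]} \bigl(\lambda_{n+1} = k \to C_k\bigr).
\]
Each clause $C_k$ is chosen according to the instruction $s_k$.
\begin{itemize}
\item For $\s{inc}\ i$: $C_k \coloneqq \oup_1\s{RI}(\lambda_1, \ldots, \lambda_i + 1, \ldots, \lambda_n, k+1)$.
\item For $\s{zero}\ i$: the same, but with argument $0$ in place of $\lambda_i + 1$.
\item For $\s{dec}\ i$: $C_k \coloneqq (0 < \lambda_i \to \oup_1\s{RI}(\ldots, \lambda_i - 1, \ldots, k+1)) \wedge (\lambda_i = 0 \to \oup_1\bmbot)$. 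The second conjunct makes the missing successor explicit.
\item For $\s{ifGoto}\ i\,j$: $C_k \coloneqq (\lambda_i \ne 0 \to \oup_1\s{RI}(\vec\lambda, j)) \wedge (\lambda_i = 0 \to \oup_1\s{RI}(\vec\lambda, k+1))$.
\item For $\s{halt}$: $C_k \coloneqq \oup_1 \bmbot$.
\end{itemize}

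The first check is that $F_s$ satisfies Definition~\ref{def:validRI}. The arguments of $\oup_1\s{RI}(\cdot)$ are evaluable expressions in $aexpr^\star$ over the $\lambda_j$. The operand $\bmbot$ of $\oup_1$ and $\oup_2$ is pure. Equalities and comparisons between markers and constants are evaluable atoms. No $\s{out}$ or $\oum$ symbols occur anywhere, so parent-local stability (Lemma~\ref{lem:parent-local-stability}) applies.

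Second, we argue by induction on $t$ using the acceptance conditions (o1)--(o6). When $\tx_{t+1}$ spends position~$1$ of $\tx_t$, condition (o6) re-evaluates $F_s$ at $\tx_t$ in $\Gamma^+$. Position~$1$ is now realized, so the one guard that holds selects a single $\oup_1\s{RI}(\ldots)$ atom. That atom evaluates to $\bmtop$ exactly when $\tx_{t+1}$ carries $F_s$ with the prescribed markers, and to $\bmbot$ otherwise. Hence acceptance forces the correct next configuration, and it propagates $F_s$, so the induction continues.

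The guards are decided by concrete integers, so they never evaluate to $\s{U}$. The only source of $\s{U}$ is the $\oup_1$ atom before realization. This is what makes (o5) hold when $\tx_t$ itself is first accepted: every conjunct is $\bmtop$ or $\s{U}$. The exception is a $\s{halt}$ or a blocked $\s{dec}$ line, where $\oup_1\bmbot$ evaluates to $\s{U}$ while unrealized and becomes $\bmbot$ only if someone realizes position~$1$. That is exactly the intended ``no successor'' behaviour.

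For the converse direction we exhibit a witness transaction $\tx_{t+1}$. Its single input is $(\tx_t.\id, \s{oi} = 1)$ with amount $0$, so the balance conditions are trivial. It has a fresh id, the required markers and the formula $F_s$. We check that its own $F_s$ evaluates to non-$\bmbot$, which holds because it mentions only the unrealized positions of $\tx_{t+1}$.

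The main obstacle I expect is pinning down the three-valued semantics of the $\s{RI}$-matching atom before and after realization, since \S\ref{subsec:semantics} is not yet given. I will assume the following, which matches the running examples: $\oup_i\s{RI}(\vec a)$ is $\s{U}$ while position~$i$ is unrealized, and two-valued afterwards. A secondary concern is the gap-free rule (o1). Forbidding position~$2$ via $\oup_2\bmbot$ keeps each step's outputs a single chain, so no branching side-chain can carry a spurious configuration. Once the simulation is established, Turing-completeness follows from that of UTXO counter machines~\cite{mst/FischerMR68,csf/BartolettiLZ21}.
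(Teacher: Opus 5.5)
Your proposal is correct, but it uses a different encoding from the paper's proof.

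\emph{The paper's encoding.} It stores the counters and program counter in ordinary fields $x_1,\ldots,x_{n+1}$. Each transition is written as equalities between $x_t$ and $\oup_1 x_t$, conjoined with a plain $\oup_1\s{RI}$. Every transaction also carries the global guards $x_i \ge 0$ and $1 \le x_{n+1} \le |s|$, and a fixed amount is threaded through the chain via $\oum_1 = R$. A blocked $\s{dec}$ is excluded only indirectly: the sole admissible successor would need $x_i = -1$, and it then fails its own inherited invariant at (o5). At $\s{halt}$, propagation stops and the successor's key is forced to $A$ or $B$ according to $x_1$.

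\emph{Your encoding.} You place the whole configuration in the continuation markers. A single parameterized atom $\oup_1\s{RI}(\ldots)$ then does both RI propagation and the transition. This is the mechanism the DSL provides for step-local state. $\s{riMatch}$ checks the formula, the marker count, and each marker exactly, and it is two-valued once position~$1$ is realized and the arguments are concrete, which is the assumption you made. You make the missing successor explicit with $\oup_1\bmbot$ at $\s{halt}$ and at a blocked $\s{dec}$, and you avoid $\s{out}/\oum$ entirely. You need neither the nonnegativity guard nor the range guard, because reachable configurations stay in range by induction from a well-formed $\tx_0$.

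\emph{Trade-off.} Your route is shorter, and the evaluability and acceptance checks become nearly immediate, including your zero-amount witness for the converse direction. What it gives up is a ledger-visible outcome. The paper's chain carries value $R$ and at halt releases it to $A$ or $B$, so the machine's result is reflected in who can claim funds, in the style of the neighborhood-covenant construction. Your halted chain is instead frozen, and the result can only be read from the last transaction's markers. For the Turing-completeness claim as stated, your version is enough.
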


\begin{proof}
Fix distinct key values $A, B \in \mathbb{G}$.
Fix a UTXO counter machine with $n$ counters, instruction sequence $s$, and initial balance~$R$.
We encode one machine state per transaction.
For each $i \in [n]$, field $x_i$ stores counter value $v_i$, and field $x_{n + 1}$ stores the program counter $p$.
The RI formula is propagated through successor position~$1$, whose realized amount is constrained to~$R$ throughout the computation.
At halt, RI propagation stops;
a realized successor has amount $R$ and key field $A$ if $x_1 = 0$, and $B$ otherwise.
We define:
\[
F \coloneqq
\left(\bigwedge_{i = 1}^{n} x_i \ge 0\right)
\ \wedge\
(1 \le x_{n + 1}) \wedge (x_{n + 1} \le |s|)
\ \wedge\
\oum_1 = R
\ \wedge\
\oup_2 \bmbot
\ \wedge\
\bigwedge_{k = 1}^{|s|}\left(x_{n + 1} = k \to \s{cond}_k\right).
\]
Here $\s{cond}_k$ encodes the transition prescribed by instruction $s_k$.
If $s_k$ is $\s{inc}\ i$, let
\[
\s{cond}_k
\coloneqq
\left(\bigwedge_{t \in [n] \setminus \{i\}} x_t = \oup_1 x_t\right)
\ \wedge\
\oup_1 x_i = x_i + 1
\ \wedge\
\oup_1 x_{n + 1} = k + 1
\ \wedge\
\oup_1 \s{RI}.
\]

If $s_k$ is $\s{dec}\ i$, let
$
\s{cond}_k
\coloneqq
\left(\bigwedge_{t \in [n] \setminus \{i\}} x_t = \oup_1 x_t\right)
\ \wedge\
x_i = \oup_1 x_i + 1
\ \wedge\
\oup_1 x_{n + 1} = k + 1
\ \wedge\
\oup_1 \s{RI}
$.

If $s_k$ is $\s{zero}\ i$, let
$
\s{cond}_k
\coloneqq
\left(\bigwedge_{t \in [n] \setminus \{i\}} x_t = \oup_1 x_t\right)
\ \wedge\
\oup_1 x_i = 0
\ \wedge\
\oup_1 x_{n + 1} = k + 1
\ \wedge\
\oup_1 \s{RI}
$.

If $s_k$ is $\s{ifGoto}\ i\,j$, let
\[
\s{cond}_k
\coloneqq
\left(x_i \ne 0 \to \oup_1 x_{n + 1} = j\right)
\ \wedge\
\left(x_i = 0 \to \oup_1 x_{n + 1} = k + 1\right)
\ \wedge\
\left(\bigwedge_{t \in [n]} x_t = \oup_1 x_t\right)
\ \wedge\
\oup_1 \s{RI}.
\]

If $s_k$ is $\s{halt}$, let
$
\s{cond}_k
\coloneqq
\left(x_1 = 0 \to \oup_1 y_1 = A\right)
\ \wedge\
\left(x_1 \ne 0 \to \oup_1 y_1 = B\right)
$.

The initial transaction sets $x_i = 0$ for each $i \in [n]$ and $x_{n + 1} = 1$.
At each non-halting state, the current program-counter value selects exactly one instruction clause, so any realized successor encodes the prescribed transition, preserves $R$, and propagates the RI.
Conversely, each enabled machine transition has a realizing successor;
at halt the final branch is enforced without further propagation.
Hence every machine execution is realizable as an RI-governed chain, and every realized non-halting continuation is prescribed;
RIs constrain rather than force continuation (\S\ref{subsec:limitations}).
\end{proof}


\section{Semantics and Soundness}
\label{sec:dslri}
We instantiate the DSL semantics for the evaluable fragment of Definition~\ref{def:validRI} (\S\ref{subsec:semantics}) and present proof rules for RI reasoning (\S\ref{subsec:syntax}).
The key validation-time feature is that future-dependent clauses evaluate to $\mathsf{unknown}$ until the referenced successor position is realized, then become checkable.
We prove the deduction system sound for this semantics on the evaluable fragment.

\subsection{Semantics}
\label{subsec:semantics}

This subsection fixes the denotational semantics used by the soundness theorem, focusing on future-spend modalities when the referenced transaction is absent.
\ifnum\fullFlag = 1
We give the modal clauses here and defer the routine structural recursion for arithmetic, group, and Boolean connectives to Appendix~\ref{app:eval}.
\else
We give the modal clauses here and defer the routine structural recursion for arithmetic, group, and Boolean connectives to Appendix~A.2 of the full version~\cite{eprint/TangCFGL26}.
\fi
For a transaction $\tx$, we write $(\Gamma, \tx)\models t$ for validation admissibility of $t\in bexpr$ under a ledger $\Gamma\in\mathfrak{M}$.
We model a missing reference by a distinguished transaction $\bmbot_{\tx}$ and stipulate $(\Gamma, \bmbot_{\tx})\not\models t$ for all $t\in bexpr$.

Write $\s{FieldWF}_{\Gamma, \tx}(p)$ when every immutable indexed terminal reached while structurally evaluating $p$ from $(\Gamma, \tx)$ is present in the transaction against which it is evaluated;
if a modal target is absent, its operand is not traversed for this check.
The denotational evaluators below are applied only when this field-well-formedness condition holds, and the checker rejects violations.
Only unresolved modal targets and unrealized $\s{out}_j/\oum_j$ metadata produce unknown.
The evaluation functions $\aeval$ and $\geval$ are defined by structural recursion in
\ifnum\fullFlag = 1
Appendix~\ref{app:eval}
\else
Appendix~A.2
\fi
and return $\Omega$ whenever a required subexpression does.
Immutable transaction symbols are read from $\tx$;
$\s{out}_j$ and $\oum_j$ use the ledger-derived metadata of
\S\ref{subsec:ledgermodel} and evaluate to $\Omega$ until the $\jth$ output
position is realized.
The symbols $\Lambda$ and $\s{Rnd}$ use the inclusion context fixed for $\tx$,
so later re-checking does not change their values.

\begin{definition}[Boolean Expression Evaluation: Core Clauses]
\label{def:core_clauses}
Let $\neg$, $\wedge$, and $\vee$ denote the Kleene three-valued connectives.
The evaluator
$\beval \colon bexpr \to \{\bmbot, \s{U}, \bmtop\}$
is defined by the standard propositional clauses and the modal/RI clauses below.

\itpara{Abstraction from RI-instance labels.}
The denotational semantics intentionally abstracts from the checker-side RI-instance labels introduced in \S\ref{subsec:algo}.
These labels are not components of
$\mathcal{M}_{\tx}$ and do not affect $\mathsf{aEval}$, $\mathsf{gEval}$, $\mathsf{bEval}$, or the admissibility relation $\models$.
Accordingly, the matcher below compares only the RI formula and continuation-marker values.

The concrete checker of \S\ref{subsec:algo} additionally enforces instance-label consistency;
any accepted RI match therefore satisfies the label-abstracted matcher below, but the converse can fail when labels conflict.

For a referenced transaction $\tx'$ and continuation arguments
$a_1, \ldots, a_m$, let
$v_j \coloneqq \aeval(a_j)$ for $j \in [m]$.
Define
\[
\s{riMatch}_{\Gamma, \tx}(\tx'; a_1, \ldots, a_m)
\coloneqq
\begin{cases}
\bmbot
& \text{if } \tx'.\ri \neq \tx.\ri
\text{ or } |\tx'.\lambda| \neq m,\\

\bmbot
& \text{if } \exists j \in [m] \colon
  v_j \neq \Omega \wedge \tx'.\lambda_j \neq v_j, \\

\s{U}
& \text{if } \exists j \in [m] \colon v_j = \Omega, \\

\bmtop
& \text{otherwise.}
\end{cases}
\]

\begin{align*}
\beval(\s{HasIn}(i))
&\coloneqq
\begin{cases}
\bmtop
& \text{if } i \in [|\tx.\s{in}|], \\
\bmbot
& \text{otherwise},
\end{cases}
\\
\beval(\inp_i \varphi)
&\coloneqq
\begin{cases}
\s{bEval}_{\Gamma, \tx'}(\varphi)
& \text{if } \s{prev}_i(\tx, \tx')
	\text{ for some } \tx' \in \Gamma, \\
\s{U}
& \text{otherwise},
\end{cases}
\\
\beval(\oup_i \varphi)
&\coloneqq
\begin{cases}
\s{bEval}_{\Gamma, \tx'}(\varphi)
& \text{if } \s{succ}^{\Gamma}_i(\tx, \tx')
	\text{ for some } \tx' \in \Gamma, \\
\s{U}
& \text{otherwise},
\end{cases}
\end{align*}
\begin{align*}
\beval(a_1 < a_2)
&\coloneqq
\begin{cases}
\s{U}
& \!\text{if } \Omega \in \{\aeval(a_1), \aeval(a_2)\}, \\
\aeval(a_1) < \aeval(a_2)
& \!\text{otherwise}.
\end{cases}
\\
\beval(g_1 = g_2)
&\coloneqq
\begin{cases}
\s{U}
& \text{if } \Omega \in \{\geval(g_1), \geval(g_2)\}, \\
\bmtop
& \text{if } \geval(g_1) = \geval(g_2), \\
\bmbot
& \text{otherwise},
\end{cases}
\end{align*}
\begin{align*}
\beval(\inp_i \s{RI}(a_1, \ldots, a_m))
&\coloneqq
\begin{cases}
\s{riMatch}_{\Gamma, \tx}(\tx'; a_1, \ldots, a_m)
& \text{if } \s{prev}_i(\tx, \tx')
	\text{ for some } \tx' \in \Gamma, \\
\s{U}
& \text{otherwise},
\end{cases}
\\
\beval(\oup_i \s{RI}(a_1, \ldots, a_m))
&\coloneqq
\begin{cases}
\s{riMatch}_{\Gamma, \tx}(\tx'; a_1, \ldots, a_m)
& \text{if } \s{succ}^{\Gamma}_i(\tx, \tx')
	\text{ for some } \tx' \in \Gamma, \\
\s{U}
& \text{otherwise}.
\end{cases}
\end{align*}
\end{definition}

For the verification clauses, write
$
\widehat{g}_i \coloneqq \geval(g_i),
\widehat{a}_1 \coloneqq \aeval(a_1)
$.
Then
\[
\beval(\s{Ver}(g_1, g_2))
\coloneqq
\begin{cases}
\s{U}
& \text{if } \Omega \in \{\widehat{g}_1, \widehat{g}_2\}, \\
\s{SigVer}'(\widehat{g}_1, M, \widehat{g}_2)
& \text{otherwise,}
\end{cases}
\]
and
\[
\beval(\s{SVer}(g_1, a_1, g_2))
\coloneqq
\begin{cases}
\s{U}
& \text{if } \Omega \in
	\{\widehat{g}_1, \widehat{a}_1, \widehat{g}_2\}, \\
\s{SigVer}(\widehat{g}_1, \widehat{a}_1, \widehat{g}_2)
& \text{otherwise.}
\end{cases}
\]

Here,
$\s{prev}_i(\tx_1, \tx_2)$ abbreviates
$\tx_1.\s{in}_i = \tx_2.\id$,
and $\s{succ}^{\Gamma}_i(\tx_1, \tx_2)$ holds iff the ledger-derived metadata of $\tx_1$ in $\Gamma$ satisfies $\tx_1.\s{out}_i = \tx_2.\id$.
Thus, successor lookup is ledger-relative even though the underlying transactions are immutable.
In $\s{Ver}(g_1, g_2)$,
$M$ is the hash of the immutable serialized data of the current transaction, excluding signatures and ledger-derived metadata such as $\s{out}$ and $\oum$.
The inclusion-context values $\Lambda$ and $\s{Rnd}$ are likewise not part of $M$.
The predicates
$\s{SigVer}'(pk, M, \sigma)$
and $\s{SigVer}(pk, m, \sigma)$
verify $\sigma$ under public key $pk$
for the implicit message $M$ and explicit message $m$, respectively.
Boolean comparison/verification results are identified with $\bmtop/\bmbot$.
A missing reference under $\inp_i$ or $\oup_i$ yields $\s{U}$ rather than immediate failure;
for an unrealized successor under $\oup_i$, this records a deferred obligation.
$\s{HasIn}(i)$ instead tests current input arity strictly.

\begin{definition}[Admissibility Relation]
The admissibility relation
$\models \subseteq
(\mathfrak{M} \times \mathcal{M}_{\tx}) \times bexpr$
is defined by
\[
\Gamma, \tx \models p
\quad\text{iff}\quad
\tx \neq \bmbot_{\tx},
\quad
\s{FieldWF}_{\Gamma, \tx}(p),
\quad\text{and}\quad
\beval(p) \in \nonf.
\]
\end{definition}

Thus, $\models$ denotes validation admissibility, not classical truth:
$\bmtop$ and $\s{U}$ are designated, while $\bmbot$ rejects;
$\s{U}$ is an admissible unknown, and for an unrealized successor it records a deferred obligation.
\subsection{Deduction System and Soundness}
\label{subsec:syntax}

We adopt a judgmental deduction system $\mathcal{N}$.
Judgments have the form $\Gamma, \tx \vdash \varphi$, where $\Gamma$ is a ledger, $\tx$ is a transaction, and $\varphi$ is a formula.
We refer to $(\Gamma, \tx)$ as the evaluation context and to $\varphi$ as the conclusion.
A derivation is \emph{field-well-formed} if every judgment $\Gamma, \tx \vdash \varphi$ occurring in it satisfies $\s{FieldWF}_{\Gamma, \tx}(\varphi)$;
the deduction system is considered only on such derivations.
If $\Gamma, \tx \vdash \varphi$ holds for every $\tx\in\Gamma$, we write $\Gamma \vdash \varphi$ (and similarly for $\Gamma \models \varphi$).

Figure~\ref{fig:pfrl} lists the rule schemata used in the paper.
Most propositional rules are standard for the underlying three-valued connective semantics.
The RI-specific content is modal evaluation over predecessor/successor transactions, including absent references yielding $\s{U}$;
we claim soundness, not completeness.

\begin{figure}[!ht]
\centering
\begin{align*}
\begin{array}{c c}
\prftree[r]{\text{($\wedge$-introduction)}}{\Gamma, \tx \vdash \varphi}{\Gamma, \tx \vdash \psi}{\Gamma, \tx \vdash \varphi \wedge \psi} &
\prftree[double][r]{\text{($\to$-introduction)}}{\Gamma, \tx \vdash \neg \varphi \vee \psi}{\Gamma, \tx \vdash \varphi \to \psi} \\
\prftree[r]{\text{($\wedge$-elimination left)}}{\Gamma, \tx \vdash \varphi \wedge \psi}{\Gamma, \tx \vdash \varphi} &
\prftree[r]{\text{($\wedge$-elimination right)}}{\Gamma, \tx \vdash \varphi \wedge \psi}{\Gamma, \tx \vdash \psi} \\
\prftree[r]{(excluded middle)}{}{\Gamma, \tx \vdash \varphi \vee \neg \varphi} &
\prftree[r]{(ex falso quodlibet)}{\Gamma, \tx \vdash [\rhd] \bmbot}{\Gamma, \tx \vdash [\rhd]\varphi} \\
\prftree[r]{\text{($\vee$-introduction left)}}{\Gamma, \tx \vdash \varphi}{\Gamma, \tx \vdash \varphi \vee \psi} &
\prftree[r]{\text{($\vee$-introduction right)}}{\Gamma, \tx \vdash \psi}{\Gamma, \tx \vdash \varphi \vee \psi} \\
\multicolumn{2}{c}{\prftree[r]{(case analysis)}{\Gamma, \tx \vdash \varphi \to \psi}{\Gamma, \tx \vdash \varphi' \to \psi}{\Gamma, \tx \vdash \varphi \vee \varphi' \to \psi} } \\
\multicolumn{2}{c}{\prftree[r]{(De Morgan)}{\Gamma, \tx \vdash \neg(\varphi_1 \wedge \varphi_2 \wedge \ldots \wedge \varphi_\ell)}{\Gamma, \tx \vdash \neg\varphi_1 \vee \neg\varphi_2 \vee \ldots \vee \neg\varphi_\ell} } \\
\multicolumn{2}{c}{\prftree[r]{(distribution)}{\Gamma, \tx \vdash (\varphi_1 \vee \varphi_2) \wedge \psi}{\Gamma, \tx \vdash (\varphi_1 \wedge \psi) \vee (\varphi_2 \wedge \psi)} } \\
\prftree[double][r]{(disjunction distribution, \text{$\vee$}D)}{\Gamma, \tx \vdash \rhd(\varphi \vee \psi)}{\Gamma, \tx \vdash \rhd \varphi \vee \rhd \psi} &
\prftree[double][r]{(K-distribution, K)}{\Gamma, \tx \vdash \rhd(\varphi \to \psi)}{\Gamma, \tx \vdash \rhd \varphi \to \rhd \psi} \\
\prftree[double][r]{(conjunction distribution, \text{$\wedge$}D)}{\Gamma, \tx \vdash \rhd(\varphi \wedge \psi)}{\Gamma, \tx \vdash \rhd \varphi \wedge \rhd \psi} &
\prftree[r]{(tautology, T)}{\Gamma, \tx \vdash \varphi_1 \to \varphi_2 \to \psi}{\Gamma, \tx \vdash \varphi_1 \wedge \varphi_2 \to \psi}\\
\end{array}
\end{align*}
\CaptionAndDescription{The deduction system $\mathcal{N}$ ($\rhd$ stands for $\inp_i$ or $\oup_i$, and $[\rhd]$ stands for rules that hold with or without $\rhd$)}
\label{fig:pfrl}
\end{figure}

Soundness states that every derivation in $\mathcal{N}$ is admissible under the denotational semantics of \S\ref{subsec:semantics}.
We give a proof sketch below;
the complete rule-by-rule proof appears in
\ifnum\fullFlag = 1
Appendix~\ref{app:detailproof}.
\else
Appendix~D of the full version~\cite{eprint/TangCFGL26}.
\fi
The artifact also includes an auxiliary Coq development of the three-valued modal logic~\cite{tool/coq}.\footnote{%
See \url{https://github.com/htftsy/CTVPlus/blob/main/CoqVrf/innerLogic.v}, which also proves additional derived rules.}

\begin{theorem}[Soundness]
For any valid ledger $\Gamma$, transaction $\tx$, and $p \in bexpr$, if there is a field-well-formed derivation of $\Gamma, \tx \vdash p$, then $\Gamma, \tx \models p$.
\end{theorem}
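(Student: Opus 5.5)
The proof is by induction on the height of the field-well-formed derivation of $\Gamma, \tx \vdash p$. I will prove a slightly stronger statement: for every judgment $\Gamma, \tx \vdash \varphi$ in the derivation, $\beval(\varphi) \in \nonf$. Field-well-formedness of each judgment is a hypothesis of the theorem, so it comes for free. The condition $\tx \neq \bmbot_{\tx}$ is likewise preserved, because every rule keeps the evaluation context $(\Gamma,\tx)$ fixed. For the premise-free rule (excluded middle) I take $\tx$ to be a genuine transaction, as the quantification over $\tx \in \mathcal{M}_{\tx}$ intends; I expect the appendix to need exactly this convention. Admissibility is then literally the statement $\beval(\varphi)\in\nonf$. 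Throughout I will use the standard order $\bmbot < \s{U} < \bmtop$, with Kleene $\wedge$ as $\min$, $\vee$ as $\max$, and $\neg$ as order reversal, all of which are available from Definition~\ref{def:core_clauses}.

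\emph{Propositional rules.} I will handle these by short K3 calculations, one per rule.
\begin{itemize}
\item $\wedge$-introduction and $\wedge$-elimination hold because $\min(a,b)\in\nonf$ iff $a,b\in\nonf$.
\item $\vee$-introduction holds because $\max$ is monotone.
\item Excluded middle holds because $\max(a,\neg a)\ge\s{U}$ for every $a$.
\item The bidirectional rules ($\to$-introduction, De~Morgan, distribution, tautology T) are genuine K3 identities, since $\to$ and $\vee$ are abbreviations. For example, $\neg\varphi_1\vee(\neg\varphi_2\vee\psi)$ and $\neg(\varphi_1\wedge\varphi_2)\vee\psi$ have equal values. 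Equal values give equal admissibility in both directions.
\item For case analysis I use the identity $\neg(\varphi\vee\varphi')\vee\psi=(\neg\varphi\vee\psi)\wedge(\neg\varphi'\vee\psi)$ and then reuse the $\wedge$ case.
\item For ex falso without modality the premise is never admissible, since $\beval(\bmbot)=\bmbot$, so the rule is vacuously sound.
\end{itemize}

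\emph{Modal rules.} These are where the RI-specific semantics enter, and I will split on whether the modal target exists. For $\rhd=\inp_i$ the target is the unique $\tx'$ with $\s{prev}_i(\tx,\tx')$, which is unique because identifiers are collision-free in a valid ledger. For $\rhd=\oup_i$ it is the unique $\tx'$ with $\s{succ}^{\Gamma}_i(\tx,\tx')$, which is unique because a valid ledger has unique realization of each output position.
\begin{itemize}
\item If the target exists, every $\rhd\chi$ occurring in the rule evaluates to $\s{bEval}_{\Gamma,\tx'}(\chi)$. Then $\rhd(\varphi\vee\psi)$ and $\rhd\varphi\vee\rhd\psi$, and similarly for $\wedge$ and $\to$, have literally equal values. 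This is the same uniform $\tx'$ substituted into the Kleene connectives.
\item If the target is absent, every modal atom in the rule evaluates to $\s{U}$. Both sides of each double-line rule then evaluate to $\s{U}$, since $\s{U}\vee\s{U}=\s{U}\wedge\s{U}=\neg\s{U}\vee\s{U}=\s{U}$. Both sides are therefore admissible.
\item For modal ex falso, an admissible $\rhd\bmbot$ forces the absent-target case, because otherwise its value would be $\bmbot$. Hence $\rhd\varphi$ evaluates to $\s{U}$ and is admissible.
\end{itemize}

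The step I expect to be the main obstacle is justifying that the modal cases reduce to evaluation on a single, well-defined $\tx'$, uniformly for every subformula under the same $\rhd$. This is where validity of $\Gamma$ is used: it gives collision-free identifiers and unique realization, so that $\s{bEval}$ is a function rather than a relation. A second subtlety is the interaction of field-well-formedness with absent targets, since the operand is then not traversed. I would check that the convention "operand not traversed" makes $\s{FieldWF}$ of the premise and of the conclusion coincide in exactly the situations the double-line rules need, so that every judgment in the derivation is evaluated only under well-formedness. Purity and the evaluable fragment do not seem to be needed for soundness itself, only for Lemma~\ref{lem:parent-local-stability}. I would therefore not invoke them, beyond noting that the theorem is stated for the evaluable fragment on which $\beval$ is defined.
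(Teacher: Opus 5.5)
Your proposal is correct and follows the paper's proof. Both argue by induction on the derivation, check the propositional rules against the K3 truth tables, and split the modal rules ($\vee$D, K, $\wedge$D and modal ex falso) on whether the referenced predecessor/successor exists: both sides are $\s{U}$ when it is absent, and the rule reduces to a pointwise identity at the unique $\tx'$ when it is present. Your additions only tighten points the paper leaves implicit: you cover both directions of the double-line rules, make explicit the $\tx \neq \bmbot_{\tx}$ convention needed for the premise-free excluded-middle rule, and derive case analysis from a distributivity identity rather than a case split on the value of $\psi$. One small slip is that De~Morgan, distribution, and T are single-line rules in Figure~\ref{fig:pfrl}, not bidirectional ones, though your identity argument covers them anyway.
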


This theorem concerns the label-abstracted denotational semantics of \S\ref{subsec:semantics}.
The checker-side RI-instance separation condition of \S\ref{subsec:algo} is an additional checker-side restriction and is not part of the denotational relation proved sound here.

\begin{proof}
Fix a valid ledger $\Gamma$ and transaction $\tx$.
We induct on the last rule deriving $\Gamma, \tx\vdash p$.
Field well-formedness is assumed, so it suffices to show that the conclusion evaluates non-false.
Propositional cases follow from the Kleene truth tables.
For $\wedge$-intro, the induction hypotheses give
$\beval(\varphi), \beval(\psi) \in \nonf$;
hence $\beval(\varphi\wedge\psi) \in \nonf$, as required.

The only nontrivial cases are the modal distribution rules (K, $\vee$D, and $\wedge$D).
If the referenced $\ith$ predecessor or successor transaction is absent, both sides evaluate to $\s{U}$ by Definition~\ref{def:core_clauses}.
Otherwise, the rule reduces to the corresponding propositional identity in the referenced transaction and follows from the induction hypothesis.
The other cases are analogous.
\end{proof}


\section{Evaluation}
\label{sec:evaluation}
We use \emph{workflow} for an abstract protocol pattern,
\emph{case study} for its RI instantiation, and \emph{workload} for a measured transaction trace.
We evaluate \ctvp along two dimensions.
\S\ref{subsec:usecase} instantiates six practice-motivated workflows as RI case studies;
Fig.~\ref{fig:representative} depicts five, with the NFT case specified textually, and Table~\ref{tab:anchors} summarizes all six.
\S\ref{subsec:comparison} compares the benchmark proxy with Solidity references:
five workloads are trace-aligned, while the cooldown uses a linear reference.


\subsection{Case Studies}
\label{subsec:usecase}
We now detail the six RI workflows within the evaluable one-step fragment.
For each case, we give the relevant transaction fields, RI constraints, branch semantics, and the obligations checked immediately or carried forward.
Table~\ref{tab:anchors} links each workflow obligation to a motivating contract family, protocol, standard, or workflow pattern;
these anchors motivate the obligation but do not imply deployment of the RI encoding.
The evaluation isolates on-chain enforcement and validation-cost accounting;
off-chain dissemination of witnesses or application data is outside scope.
\ifnum\fullFlag = 1
Appendix~\ref{app:deploymentanchors}
\else
Appendix~B.2 of the full version~\cite{eprint/TangCFGL26}
\fi
describes the Solidity baselines and workload mappings.


\begin{table*}[t]
\centering
\caption{Real-world anchors for the six case studies}
\label{tab:anchors}
\setlength{\arrayrulewidth}{0.1pt}
\setlength{\tabcolsep}{3.5pt}
\begin{tabular}{p{0.17\textwidth} p{0.40\textwidth} p{0.37\textwidth}}
\toprule
Case study &
Anchors in practice &
What our RI encoding captures
\\
\midrule
Non-fungible tokens
&
ERC-721 defines the NFT ownership/transfer interface, and OpenZeppelin follows this ownership pattern~\cite{EIP721,OZContracts}.
&
Ownership transition:
minting creates ownership;
transfer preserves the token identifier and updates the owner.
\\
\hline
Gaming with guesses and an external beacon
&
Solidity exposes \texttt{blockhash} and \texttt{prevrandao}, with documented manipulation risks~\cite{SolidityGlobals,EIP4399};
dApps also use verifiable-randomness services such as Chainlink~\cite{ChainlinkVRF}.
&
Two-player guessing with public randomness and RI-carried game state.
\\
\hline
Layer-two assets (split/merge)
&
RGB and Taproot Assets use UTXO-anchored split/merge state transitions~\cite{RGBClientSideValidation24,RGBSingleUseSeals24,RGBStateTransitions25,LightningTaprootAssetsProtocol24,LightningTaprootAssetsOverview24}.
&
UTXO-anchored asset transitions with RI-enforced validity constraints.
\\
\hline
Collective payments
&
Uniswap's Merkle Distributor authenticates recipient claims against a committed Merkle root~\cite{UniswapMerkleDistributor}.
&
Sequential authorized claims with an algebraic membership-witness state and a one-unit payout condition.
\\
\hline
Transfer-tax token
&
Fee-on-transfer tokens are common;
routers support them (\eg, \texttt{SupportingFeeOnTransfer}) and document compatibility issues~\cite{UniswapRouter02,UniswapCommonErrors,UniswapFoTIssue}.
&
Fixed $1{:}9$ relation between realized fee and transfer amounts, with RI-carried continuation state.
\\
\hline
Branch-local cooldown flow
&
Smart-account wallets use spending-cap and allowance modules (\eg, Safe)~\cite{SafeSpendingLimits,SafeAllowanceModule}.
&
RI-encoded per-branch cooldown with a parent-to-child inclusion-gap threshold until expiry.
\\
\bottomrule
\end{tabular}
\end{table*}

\itpara{Workflow Notation in Fig.~\ref{fig:representative}.}
Labels $K_i$, $U_i$, and $V_i$ are mnemonic transaction names, with the current transaction circled.
Unmodalized fields refer to that transaction;
$\s{out}_i/\oum_i$ are its ledger-derived metadata (\S\ref{subsec:ledgermodel}).
The modalities $\inp_j/\oup_j$ refer to the $j$th predecessor and the transaction realizing successor position~$j$;
an unrealized successor yields unknown.
The marker $\lambda_i$ carries RI continuation state, while $\s{bal}$ is only a diagrammatic flow tag.
When a workflow joins branches or inputs, the checker-side RI-instance condition from \S\ref{subsec:algo} prevents combining distinct instances;
this constraint is absent from the label-abstracted semantics.



\begin{figure}[htbp]
	\centering

	\begin{minipage}{0.95\textwidth}

	\centering

	\hspace{-2em}
	\raisebox{.18\height}{
	\begin{subfigure}{0.47\textwidth}
		\centering
		~~\xymatrixcolsep{2.0pc}\xymatrix{
			& \text{\mycircled{$K_0$}}_{\lambda_1 = 0} \ar[r]^{\text{stake \& guess}}_{\s{bal} = 0} \ar[d]^{\text{stake \& guess}}_{\s{bal} = 0} & \text{\mycircled{$U_0$}}_{\lambda_1 = 1} \ar[d]^{\text{finalize}}_{\s{bal} = 1} & \ar[l]_{\ \ \ \s{bal} = 1} \\
			\ar[r]_{\s{bal} = 1} & \text{\mycircled{$U_1$}}_{\lambda_1 = 1} \ar[r]^{\text{finalize}}_{\s{bal} = 1} & \text{\mycircled{$K_1$}}_{\lambda_1 = 2} \ar[r]_{\ \ \ \s{bal} = 2} &
		}
		\caption{Guessing Game with an External Beacon}
		\label{subfig:game}
	\end{subfigure}
	}
	\hspace{-0.2em}
	\begin{subfigure}{0.47\textwidth}
		\centering
		~\xymatrixcolsep{0.4pc}\xymatrix{
			&K_0 \ar[d]^{\text{genesis}\ \ \ \ } & &
			\text{\mycircled{$V_2$}} \ar[r]^{\text{transfer}} &
			\text{\mycircled{$U_2$}} \ar@/_0.75pc/[dr]^{\text{merge}} &
			\\
			&
			\text{\mycircled{$U_1$}} \ar[r]^{\text{transfer}} &
			\text{\mycircled{$V_1$}} \ar@/_0.75pc/[ur]^{\text{split}} \ar@/^0.75pc/[dr]^{\text{split}} &
			& &
			\text{\mycircled{$V_4$}}
			\\
			 & & &
			\text{\mycircled{$V_3$}} \ar[r]^{\text{transfer}} &
			\text{\mycircled{$U_3$}} \ar@/^0.75pc/[ur]^{\text{merge}} &
		}
		\caption{Layer-Two Assets}
		\label{subfig:laytwo}
	\end{subfigure}


	\begin{subfigure}{0.75\textwidth}
		\begin{center}
		~~\xymatrixcolsep{2pc}\xymatrix{
			\text{\mycircled{$K_0$}} \ar[r]^{\text{out} 1}_{\textsf{bal} = 3} \ar@/_0.75pc/[rd]^{\text{out} 2}_{\textsf{bal} = 1} &
			\text{\mycircled{$K_1$}} \ar[r]^{\text{out} 1}_{\textsf{bal} = 2} \ar@/_0.75pc/[rd]^{\text{out} 2}_{\textsf{bal} = 1} &
			\text{\mycircled{$K_2$}} \ar[r]^{\text{out} 1}_{\textsf{bal} = 1} \ar@/_0.75pc/[rd]^{\text{out} 2}_{\textsf{bal} = 1} &
			\text{\mycircled{$K_3$}} \ar[r]^{\text{out} 1}_{\textsf{bal} = 0} \ar@/_0.75pc/[rd]^{\text{out} 2}_{\textsf{bal} = 1} &
			\circledast
			\\
			& U_1 & U_2 & U_3 & U_4 \\
		}
		\end{center}
		\caption{Collective Payments}
		\label{subfig:collp}
	\end{subfigure}


	\begin{subfigure}{0.45\textwidth}
		\centering
		~~\xymatrixcolsep{0.8pc}\xymatrix{
			& U_1 & U_2 & U_3
			\\
			\text{\mycircled{$K_0$}} \ar@/^0.75pc/[ur]_{\text{out} 1} \ar[r]_{\text{out} 2} \ar@/_0.75pc/[dr]_{\text{out} 3} &
			\text{\mycircled{$K_1$}} \ar@/^0.75pc/[ur]_{\text{out} 1} \ar[r]_{\text{out} 2} \ar@/_0.75pc/[dr]_{\text{out} 3} &
			\text{\mycircled{$K_2$}} \ar@/^0.75pc/[ur]_{\text{out} 1} \ar[r]_{\text{out} 2} \ar@/_0.75pc/[dr]_{\text{out} 3} & \cdots
			\\
			& \vdots & \vdots & \vdots
		}
		\caption{Taxed Transfers}
		\label{subfig:tax}
	\end{subfigure}
	\hfill
	\begin{subfigure}{0.45\textwidth}
		\centering
		~~\xymatrixcolsep{0.8pc}\xymatrix{
			& &
			\text{\mycircled{$K_4$}} \ar@{--}[]+<2.5em, 0em>;[dd]+<2.5em, 0em> &
			\\
			&
			\text{\mycircled{$K_2$}} \ar[r] \ar@/_0.75pc/[ur]&
			\text{\mycircled{$K_5$}} \ar[r]&
			K_7
			\\
			\text{\mycircled{$K_1$}} \ar[r] \ar@/_0.75pc/[ur] &
			\text{\mycircled{$K_3$}} \ar[r] &
			\text{\mycircled{$K_6$}} \ar[r] & K_8
			\\
		}
		\caption{Branch-Local Cooldown Flow}
		\label{subfig:temporal}
	\end{subfigure}
	\end{minipage}
\CaptionAndDescription{Case-study workflows:
We show only transaction dependencies and referenced successor positions, omitting the full RI formulas.
Circled nodes carry nontrivial RIs and propagate them along the indicated edges.
Here, $\lambda_1$ is a continuation marker, labels $\text{out}\ j$ denote successor positions, $\s{bal}$ is an informal flow annotation, and $K_i$, $U_i$, and~$V_i$ are diagrammatic transaction names; the dashed line in (e) marks expiry.%
}
	\label{fig:representative}
\end{figure}

\itpara{Non-Fungible Tokens (NFTs).}
The spending transaction uses $x_1$ as a mode flag ($x_1 = 1$ for transfer and $x_1 \neq 1$ for minting) and $x_2$ as the token identifier; the setup transaction has no predecessor carrying this RI.
The group fields $y_1$ and $y_2$ hold the current owner key and next-owner field (or witness material bound to it), respectively.
The signature $\sigma_1$ authorizes spending under $y_1$, while $\sigma_2$ authorizes minting under the fixed issuer key $\texttt{issuer} \in \mathbb{G}$.
Using successor position~$1$ for token continuation and $\oup_2 \bmbot$ to forbid a second successor position, the RI is:
\begin{align*}
F \coloneqq\;&
\oup_1 \s{RI}
\wedge
\oup_2 \bmbot
\wedge
\s{Ver}(y_1, \sigma_1)
\wedge
\Bigl(
\inp_1 \s{RI}
\to
\oup_1 x_1 = 1
\wedge
\oup_1 x_2 = x_2
\wedge
\oup_1 y_1 = y_2
\\
&\hspace{1.5em}
\wedge
\bigl(
\neg \inp_1 \inp_1 \s{RI}
\to
x_1 \ne 1
\bigr)
\wedge
\bigl(
x_1 \ne 1
\to
\s{SVer}\!\left(\texttt{issuer}, \s{H}(x_1, x_2, y_1, y_2, \Lambda), \sigma_2\right)
\wedge
\neg \inp_1 \inp_1 \s{RI}
\bigr)
\Bigr).
\end{align*}

If successor position~$1$ is realized, transfer ($x_1=1$) preserves the token identifier and updates the owner.
The condition $\neg \inp_1 \inp_1 \s{RI} \to x_1 \ne 1$ forces the post-setup transaction into mint mode, while the mint branch requires issuer authorization and $\neg \inp_1 \inp_1 \s{RI}$, preventing later reminting.
The final $\s{Ver}(y_1, \sigma_1)$ authenticates the spend, and the unconditional matching atom $\oup_1\s{RI}$ propagates the same RI formula along the setup--mint--transfer chain.
Binding mint authorization to $\Lambda$ limits cross-context replay;
global anti-duplication over an accepted height window additionally requires a corresponding recipient policy.
Checker-side RI-instance labels separate concurrent formula instances;
provenance and global uniqueness additionally depend on the RI formula and application policy.
Conventional UTXO realizations likewise thread ownership state through successive outputs;
\ctvp changes how the transition rule is carried and re-checked.

\itpara{Gaming with Guesses and an External Beacon.}
Along a propagated instance, the continuation marker $\lambda_1 \in \{0,1,2\}$ serves as the stage parameter for setup, guessing, and finalization, respectively.
In a guessing transaction, $x_1$ stores the player's guess, $y_1$ is the signer public key, and $\sigma_1$ authorizes the spend under $y_1$.
The exogenous value $\s{Rnd}$ supplies validation-time randomness only at finalization.
At setup, transactions realizing successor positions~$1$ and~$2$ must carry the same RI formula with continuation-marker value~$1$, while $\oup_3 \bmbot$ forbids a third successor position.
A guessing transaction requires any transaction realizing successor position~$1$ to carry the same RI formula with continuation-marker value~$2$, while $\oup_2 \bmbot$ forbids a second successor position.
Finalization consumes the two guessing branches, checks $\s{CorrWin}$, and terminates RI propagation.
The RI formula is:
\begin{align*}
F \coloneqq {}\ &
\bigl(
\lambda_1 = 0
\to
(
\oup_1 \s{RI}(1)
\wedge
\oup_2 \s{RI}(1)
\wedge
\s{Ver}(y_1, \sigma_1)
\wedge
\oup_3 \bmbot
)
\bigr)
\wedge
\\
&
\bigl(
\lambda_1 = 1
\to
(
\oup_1 \s{RI}(2)
\wedge
\s{Ver}(y_1, \sigma_1)
\wedge
\oup_2 \bmbot
)
\bigr)
\wedge
\\
&
\bigl(
\lambda_1 = 2
\to
\s{Ver}(y_1, \sigma_1)
\wedge \s{HasIn}(2)
\wedge \inp_1 \s{RI}(1)
\wedge \inp_2 \s{RI}(1)
\\
&\hspace{2em}{}
\wedge \inp_3 \bmbot
\wedge \oup_2 \bmbot
\wedge \s{CorrWin}
\wedge \oup_1 \s{Ver}(y_1, \sigma_1)
\wedge \oum_1 = 2
\bigr).
\end{align*}

Finalization requires two guessing inputs, forbids extra workflow inputs and successor position~$2$, checks $\s{CorrWin}$, and, if position~$1$ is realized, fixes its amount to~$2$ and authenticates the winner-controlled successor.
The winning condition is:
\begin{align*}
\s{CorrWin} \coloneqq {}\ &
\Bigl(
|\s{Rnd} - \inp_1 x_1| < |\s{Rnd} - \inp_2 x_1|
\to
\oup_1 y_1 = \inp_1 y_1
\Bigr)
\wedge
\\
&
\Bigl(
|\s{Rnd} - \inp_2 x_1| \le |\s{Rnd} - \inp_1 x_1|
\to
\oup_1 y_1 = \inp_2 y_1
\Bigr).
\end{align*}

$\s{Rnd}$ is externally supplied: the RI enforces consistency, while unpredictability and bias resistance depend on the external randomness source.

\itpara{Fungible Layer-Two Assets.}
Our model supports a basic layer-two asset where all coins originate from a single issuer that proposes the genesis transaction.
For each transaction, we use:
\begin{enumerate}[topsep = 0pt]
\item
$x_1$ as a mode, where $0$, $1$, $2$, and $3$ denote transfer, split, merge, and genesis, respectively.
\item
$x_2$ as the local asset amount, and a constant address \texttt{CONST} as the issuer.
\item
$y_1$ as the authorization key---the issuer at genesis and current owner otherwise---and $y_2$ as the next-owner key, \ie, the initial recipient at genesis and transfer recipient otherwise.
\end{enumerate}
The design is illustrated in Fig.~\ref{subfig:laytwo}.
In our DSL, it is captured by:
\begin{align*}
F \coloneqq{}\;& (0 \le x_1) \wedge (x_1 \le 3)
\\
&{}\bigwedge
\big(
x_1 = 0
\to
\inp_1 \s{RI}
\wedge
\inp_2 \bmbot
\wedge
\oup_2 \bmbot
\wedge
\left(
\oup_1 x_1 \ne 2
\to
\oup_1 x_2 = x_2
\right)
\\
&\hspace{3em}
\wedge
\oup_1 \s{RI}
\wedge
\left(
\oup_1 y_1 = y_2
\vee
\oup_1 x_1 = 2
\right)
\wedge
\s{Ver}(y_1, \sigma_1)
\big)
\\
&{}\bigwedge
\big(
x_1 = 1
\to
\inp_1 \s{RI}
\wedge
\inp_2 \bmbot
\wedge
\oup_3 \bmbot
\wedge
\oup_1 x_2 + \oup_2 x_2 = x_2
\wedge
\oup_1 x_1 \ne 2
\wedge
\oup_2 x_1 \ne 2
\\
&\hspace{3em}
\wedge
\oup_1 \s{RI}
\wedge
\oup_2 \s{RI}
\wedge
\s{Ver}(y_1, \sigma_1)
\wedge
\oup_1 y_1 = y_1
\wedge
\oup_2 y_1 = y_1
\big)
\\
\phantom{F \coloneqq{}}\;&{}\bigwedge
\bigl(
x_1 = 2
\to
\s{HasIn}(2)
\wedge
\inp_1 \s{RI}
\wedge
\inp_2 \s{RI}
\\
&\hspace{3em}
\wedge
\inp_3 \bmbot
\wedge
\oup_2 \bmbot
\wedge
\inp_1 x_2 + \inp_2 x_2 = x_2
\wedge
\inp_1 x_1 \ne 1
\wedge
\inp_2 x_1 \ne 1
\\
&\hspace{3em}
\wedge
\bigl(
\oup_1 x_1 \ne 2
\to
\oup_1 x_2 = x_2
\bigr)
\wedge
\oup_1 \s{RI}
\wedge
\bigl(
\oup_1 y_1 = y_1
\vee
\oup_1 x_1 = 2
\bigr)
\\
&\hspace{3em}
\wedge
\s{Ver}(y_1, \sigma_1)
\wedge
\inp_1 y_1 = y_1
\wedge
\inp_2 y_1 = y_1
\bigr)
\\
&{}\bigwedge
\big(
x_1 = 3
\to
y_1 = \texttt{CONST}
\wedge
\s{Ver}(y_1, \sigma_1)
\wedge \oup_1 x_2 = x_2
\wedge \oup_1 y_1 = y_2
\wedge
\oup_1 \s{RI}
\big).
\end{align*}
Transfer and split require one RI-bearing input;
merge requires two, with asset amounts summing to the current amount and owner $y_1$.
Genesis is issuer-authorized and, if successor position~$1$ is realized, carries the initial amount and recipient into the RI lineage.
The guard requires $x_1 \in \{0, 1, 2, 3\}$.

\itpara{Collective Payments.}
A sender may pay recipients through a claim chain rather than one enumerating all payees.
We use an algebraic membership-witness chain only to exercise RI state evolution;
no accumulator-security claim is made.
\begin{enumerate}[topsep = 0pt]
\item
The sender fixes the membership instance.
For a recipient address $y_1$, a witness $x_1$ satisfies
$
y_2^{\s{hashToInt}(y_1) \cdot x_1} = y_3
$,
where $y_2$ is a generator and $y_3$ is the current membership-state value.
\item
A recipient claims by providing $x_1$ and realizing the designated continuation position.
The constraint $\oup_1 \inp_2 \bmbot$ enforces the chain shape illustrated in Fig.~\ref{subfig:collp}:
the sender starts the payments at $K_0$, followed by claims at $K_1, K_2, \ldots$
\end{enumerate}
\begin{align*}
F \coloneqq {}\ &
\bigl(y_2^{\s{hashToInt}(y_1) \cdot x_1} = y_3\bigr)
\wedge
(\oup_1 y_2 = y_2)
\wedge
(\oup_1 y_3 = y_2^{x_1})
\wedge
\s{Ver}(y_1, \sigma_1) \ \wedge \\
&\oup_1 \s{RI}(\lambda_1 + 1)
\wedge
\oup_1 \inp_2 \bmbot
\wedge
\oup_2 y_1 = y_1
\wedge
\oup_2 \s{Ver}(y_1, \sigma_1)
\wedge
\oum_2 = 1.
\end{align*}
If successor position~$2$ is realized, it is a one-unit payout bound to the claimant key and authentication.
The continuation amount and positions beyond~$2$ remain unconstrained.

\itpara{Taxed Transfers.}
A transfer-tax-style policy can relate a transfer amount to a designated fee amount.
Let $\texttt{RECEIVER} \in \mathbb{G}$ be the designated tax-receiver key.
For the workflow in Fig.~\ref{fig:representative}d, we use
\[
F \coloneqq
\inp_2 \bmbot
\wedge \oup_4 \bmbot
\wedge \oup_2 \s{RI}
\wedge 9 \cdot \oum_3 = \oum_1
\wedge \s{Ver}(y_1, \sigma_1)
\wedge \oup_3 y_1 = \texttt{RECEIVER}
\wedge \oup_3 \s{Ver}(y_1, \sigma_1).
\]
If positions~$1$ and~$3$ are realized, the RI enforces $\oum_3=\oum_1/9$ and authenticates the tax-key successor.
Realization of those positions, binding $\oum_1$ to total input, and transfer-ownership preservation require additional clauses.

\itpara{Branch-Local Cooldown Flow.}
Let $\s{minGap}$ be the required inclusion-height gap and $\s{exTime}$ the expiration height.
We model a per-branch cooldown requiring the parent-to-child inclusion gap to exceed $\s{minGap}$ until an expiration point, as illustrated in Fig.~\ref{subfig:temporal}.
The field $x_1$ is a signer-chosen not-before height for the current transaction, while $\Lambda$ is its fixed inclusion height:
\[
\oup_3 \bmbot
\wedge
\inp_2 \bmbot
\wedge
\bigl(
\Lambda < \s{exTime}
\rightarrow
\bigl(
x_1 - \inp_1 \Lambda > \s{minGap}
\wedge
x_1 \le \Lambda
\wedge
\oup_1 \s{RI}
\wedge
\oup_2 \s{RI}
\bigr)
\bigr)
\wedge
\s{Ver}(y_1, \sigma_1).
\]
Before expiry, $x_1 \le \Lambda$ enforces the not-before height and
$x_1 - \inp_1 \Lambda > \s{minGap}$ the gap threshold;
positions~$1$ and~$2$ propagate the RI.
Because they may branch, the enforced gap is branch-local;
successor ownership also requires key binding.
Further cases, including commit-and-reveal gaming and collective receiving, appear in
\ifnum\fullFlag = 1
Appendix~\ref{app:usecase}.
\else
Appendix~B.1~\cite{eprint/TangCFGL26}.
\fi

\subsection{Implementation and Evaluation Methodology}
\label{subsec:comparison}

\itpara{Benchmark Interpreter.}
We implement a Python~3.13.2 RI interpreter and cost estimator for the six reported benchmark traces.
It evaluates the benchmark RI encodings on fixed traces and reports the validation-cost proxy used below.

\itpara{Benchmark Snapshot.}
The archived benchmark snapshot corresponds to the accepted-version evaluation and is retained for reproducibility.
Some final camera-ready specification refinements are not reflected in those benchmark encodings;
the formal definitions in this paper are authoritative for the final DSL and semantics.


\begin{figure}[!ht]
	\centering

	\begin{subfigure}{0.45 \textwidth}
	\includegraphics[width = \linewidth]{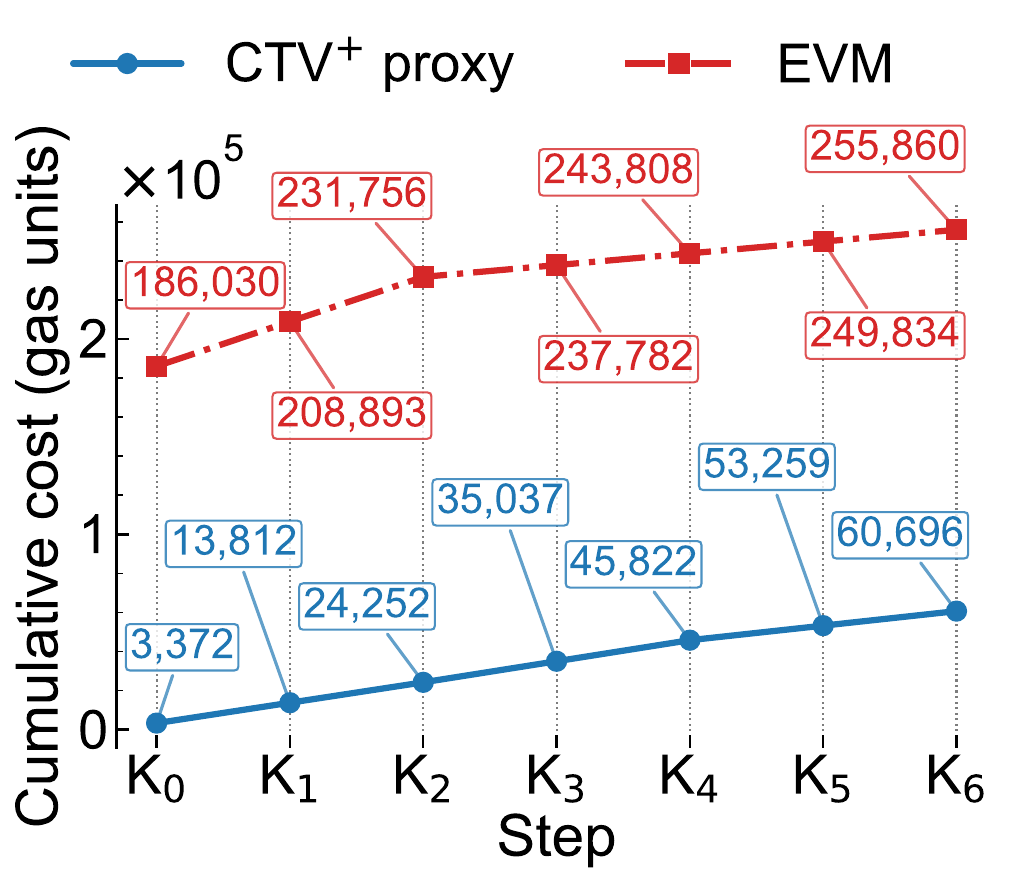}
	\caption{Non-Fungible Tokens}
	\label{evalfig:nft}
	\end{subfigure}
	\begin{subfigure}{0.45 \textwidth}
	\includegraphics[width = \linewidth]{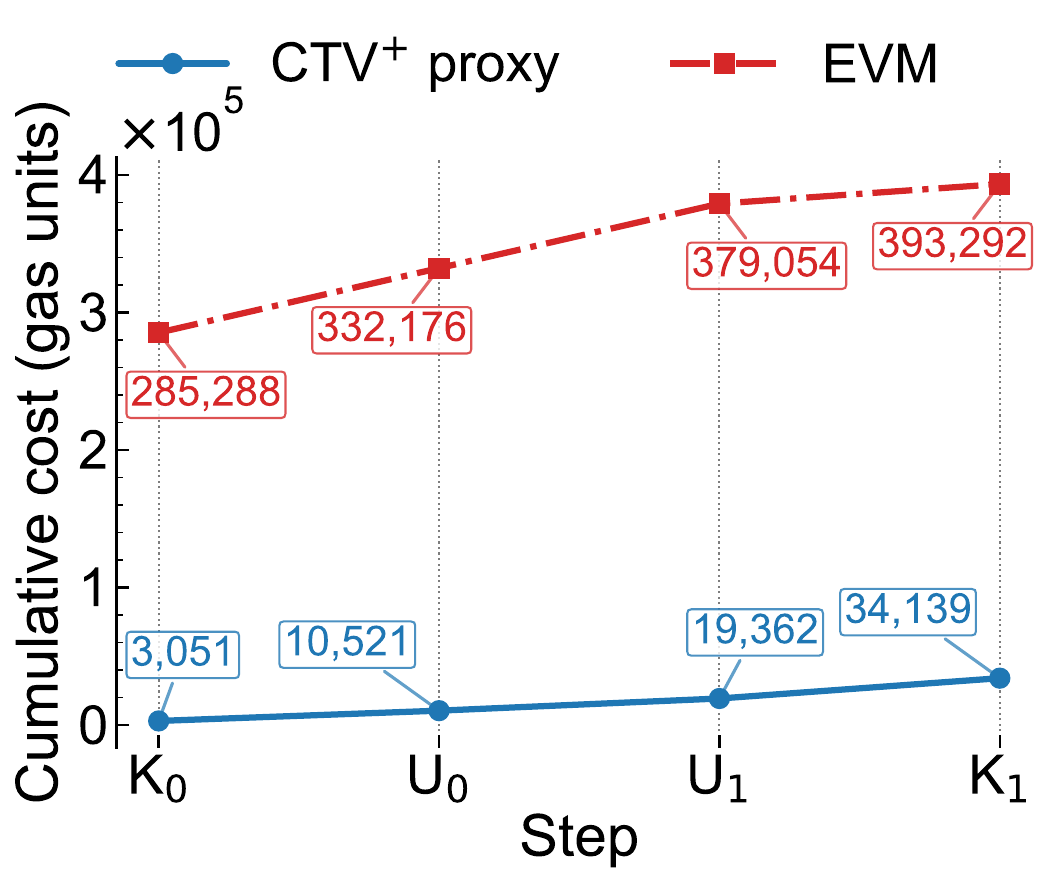}
	\caption{Gaming with Guesses and an External Beacon}
	\label{evalfig:game}
	\end{subfigure}

	\begin{subfigure}{0.45 \textwidth}
	\includegraphics[width = \linewidth]{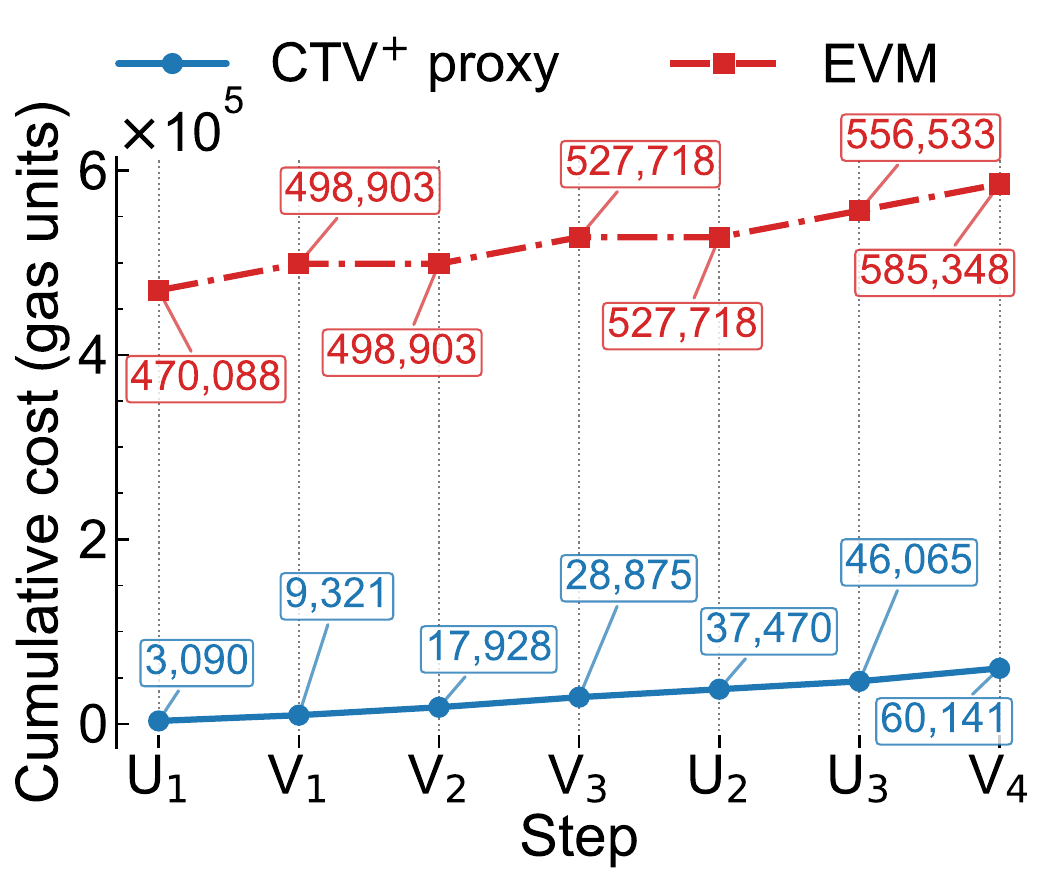}
	\caption{Fungible Layer-Two Assets}
	\label{evalfig:laytwo}
	\end{subfigure}
	\begin{subfigure}{0.45 \textwidth}
	\includegraphics[width = \linewidth]{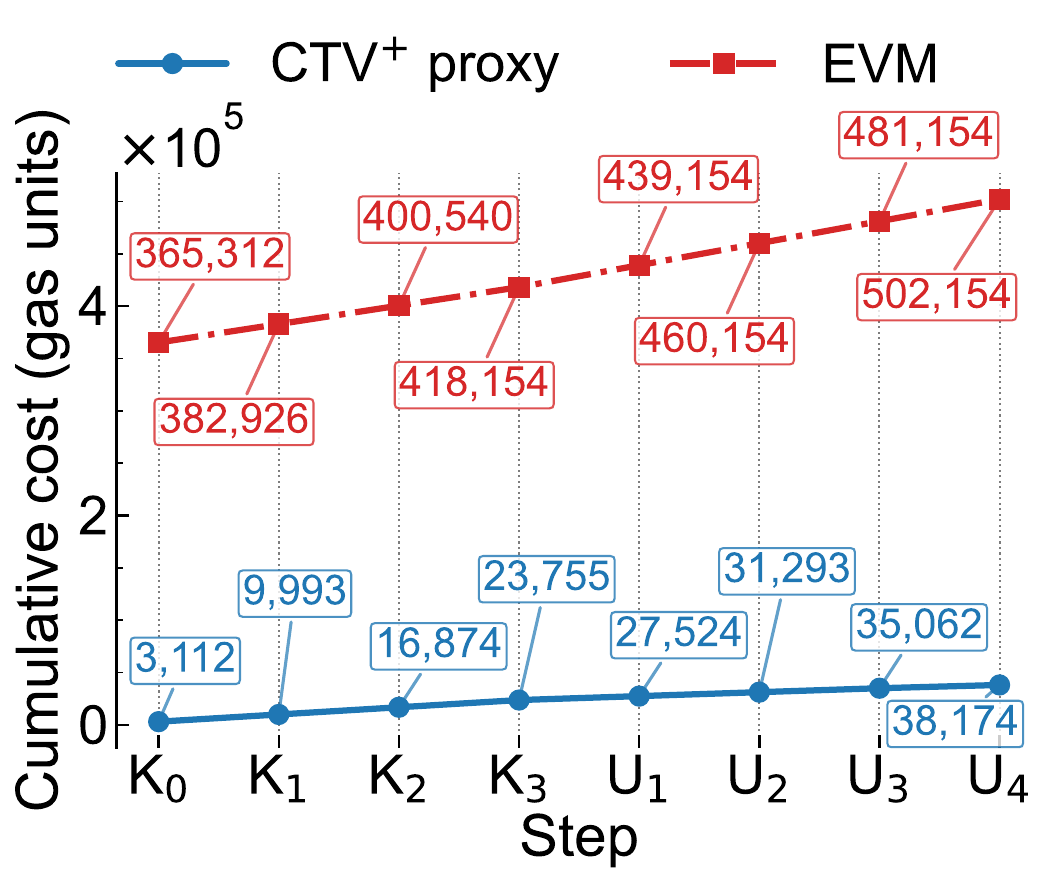}
	\caption{Collective Payments}
	\label{evalfig:collpay}
	\end{subfigure}

	\begin{subfigure}{0.45 \textwidth}
	\includegraphics[width = \linewidth]{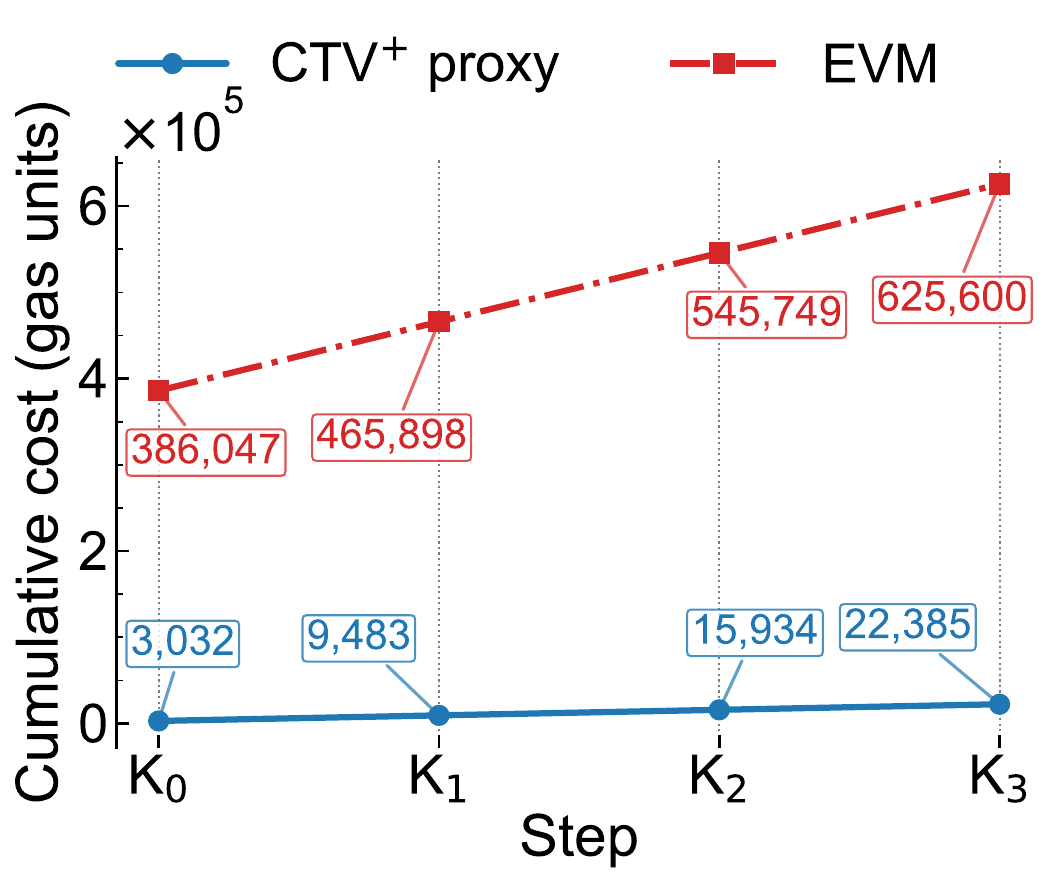}
	\caption{Taxed Transfers}
	\label{evalfig:tax}
	\end{subfigure}
	\begin{subfigure}{0.45 \textwidth}
	\includegraphics[width = \linewidth]{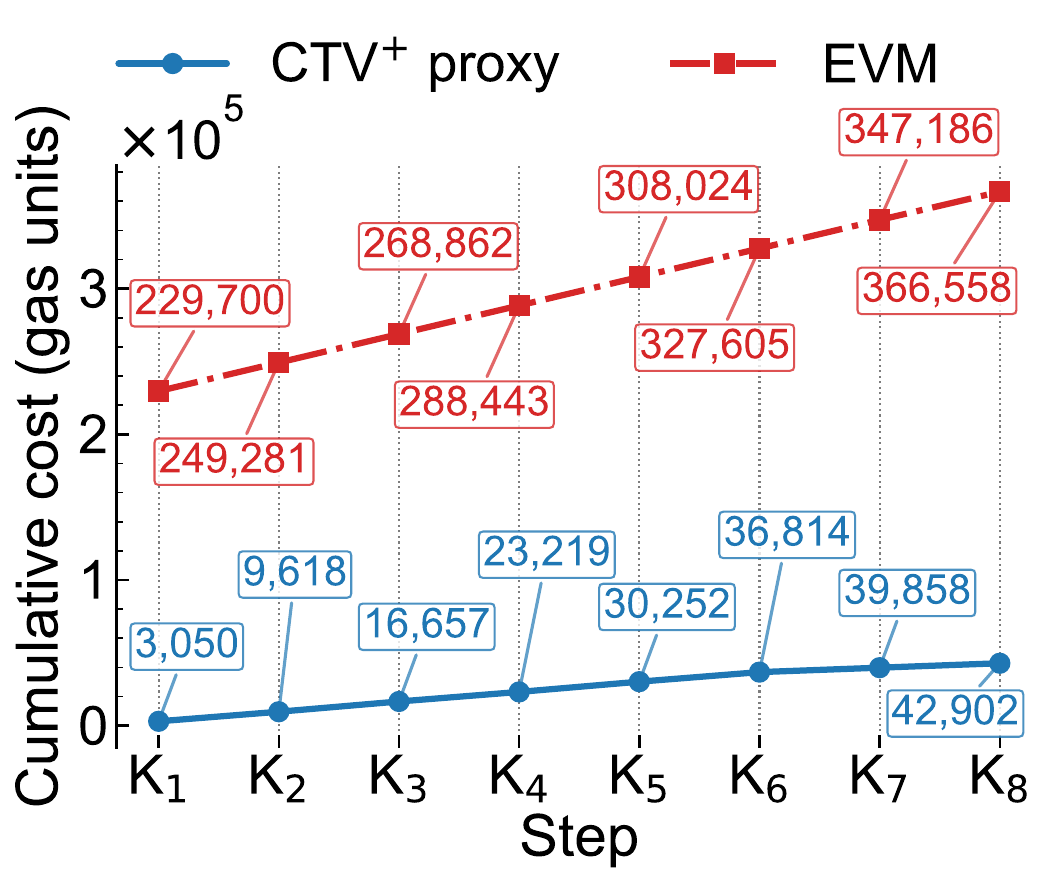}
	\caption{Branch-Local Cooldown Flow}
	\label{evalfig:timelimited}
	\end{subfigure}
	\CaptionAndDescription{Cumulative \ctvp proxy and measured EVM gas:
	(a)--(e) trace-aligned; (f) linear reference.}
	\label{fig:evalRes}
\end{figure}

\itpara{Comparison Methodology and Baseline Scope.}
For five case studies, each realized RI transition is paired with Solidity calls for the corresponding step-admission checks on the measured trace;
this is trace alignment, not all-trace semantic equivalence.
The branch-local cooldown uses a simpler linear reference.
The benchmark interpreter reports a validation-cost proxy for current- and parent-RI checks, normalized by Ethereum Virtual Machine (EVM) opcode gas weights;\footnote{See \url{https://ethereum.org/developers/docs/evm/opcodes}.}
the proxy is not EVM gas, while Solidity values are measured gas and include transaction overhead.

\itpara{Evaluation Environment.}
We implement the Solidity baselines in Solidity~0.8.27 and run them in Remix VM (Prague) on an Apple M3 with $16$~GB RAM and macOS~15.6.1 (24G90).
For the six workloads of \S\ref{subsec:usecase}, Table~\ref{tab:anchors} gives the practice anchors, while
\ifnum\fullFlag = 1
Appendix~\ref{app:deploymentanchors}
\else
Appendix~B.2 of the full version~\cite{eprint/TangCFGL26}
\fi
gives the baseline instantiations and mappings.

\itpara{Workload Matching.}
For the five trace-aligned cases, each benchmark step or grouped segment is paired with the EVM calls for the corresponding realized transition, summing gas when needed.
Integers use 256-bit words, and group elements use field elements modulo $p<2^{256}$, matching EVM calldata/storage representation.
\ifnum\fullFlag = 1
The input encoding, baseline scope, and step mapping appear in
Appendix~\ref{app:deploymentanchors}.
\fi
For the branch-local cooldown, we report a simpler linear eight-transaction Solidity reference because no trace-aligned branching reference was implemented.

\subsection{Results}
The cumulative \ctvp benchmark proxy increases approximately linearly over all six measured traces.
For the five trace-aligned benchmark cases, the reported endpoint proxy-to-gas ratios are $3.6\%$--$23.7\%$ (unweighted mean $10.8\%$).
The reported cooldown reference ratio is $11.7\%$ and is excluded from that summary because its Solidity workload is not trace-aligned.
These proxy-to-gas ratios do not measure runtime speedups or deployable fee reductions;
EVM deployment/state costs and \ctvp transaction-carried validation differ structurally.


\section{Conclusion}
Recursive invariants (RIs) support staged multi-step UTXO workflows by carrying persistent workflow rules across realized successor positions, without application-level shared mutable contract state.
The DSL uses Kleene's three-valued semantics to defer future-dependent clauses until the relevant successor is realized, and its deduction system is sound for the evaluable fragment.
Under mathematical-integer semantics and an unbounded transaction chain, this fragment can simulate UTXO counter machines.
Across six benchmark workloads, the cumulative validation-cost proxy grows approximately linearly over the measured traces.
RI re-checking reduces reliance on preconstructed transactions while preserving validation-time locality and explicit cost accounting.

\subsection*{Data-Availability Statement}
The Zenodo artifact archives the frozen benchmark code and recorded outputs, the Solidity comparison implementations, and an accompanying Coq development~\cite{zenodo/ctvplus-artifact}.
It also includes scripts for executing the archived Python and Foundry benchmark suites.
The latest version of the source code is available at \url{https://github.com/htftsy/CTVPlus}.

\subsection*{\acksname}
Sherman Chow is supported in part by the Research Grants Council of Hong Kong under the Collaborative Research Fund (C5097-25G) and the General Research Fund (14210825).
Guoqiang Li is supported by the National Natural Science Foundation of China with Grant No.~62572294.

\citestyle{acmnumeric}
\bibliographystyle{ACM-Reference-Format}
\bibliography{fc,misc,new-misc}

@inproceedings{fc/CromanDEGJKMSSS16,
  author       = {Kyle Croman and
                  Christian Decker and
                  Ittay Eyal and
                  Adem Efe Gencer and
                  Ari Juels and
                  Ahmed E. Kosba and
                  Andrew Miller and
                  Prateek Saxena and
                  Elaine Shi and
                  Emin G{\"{u}}n Sirer and
                  Dawn Song and
                  Roger Wattenhofer},
  title        = {On Scaling Decentralized Blockchains - {(A} Position Paper)},
  booktitle    = {FC},
  pages        = {106--125},
  year         = {2016},
doi          = {10.1007/978-3-662-53357-4\_8},
}

@inproceedings{fc/MoserES16,
  author       = {Malte M{\"{o}}ser and
                  Ittay Eyal and
                  Emin G{\"{u}}n Sirer},
  title        = {Bitcoin Covenants},
  booktitle    = {Bitcoin workshop co-located with FC},
  pages        = {126--141},
  year         = {2016},
 doi          = {10.1007/978-3-662-53357-4\_9},
}

@inproceedings{fc/OConnorP17,
  author       = {Russell O'Connor and
                  Marta Piekarska},
  title        = {Enhancing Bitcoin Transactions with Covenants},
  booktitle    = {FC},
  pages        = {191--198},
  year         = {2017},
  doi          = {10.1007/978-3-319-70278-0\_12},
}

@inproceedings{fcw/ChakravartyCMMJ20,
  author       = {Manuel M. T. Chakravarty and
                  James Chapman and
                  Kenneth MacKenzie and
                  Orestis Melkonian and
                  Michael Peyton Jones and
                  Philip Wadler},
  title        = {The Extended {UTXO} Model},
  booktitle    = {WTSC co-located with FC},
  pages        = {525--539},
  year         = {2020},
 doi          = {10.1007/978-3-030-54455-3\_37},
}

@inproceedings{ccs/BartolettiZ18,
  author       = {Massimo Bartoletti and
                  Roberto Zunino},
  title        = {{BitML}: {A} Calculus for Bitcoin Smart Contracts},
  booktitle    = {CCS},
  pages        = {83--100},
  year         = {2018},
 doi          = {10.1145/3243734.3243795},
}

@inproceedings{fmbc/MelkonianSC25,
  author       = {Orestis Melkonian and
                  Wouter Swierstra and
                  James Chapman},
  title        = {Program Logics for Ledgers},
  booktitle = {Formal Methods for Blockchains (FMBC), co-located with ETAPS},
  pages        = {10:1--10:22},
  year         = {2025},
  doi          = {10.4230/OASICS.FMBC.2025.10},
}

@inproceedings{esop/HajduJ20,
  author       = {{\'{A}}kos Hajdu and
                  Dejan Jovanovic},
  title        = {{SMT}-Friendly Formalization of the {Solidity} Memory Model},
  booktitle    = {ESOP},
  pages        = {224--250},
  year         = {2020},
 doi          = {10.1007/978-3-030-44914-8\_9},
}

@article{fgcs/BartolettiMZ26,
  author       = {Massimo Bartoletti and
                  Riccardo Marchesin and
                  Roberto Zunino},
  title        = {Scalable {UTXO} smart contracts via fine-grained distributed state},
  journal      = {Future Gener. Comput. Syst.},
  volume       = {175},
  pages        = {108023},
  year         = {2026},
 doi          = {10.1016/j.future.2025.108023},
}

@article{pacmpl/SergeyNJ0TH19,
  author       = {Ilya Sergey and
                  Vaivaswatha Nagaraj and
                  Jacob Johannsen and
                  Amrit Kumar and
                  Anton Trunov and
                  Ken Chan Guan Hao},
  title        = {Safer smart contract programming with {Scilla}},
  journal      = {Proc. ACM Program. Lang. (OOPSLA)},
volume = {3},
  pages        = {185:1--185:30},
  year         = {2019},
  doi = {10.1145/3360611},
}

@inproceedings{csf/BartolettiLZ21,
  author       = {Massimo Bartoletti and
                  Stefano Lande and
                  Roberto Zunino},
  title        = {Computationally sound {Bitcoin} tokens},
  booktitle    = {CSF},
  numpages        = {15},
  year         = {2021},
  doi = {10.1109/CSF51468.2021.00022},
}

@inproceedings{lsfa/VinogradovaS24,
  author    = {Polina Vinogradova and Alexey Sorokin},
  title     = {Properties of {UTxO} Ledgers and Programs Implemented on Them},
  booktitle    = {LSFA},
  pages        = {1--20},
  year         = {2024},
  month        = jun,
  doi          = {10.4204/EPTCS.421.1},
}

@article{mst/FischerMR68,
  author       = {Patrick C. Fischer and
                  Albert R. Meyer and
                  Arnold L. Rosenberg},
  title        = {Counter Machines and Counter Languages},
  journal      = {Math. Syst. Theory},
  volume       = {2},
  number       = {3},
  pages        = {265--283},
  year         = {1968},
  doi = {10.1007/BF01694011},
}

@inproceedings{nss/MengWWLYLZC18,
  author    = {Weizhi Meng and Jianfeng Wang and Xianmin Wang and Joseph K. Liu and Zuoxia Yu and Jin Li and Yongjun Zhao and Sherman S. M. Chow},
  title     = {Position Paper on Blockchain Technology: {Smart} Contract and Applications},
  booktitle = {NSS},
  pages     = {474--483},
  year      = {2018},
  doi       = {10.1007/978-3-030-02744-5\_35},
}

@book{book/Kleene52,
  author    = {Stephen Cole Kleene},
  title     = {Introduction to Metamathematics},
  publisher = {D. Van Nostrand Company},
  address   = {New York, NY, USA},
  year      = {1952}
}

@misc{bip119,
  author       = {Jeremy Rubin},
  title        = {{BIP} 119: {OP}\_{CHECKTEMPLATEVERIFY}},
  year         = {2020},
  howpublished = {\url{https://bips.dev/119}}
}

@misc{bitvm,
  author       = {Lukas Aumayr and
                  Zeta Avarikioti and
                  Robin Linus and
                  Matteo Maffei and
                  Andrea Pelosi and
                  Christos Stefo and
                  Alexei Zamyatin},
  title        = {{BitVM}: {Quasi-Turing} Complete Computation on {Bitcoin}},
  howpublished      = {{IACR} Cryptol. ePrint Arch. 2024/1995},
  year         = {2024}
}

@manual{tool/coq,
  title        = {The {Coq} Proof Assistant Reference Manual, Version 8.19.0},
  author       = {{Coq Development Team}},
  year         = {2024},
  howpublished = {\url{https://rocq-prover.org/doc/V8.19.0/refman/}},
  note         = {Accessed: 2026-09-12},
  organization = {Coq Development Team}
}

@techreport{tool/move,
  title        = {{Move: A} Language With Programmable Resources},
  author       = {Sam Blackshear and Evan Cheng and David L. Dill and Victor Gao and Ben Maurer and Todd Nowacki and Alistair Pott and Shaz Qadeer and Rain and Dario Russi and Stephane Sezer and Tim Zakian and Runtian Zhou},
  year         = {2019},
  institution  = {Libra},
  type         = {Technical Report},
  note         = {Accessed: 2026-09-12},
  howpublished = {\url{https://developers.libra.org/docs/assets/papers/libra-move-a-language-with-programmable-resources.pdf}}
}

@misc{tool/plutus,
  title        = {{Plutus}: The Smart Contract Language for Cardano},
  author       = {{Input Output Global (IOG)}},
  year         = {2025},
  howpublished = {\url{https://docs.cardano.org/developer-resources/smart-contracts/plutus}},
  note         = {Accessed: 2026-09-12}
}

@misc{eprint/TangCFGL26,
  author  = {Shuyang Tang and Sherman S. M. Chow and Hongfei Fu and Zihan Guo and Guoqiang Li},
  title   = {Staged Multi-step {UTXO} Workflows via Recursive Invariants},
  year    = {2026},
  note = {Full version}
}

@misc{zenodo/ctvplus-artifact,
  author    = {Shuyang Tang and Sherman S. M. Chow and Hongfei Fu and Zihan Guo and Guoqiang Li},
  title     = {Staged Multi-step {UTXO} Workflows via Recursive Invariants: Artifact},
  year      = {2026},
  publisher = {Zenodo},
  version   = {1.2},
  doi       = {10.5281/zenodo.21702929},
}

@misc{BIP345_OPVAULT,
  author       = {James O'Beirne and Greg Sanders},
  title        = {{BIP 345: OP\_VAULT}},
  howpublished = {\url{https://bips.dev/345}},
  year         = {2023},
  month        = feb,
  note         = {Accessed: 2026-09-12}
}

@misc{Optech_Topic_Vaults,
  author       = {{Bitcoin Optech}},
  title        = {Vaults},
  year         = {2026},
  month        = jan,
  howpublished          = {\url{https://bitcoinops.org/en/topics/vaults}},
  note         = {Accessed: 2026-09-12}
}

@misc{Optech_Topic_CTV,
  author       = {{Bitcoin Optech}},
  title        = {{OP\_CHECKTEMPLATEVERIFY}},
  howpublished = {\url{https://bitcoinops.org/en/topics/op_checktemplateverify}},
  year         = {2026},
  month        = jan,
  note         = {Accessed: 2026-09-12}
}

@misc{LightningTaprootAssetsOverview24,
  author       = {{Lightning Labs}},
  title        = {Taproot Assets},
  howpublished = {\url{https://docs.lightning.engineering/the-lightning-network/taproot-assets}},
  year         = {2024},
  month        = mar,
  note         = {Accessed: 2026-09-12}
}

@misc{LightningTaprootAssetsProtocol24,
  author       = {{Lightning Labs}},
  title        = {Taproot Assets Protocol},
  howpublished = {\url{https://docs.lightning.engineering/the-lightning-network/taproot-assets/taproot-assets-protocol}},
  year         = {2024},
  month        = sep,
  note         = {Accessed: 2026-09-12}
}

@misc{RGBClientSideValidation24,
  author       = {{RGB Docs}},
  title        = {Client-side Validation},
  howpublished = {\url{https://docs.rgb.info/distributed-computing-concepts/client-side-validation}},
  year         = {2024},
  month        = aug,
  note         = {Accessed: 2026-09-12}
}

@misc{RGBSingleUseSeals24,
  author       = {{RGB Docs}},
  title        = {Single-use Seals and Proof of Publication},
  howpublished = {\url{https://docs.rgb.info/distributed-computing-concepts/single-use-seals}},
  year         = {2024},
  month        = aug,
  note         = {Accessed: 2026-09-12}
}

@misc{RGBStateTransitions25,
  author       = {{RGB Docs}},
  title        = {Contract Operations: State Transitions},
  howpublished = {\url{https://docs.rgb.info/rgb-state-and-operations/state-transitions}},
  year         = {2025},
  month        = jul,
  note         = {Accessed: 2026-09-12}
}

@misc{SolidityGlobals,
  year    = {2026},
  author       = {{Solidity Project}},
  title        = {Units and Globally Available Variables},
  howpublished = {\url{https://docs.soliditylang.org/en/latest/units-and-global-variables.html}},
  note         = {Accessed: 2026-09-12},
}

@misc{EIP4399,
  author       = {Mikhail Kalinin and Dankrad Feist and Alexey Akhunov},
  title        = {{EIP-4399: S}upplant {DIFFICULTY O}pcode with {PREVRANDAO}},
  howpublished = {\url{https://eips.ethereum.org/EIPS/eip-4399}},
  year         = {2022},
  note         = {Accessed: 2026-09-12},
}

@misc{ChainlinkVRF,
year = {2024},
  author       = {{Chainlink}},
  title        = {{VRF} (Verifiable Random Function) Documentation},
  howpublished = {\url{https://docs.chain.link/vrf}},
  note         = {Accessed: 2026-09-12},
}

@misc{UniswapCommonErrors,
year = {2026},
  author       = {{Uniswap}},
  title        = {Uniswap Troubleshooting (v2)},
  howpublished = {\url{https://developers.uniswap.org/docs/protocols/v2/guides/troubleshooting}},
  note         = {Accessed: 2026-09-12},
}

@misc{UniswapRouter02,
  year         = {2020},
  author       = {{Uniswap}},
  title        = {{UniswapV2Router02.sol}},
  howpublished = {\url{https://github.com/Uniswap/v2-periphery/blob/master/contracts/UniswapV2Router02.sol}},
  note         = {Accessed: 2026-09-12},
}

@misc{UniswapFoTIssue,
  author       = {{Uniswap Interface Developers}},
  title = {Support of fee-on-transfer tokens},
  year = {2020},
  howpublished = {\url{https://github.com/Uniswap/interface/issues/835}},
  note = {GitHub issue \#835. Accessed: 2026-09-12}
}

@misc{EIP721,
  author       = {William Entriken and Dieter Shirley and Jacob Evans and Nastassia Sachs},
  title        = {{EIP-721: N}on-Fungible Token Standard},
  howpublished = {\url{https://eips.ethereum.org/EIPS/eip-721}},
  year         = {2018},
  note         = {Accessed: 2026-09-12},
}

@misc{OZContracts,
year = {2024},
  author       = {{OpenZeppelin}},
  title        = {{ERC721}: {OpenZeppelin} Contracts v4.x {API} Reference},
  howpublished = {\url{https://docs.openzeppelin.com/contracts/4.x/api/token/erc721}},
  note         = {Accessed: 2026-09-12},
}

@misc{SafeModules,
  author       = {{Safe}},
  title        = {Safe Modules (smart account modules)},
  howpublished = {\url{https://docs.safe.global/advanced/smart-account-modules}},
  year         = {2025},
  note         = {Accessed: 2026-09-12}
}

@misc{SafeSpendingLimits,
  author       = {{Safe}},
  title        = {Set up and use Spending Limits},
  howpublished = {\url{https://help.safe.global/articles/3961440620-set-up-and-use-spending-limits}},
  year         = {2026},
  note         = {Accessed: 2026-09-12}
}

@misc{SafeAllowanceModule,
  author       = {{Safe Foundation}},
  title        = {Allowance Module},
  howpublished = {\url{https://github.com/safe-fndn/safe-modules/tree/main/modules/allowances}},
  year = {2026},
  month = {Mar},
  note         = {Accessed: 2026-09-12},
}

@misc{UniswapMerkleDistributor,
  author  = {{Uniswap Labs}},
  title   = {Merkle Distributor},
  year    = {2020},
  url     = {https://github.com/Uniswap/merkle-distributor},
  urldate = {2026-09-12},
}

\ifnum\fullFlag = 1
\newpage
\appendix

\section{Semantic Details Omitted from the Main Text}
\label{app:semantics}

\subsection{Exogenous Symbols and Validation-Time Instantiation}
\label{app:exogenous}

\itpara{Compilation of Exogenous Symbols.}
We treat $\Lambda$ and $\s{Rnd}$ as exogenous parameters determined by a transaction's canonical inclusion context.
For an accepted transaction $\tx$, let $\s{height}_{\Gamma}(\tx)$ and $\s{rnd}_{\Gamma}(\tx)$ denote, respectively, the block height and the consensus-determined randomness value contained in or deterministically derived from that context.
When $\tx$ is later re-checked under a ledger extension, its unmodalized $\Lambda$ and $\s{Rnd}$ continue to evaluate to these fixed values rather than to the inclusion context of the later extension.
A modal occurrence evaluates the symbol in the inclusion context of the referenced transaction.
Thus, ledger extension can resolve a previously unknown successor-dependent clause without changing exogenous values already fixed for an earlier transaction.
These conventions suffice for our case studies and do not change the decidability of one-step validation.
A general outer logic for reasoning about exogenous sources across traces
is outside the present semantics.

\subsection{Evaluation of Arithmetic, Group, and Boolean Expressions}
\label{app:eval}

This appendix records the full clause-by-clause definitions of $\aeval$, $\geval$, and $\beval$.
The modal and RI-propagation clauses are the nonstandard part and are repeated here for completeness.
Following \S\ref{subsec:dsl}, hash arguments are canonically serialized with type tags.
Let $\s{encEval}_{\Gamma, \tx}(e)$ denote this tagged serialization after evaluating an arithmetic argument with $\aeval$ or a group argument with $\geval$.
If any argument evaluates to $\Omega$, the enclosing hash expression evaluates to $\Omega$ by strict propagation.
We use $H_a \colon \{0, 1\}^* \to \mathbb{Z}$ and $H_g \colon \{0, 1\}^* \to \mathbb{G}$ for arithmetic- and group-valued hash outputs, respectively, and write $\concat$ for concatenation.

Throughout this appendix, $\s{prev}_i(\tx_1, \tx_2)$ abbreviates $\tx_1.\s{in}_i = \tx_2.\id$.
The ledger-relative relation $\s{succ}^{\Gamma}_i(\tx_1, \tx_2)$ holds iff the ledger-derived metadata of $\tx_1$ in $\Gamma$ satisfies $\tx_1.\s{out}_i = \tx_2.\id$.

\begin{definition}[Arithmetic Expression Evaluation]
Evaluation $\aeval \colon aexpr^{\star} \to \mathbb{Z} \cup \{\Omega\}$
in the context of ledger $\Gamma$ and transaction $\tx$ is defined by structural recursion.
We use strict propagation of $\Omega$:
if any required immediate subexpression evaluates to $\Omega$, then the whole expression evaluates to~$\Omega$.
Otherwise:
\begin{align*}
\aeval(n)
&\coloneqq n
&& (n \in \mathbb{Z}),
\\
\aeval(\s{var})
&\coloneqq \tx.\s{var}
&& \bigl(\s{var} \in S_a' \setminus
	\{\s{out}_j, \oum_j \colon j \ge 1\}\bigr),
\\
\aeval(\Lambda)
&\coloneqq \s{height}_{\Gamma}(\tx),
&
\aeval(\s{Rnd})
&\coloneqq \s{rnd}_{\Gamma}(\tx),
\\
\aeval(\Omega)
&\coloneqq \Omega.
\end{align*}
\begin{align*}
\aeval(\s{out}_j)
&\coloneqq
\begin{cases}
\tx.\s{out}_j
& \text{if $\tx.\s{out}_j$ is defined in $\Gamma$,} \\
\Omega
& \text{otherwise,}
\end{cases}
\\
\aeval(\oum_j)
&\coloneqq
\begin{cases}
\tx.\oum_j
& \text{if $\tx.\oum_j$ is defined in $\Gamma$,} \\
\Omega
& \text{otherwise.}
\end{cases}
\end{align*}
\begin{align*}
\aeval(-a)
&\coloneqq -\,\aeval(a),
&
\aeval(a + b)
&\coloneqq \aeval(a) + \aeval(b),
\\
\aeval(a \cdot b)
&\coloneqq \aeval(a) \cdot \aeval(b),
&
\aeval(|a|)
&\coloneqq \left\lvert \aeval(a) \right\rvert.
\end{align*}
\[
\aeval\bigl(\s{H}(e_1, \ldots, e_\ell)\bigr)
\coloneqq
H_a\bigl(
\s{encEval}_{\Gamma, \tx}(e_1)
\concat \cdots \concat
\s{encEval}_{\Gamma, \tx}(e_\ell)
\bigr).
\]
\begin{align*}
\aeval(\inp_i a)
&\coloneqq
\begin{cases}
\s{aEval}_{\Gamma, \tx'}(a)
& \text{if $\s{prev}_i(\tx, \tx')$ for some $\tx' \in \Gamma$,} \\
\Omega
& \text{otherwise,}
\end{cases}
\\
\aeval(\oup_i a)
&\coloneqq
\begin{cases}
\s{aEval}_{\Gamma, \tx'}(a)
& \text{if $\s{succ}^{\Gamma}_i(\tx, \tx')$ for some $\tx' \in \Gamma$,} \\
\Omega
& \text{otherwise.}
\end{cases}
\end{align*}
\end{definition}

\begin{definition}[Group Expression Evaluation]
Evaluation $\geval \colon gexpr \to \mathbb{G} \cup \{\Omega\}$
in the context of ledger $\Gamma$ and transaction $\tx$ is defined by structural recursion, analogously to $\aeval$.
We again use strict propagation of $\Omega$:
if any required immediate subexpression evaluates to $\Omega$, then the whole expression evaluates to $\Omega$.
Otherwise:
\begin{align*}
\geval(\theta)
&\coloneqq \theta
&& (\theta \in \mathbb{G}),
\\
\geval(\s{var})
&\coloneqq \tx.\s{var}
&& (\s{var} \in S_g),
\\
\geval(\Omega)
&\coloneqq \Omega,
\\
\geval(g_1 \cdot g_2)
&\coloneqq \geval(g_1) \cdot \geval(g_2),
&
\geval(g^a)
&\coloneqq \geval(g)^{\aeval(a)}.
\end{align*}
\[
\geval\bigl(\s{H}(e_1, \ldots, e_\ell)\bigr)
\coloneqq
H_g\bigl(
\s{encEval}_{\Gamma, \tx}(e_1)
\concat \cdots \concat
\s{encEval}_{\Gamma, \tx}(e_\ell)
\bigr).
\]
\begin{align*}
\geval(\inp_i g)
&\coloneqq
\begin{cases}
\s{gEval}_{\Gamma, \tx'}(g)
& \text{if $\s{prev}_i(\tx, \tx')$ for some $\tx' \in \Gamma$,} \\
\Omega
& \text{otherwise,}
\end{cases}
\\
\geval(\oup_i g)
&\coloneqq
\begin{cases}
\s{gEval}_{\Gamma, \tx'}(g)
& \text{if $\s{succ}^{\Gamma}_i(\tx, \tx')$ for some $\tx' \in \Gamma$,} \\
\Omega
& \text{otherwise.}
\end{cases}
\end{align*}
\end{definition}

\begin{definition}[Boolean Expression Evaluation]
Let $\neg$, $\wedge$, and $\vee$ denote the connectives of Kleene's three-valued logic.
Evaluation $\beval \colon bexpr \to \{\bmbot, \s{U}, \bmtop\}$
in the context of ledger $\Gamma$ and transaction $\tx$ is defined by structural recursion.

The propositional clauses are:
\begin{align*}
\beval(\s{U})
&\coloneqq \s{U},
&
\beval(\bmbot)
&\coloneqq \bmbot,
\\
\beval(\neg \varphi)
&\coloneqq \neg\,\beval(\varphi),
\\
\beval(\varphi \wedge \psi)
&\coloneqq \beval(\varphi) \wedge \beval(\psi),
&
\beval(\varphi \vee \psi)
&\coloneqq \beval(\varphi) \vee \beval(\psi).
\end{align*}

The modal clauses are:
\begin{align*}
\beval(\inp_i \varphi)
&\coloneqq
\begin{cases}
\s{bEval}_{\Gamma, \tx'}(\varphi)
& \text{if } \exists \tx' \in \Gamma \colon \s{prev}_i(\tx, \tx'), \\
\s{U}
& \text{otherwise,}
\end{cases}
\\
\beval(\oup_i \varphi)
&\coloneqq
\begin{cases}
\s{bEval}_{\Gamma, \tx'}(\varphi)
& \text{if } \exists \tx' \in \Gamma \colon \s{succ}^{\Gamma}_i(\tx, \tx'), \\
\s{U}
& \text{otherwise.}
\end{cases}
\end{align*}

For the comparison and verification clauses, write
\[
A_i \coloneqq \aeval(a_i),
\qquad
G_i \coloneqq \geval(g_i)
\qquad (i \in \{1, 2\}).
\]
Then:
\begin{align*}
\beval(a_1 < a_2)
&\coloneqq
\begin{cases}
\s{U}
& \text{if } A_1 = \Omega \text{ or } A_2 = \Omega, \\
A_1 < A_2
& \text{otherwise,}
\end{cases}
\\
\beval(g_1 = g_2)
&\coloneqq
\begin{cases}
\s{U}
& \text{if } G_1 = \Omega \text{ or } G_2 = \Omega, \\
\bmtop
& \text{if } G_1 = G_2, \\
\bmbot
& \text{otherwise.}
\end{cases}
\end{align*}
\begin{align*}
\beval\bigl(\s{Ver}(g_1, g_2)\bigr)
&\coloneqq
\begin{cases}
\s{U}
& \text{if } G_1 = \Omega \text{ or } G_2 = \Omega, \\
\s{SigVer}'(G_1, M, G_2)
& \text{otherwise,}
\end{cases}
\\
\beval\bigl(\s{SVer}(g_1, a_1, g_2)\bigr)
&\coloneqq
\begin{cases}
\s{U}
& \text{if } G_1 = \Omega,\ A_1 = \Omega,\ \text{or } G_2 = \Omega, \\
\s{SigVer}(G_1, A_1, G_2)
& \text{otherwise.}
\end{cases}
\end{align*}

For RI propagation, we use the matcher
$\s{riMatch}_{\Gamma, \tx}$ defined in Definition~\ref{def:core_clauses}.
\[
\begin{aligned}
\beval\bigl(\inp_i \s{RI}(a_1, \ldots, a_m)\bigr)
&\coloneqq {}\\[-0.4ex]
&\quad
\begin{cases}
\s{riMatch}_{\Gamma, \tx}(\tx'; a_1, \ldots, a_m)
& \text{if } \exists \tx' \in \Gamma \colon \s{prev}_i(\tx, \tx'), \\
\s{U}
& \text{otherwise.}
\end{cases}
\end{aligned}
\]
\[
\begin{aligned}
\beval\bigl(\oup_i \s{RI}(a_1, \ldots, a_m)\bigr)
&\coloneqq {}\\[-0.4ex]
&\quad
\begin{cases}
\s{riMatch}_{\Gamma, \tx}(\tx'; a_1, \ldots, a_m)
& \text{if } \exists \tx' \in \Gamma \colon
	\s{succ}^{\Gamma}_i(\tx, \tx'), \\
\s{U}
& \text{otherwise.}
\end{cases}
\end{aligned}
\]
\end{definition}

Here, $\s{SigVer}'(\s{pk}, M, \sigma)$ verifies the signature $\sigma$ under public key $\s{pk}$ for the implicit message $M$.
The message $M$ is the hash of the immutable serialized transaction data, excluding signatures, ledger-derived metadata such as $\s{out}$ and $\oum$, and the canonical inclusion context.
Thus, realizing a later output does not change the message previously signed by the transaction.
$\s{SigVer}(\s{pk}, m, \sigma)$ verifies $\sigma$ under $\s{pk}$ for the explicit message $m$.

\section{Additional Case Studies and Evaluation Details}
\label{app:additional}

\subsection{Additional Case Studies}
\label{app:usecase}

\itpara{Gaming with Commit-and-Reveal.}
We sketch a three-player commit-and-reveal game where the winner is closest to a published aggregate.
Transaction $K_0$ (with $\lambda_1 = 0$) initiates the process and requires three commitment outputs.
Each commitment $U_i$ (with $\lambda_1 = 1$) can later be revealed by $V_i$ (with $\lambda_1 = 2$).
Transaction $K_1$ aggregates the three revealed values and determines the winner.
We avoid division by $3$ by comparing scaled distances to the sum.

In our DSL, the RI formula is:
\begin{align*}
	F \coloneqq &
	\left( \lambda_1 = 0 \to
		\left(
		\begin{array}{l}
			\oup_1 \s{RI}(1) \wedge \oup_2 \s{RI}(1)
				\wedge \oup_3 \s{RI}(1) \\
			\wedge\ \s{Ver}(y_1, \sigma_1) \ \wedge\
			\oup_4 \bmbot
		\end{array}
		\right)
	\right) \bigwedge \\
	&
	\left( \lambda_1 = 1 \to
		\left(
		\begin{array}{l}
			\oup_1 \s{RI}(2) \ \wedge \\
			\s{Ver}(y_1, \sigma_1) \ \wedge \\
			\oup_2 \bmbot
		\end{array}
		\right)
	\right) \bigwedge \\
	&
	\left( \lambda_1 = 2 \to
		\left(
		\begin{array}{l}
			\inp_2\bmbot \wedge \oup_2\bmbot \ \wedge \\
			\s{Ver}(y_1, \sigma_1) \ \wedge y_1 = \inp_1 y_1 \ \wedge \\
			\oup_1 \s{RI}(3) \wedge \inp_1 y_2 = gN^{x_1}
		\end{array}
		\right)
	\right) \bigwedge \\
	&
	\left( \lambda_1 = 3 \to
		\left(
		\begin{array}{l}
			x_1 = \inp_1 x_1 + \inp_2 x_1 + \inp_3 x_1 \\
			\wedge\ \s{Ver}(y_1, \sigma_1) \wedge y_1 = \oup_1 y_1 \\
			\wedge\ \inp_1 \s{RI}(3) \wedge \inp_2 \s{RI}(3) \wedge \inp_3 \s{RI}(3) \\
			\wedge\ \inp_4 \bmbot \wedge \oup_2 \bmbot \\
			\wedge\ \s{CorrWin}
		\end{array}
		\right)
	\right) ,
\end{align*}
where $\inp_i x_1$ is the revealed value from $V_i$, and $x_1$ in $K_1$ is their sum.
The winning condition is:
\begin{align*}
\s{CorrWin} & \coloneqq
\left(
	\left(
		|x_1 - 3 \cdot \inp_1 x_1| \le |x_1 - 3 \cdot \inp_2 x_1|
		\ \wedge\
		|x_1 - 3 \cdot \inp_1 x_1| \le |x_1 - 3 \cdot \inp_3 x_1|
	\right)
	\to \oup_1 y_1 = \inp_1 y_1
\right) \\
& \bigwedge
\left(
	\left(
		|x_1 - 3 \cdot \inp_2 x_1| < |x_1 - 3 \cdot \inp_1 x_1|
		\ \wedge\
		|x_1 - 3 \cdot \inp_2 x_1| \le |x_1 - 3 \cdot \inp_3 x_1|
	\right)
	\to \oup_1 y_1 = \inp_2 y_1
\right) \\
& \bigwedge
\left(
	\left(
		|x_1 - 3 \cdot \inp_3 x_1| < |x_1 - 3 \cdot \inp_1 x_1|
		\ \wedge\
		|x_1 - 3 \cdot \inp_3 x_1| < |x_1 - 3 \cdot \inp_2 x_1|
	\right)
	\to \oup_1 y_1 = \inp_3 y_1
\right).
\end{align*}

Here, $\oup_1 y_1$ designates the next spender, ensuring that the three token flows are spendable by the winner.

\itpara{Collective Receiving.}
Collective receiving is analogous to collective payments (\S\ref{subsec:usecase}), but aggregates incoming payers.
We use public keys $y_1$ to identify payers and an accumulator witness to authorize each contribution.
For three payers (each paying one coin), the transaction pattern is:
\begin{center}
~\xymatrixcolsep{4pc}\xymatrix{
\text{\mycircled{$K_0$}} \ar[r]^{\text{out} 1}_{\textsf{bal} = 0} &
\text{\mycircled{$K_1$}} \ar[r]^{\text{out} 1}_{\textsf{bal} = 1} &
\text{\mycircled{$K_2$}} \ar[r]^{\text{out} 1}_{\textsf{bal} = 2} &
\text{\mycircled{$K_3$}} \ar[r]^{\text{out} 1}_{\textsf{bal} = 3} &
U_r
\\
U_1 \ar@/_0.75pc/[ur]^{\text{out} 1}_{\textsf{bal} = 1} &
U_2 \ar@/_0.75pc/[ur]^{\text{out} 1}_{\textsf{bal} = 1} &
U_3 \ar@/_0.75pc/[ur]^{\text{out} 1}_{\textsf{bal} = 1} & &
\\
}
\end{center}

The corresponding RI formula is:
\begin{align*}
F \coloneqq
& ((0 \le \lambda_1) \wedge (\lambda_1 \le 3)) \ \wedge \\
& \Big(
	(\lambda_1 = 3 \wedge \oup_1 y_1 = \texttt{RECEIVER} \wedge \oup_2 \bmbot)
	\ \bigvee \\
& \quad
	(\lambda_1 < 3
	\wedge (\oup_1 y_2) = y_2^{\s{hashToInt}(y_1)} \wedge \s{Ver}(y_1, \sigma_1)
	\wedge \oup_1 \s{RI}(\lambda_1 + 1)
	\wedge (\oup_1 \inm_2) = 1 \wedge \oup_1 \inp_3 \bmbot)
\Big).
\end{align*}
The receiver spends transaction $K_3$ with $\lambda_1 = 3$.

\itpara{Rollback Micro-example (State cloning versus nesting).}
Consider a chain $\tx_0 \to \tx_1 \to \tx_2$ where $\tx_2$ must enforce a guarded rollback to state carried by $\tx_0$.
Without nested modality, formulas in $\tx_2$ can reference fields of $\tx_2$ and its direct inputs (fields of $\tx_1$), but not fields of inputs-of-inputs (fields of $\tx_0$).
A common encoding therefore clones state through $\tx_1$ so it remains available at $\tx_2$.
For example, if $\tx_0$ carries state variables $x_1, \ldots, x_\ell$, then $\tx_1$ carries fresh variables $x_{\ell + 1}, \ldots, x_{2\ell}$ with constraints $x_{\ell + k} = \inp_1 x_k$ for all $k \in [\ell]$.
Then $\tx_2$ enforces rollback by requiring $P \to \bigwedge_{k = 1}^{\ell} x_k = \inp_1 x_{\ell + k}$, where $P$ is a guard predicate.
With nested modality, the same dependence on $\tx_0$ can be stated directly as $P \to \bigwedge_{k = 1}^{\ell} x_k = \inp_1 \inp_1 x_k$, avoiding propagation solely for future checking.
\label{app:rollback}

\subsection{Baseline Instantiations and Workload Mapping}
\label{app:deploymentanchors}
\S\ref{sec:evaluation} records the external practice anchors for the selected workflow obligations and the scope of the baseline comparison.
This appendix documents the benchmark snapshot underlying Fig.~\ref{fig:evalRes};
the final RI specifications are given by the formal definitions and case-study formulas in the main paper.
Here we describe the workflow-level Solidity baselines used in the measurements.
These are reference baselines rather than deployed artifacts with fixed public ABIs, so we describe them by workflow slice, entry points, state roles, and stated exclusions.

\itpara{Evaluation Equivalence and Inputs.}
Our evaluation targets the on-chain enforcement layer and uses a workflow-level comparison rather than claiming globally optimal Solidity gas usage.
For each RI case study, we implement a minimal Solidity contract for the corresponding benchmark workflow slice (setup, per-step transition, and finalization when applicable) and its step-admission conditions.
Each benchmark transaction step or grouped multi-transaction segment
is matched to one or more EVM calls realizing the corresponding abstract workflow
transition and admission behavior.
We replay the workload trace and sum the gas of all EVM calls matched
to each unit.
For setup-heavy workflows, setup is reported separately;
for the layer-two asset case, we additionally report cumulative gas for the
grouped multi-transaction segments that emulate a fixed number of transfers.

All integer values are represented as $256$-bit words, and group elements as field elements modulo a prime $p < 2^{256}$.
The Solidity baselines use the corresponding $256$-bit calldata/storage representation, so the comparison does not rely on narrower machine-word encodings.
We model the on-chain commitment structure and transaction-shape constraints while treating off-chain dissemination of auxiliary validity data (\eg, witness packets or consignments) as outside the evaluation.
The evaluation measures workflow fidelity and per-step gas cost under the mapping above;
it does not provide full formal verification of the Solidity contracts or an exhaustive search over EVM designs or gas optimizations.
We treat the EVM baselines as transparent, functionally matched reference implementations for the selected workloads.
The Solidity counterparts used in the measurements are available in the code repository\footnote{\url{https://github.com/htftsy/CTVPlus}.} and are included in the artifact together with the benchmark RI encodings and recorded outputs used in the reported comparison.

\itpara{Non-Fungible Tokens.}
For NFTs, we do not model the full ERC-721 engineering stack.
Instead, we extract the ownership-transition slice evaluated here:
minting creates a token identifier with an owner, and transfer preserves the identifier while updating the owner under authorization~\cite{EIP721,OZContracts}.
The RI encoding realizes this obligation as a per-token ownership-transition workflow carried by the designated UTXO state.
The Solidity baseline realizes this obligation through workflow-level entry points \texttt{mint} and \texttt{transfer}, with state and call arguments tracking the token identifier, the current owner, and the successor holder.
Metadata, enumeration, approvals, receiver hooks, and marketplace logic
are outside the evaluated ownership-transition slice.

\itpara{Gaming with Guesses and an External Beacon.}
The gaming case models setup, guessing, and finalization.
The RI uses $\lambda_1 \in \{0, 1, 2\}$ to encode these three stages.
The Solidity baseline realizes setup in the constructor, records guesses with \texttt{guess}, and finalizes with \texttt{determineWinner}.
Variable $x_1$ is the guessed value, while $y_1$ and $\sigma_1$ capture step authorization in the RI.
The value $\s{Rnd}$ corresponds to block-derived or oracle-supplied randomness used only at finalization~\cite{SolidityGlobals,EIP4399,ChainlinkVRF}.
The baseline stores the two guesses and player addresses, while the contract balance holds the settlement stakes.
Game-platform functionality beyond this staged guess-and-settle workflow
is outside scope.

\itpara{Layer-Two Assets with UTXO Anchoring.}
The case study models split, transfer, and merge over UTXO-anchored assets.
The Solidity baseline uses a fungible-balance contract whose constructor initializes the asset supply and whose \texttt{transfer} entry point updates balances.
The RI split/merge workflow is mapped to ordinary transfers realizing the corresponding net asset movements, as detailed below.
Matching is by asset conservation and net ownership/amount transitions, not by explicit split/merge transaction structure in Solidity.
The surrounding RGB- and Taproot-Assets-style engineering stack is outside scope~\cite{RGBClientSideValidation24,RGBSingleUseSeals24,RGBStateTransitions25,LightningTaprootAssetsProtocol24,LightningTaprootAssetsOverview24}.

\itpara{Collective Payments.}
We model a sender-funded sequential payout in which authorized recipients claim fixed amounts using accumulator witnesses.
The RI carries the payout state forward across claims and enforces the corresponding witness and transaction-shape conditions.
The Solidity baseline receives the funds through \texttt{charge} and uses repeated \texttt{recvMoney} calls to verify recipients and release payments.
The matched slice is the sequential payout itself;
wallet coordination, network negotiation, privacy heuristics, and coin-selection policy are outside scope.

\itpara{Taxed Transfers / Transfer-tax Tokens.}
The benchmark models deterministic fee routing on each transfer.
The Solidity baseline stores the controlling account, the tax recipient, and the aggregate funded amount.
Its \texttt{sendMoney} entry point takes the recipient, transfer amount, and tax amount, enforces the prescribed ratio, and routes both amounts accordingly.
The matched obligation is deterministic transfer-tax routing;
router integration and exchange-compatibility machinery are outside scope~\cite{UniswapRouter02,UniswapCommonErrors,UniswapFoTIssue}.

\itpara{Rate-Limited Token Flow.}
The benchmark models a cooldown policy that enforces a minimum inter-spend gap until an expiration point.
The Solidity baseline uses a workflow-level spending entry point with state recording the previous accepted step, the configured minimum gap, and the expiration point.
Before expiration, each spend must satisfy the required gap from the previous accepted spend and carry the policy forward.
The first policy-bearing spend at or after expiration still satisfies the gap check but terminates the policy;
subsequent spends are unrestricted by it.
The matched obligation is the cooldown/expiry admission rule;
the surrounding wallet-policy framework is outside scope~\cite{SafeSpendingLimits,SafeModules}.

\itpara{Step-by-Step Workload Mapping.}
The exact mapping used in the measurements is as follows.
\begin{enumerate}[label=(\alph*)]
\item \emph{NFT.}
$K_0$ is simulated by the constructor.
$K_1$ and $K_2$ are simulated by \texttt{mint}.
Each subsequent transaction corresponds to \texttt{transfer}.

\item \emph{Gaming with Guesses and an External Beacon.}
The initial transaction $K_0$ is simulated by the contract constructor.
$U_0$ and $U_1$ are simulated by \texttt{guess}, which inputs a guess $x_1$.
$K_1$ is simulated by \texttt{determineWinner}.

\item \emph{Layer-Two Assets.}
$U_1$ is simulated by the constructor.
Our sequence consisting of transfer ($V_1$), split ($V_2$--$V_3$), transfer witnesses ($U_2, U_3$), and merge ($V_4$) emulates four EVM layer-two asset transfers.
These correspond to a generator-to-user transfer ($V_1$), a user-to-user transfer of the first split part ($V_2, U_2$), a user-to-user transfer of the second split part ($V_3, U_3$), and a merge that recombines the two parts for the third recipient ($V_4$).
$V_1$, $V_4$, $V_2 + U_2$ (the net effect of $V_2$ and $U_2$), and $V_3 + U_3$ are each simulated by \texttt{transfer}.

\item \emph{Collective Payments.}
$K_0$ is simulated by the constructor, the charging function \texttt{charge}, and one invocation of \texttt{recvMoney}, as the proposer of $K_0$ is also one receiver to simplify the RI.
Each subsequent $K_i$ is simulated by \texttt{recvMoney}.

\item \emph{Taxed Transfers.}
$K_0$ is simulated by the constructor, \texttt{charge}, and \texttt{sendMoney}.
Each subsequent $K_i$ is simulated by \texttt{sendMoney}.

\item \emph{Rate-Limited Token Flow.}
$K_1$ is simulated by the constructor.
$K_i$ for $i \ge 2$ is simulated by \texttt{send}.
\end{enumerate}
We keep each Solidity contract minimal and aligned with the workflow semantics of the corresponding RI encoding.
When applicable, the constructor simulates the setup transaction, and each later step is simulated by a dedicated function call (\eg, \texttt{transfer}, \texttt{send}, or \texttt{recvMoney}).

\itpara{Public Implementation Pointers.}
For additional implementation context, we list public repositories with functionality related to several use cases.
These are illustrative pointers, not the practice anchors in Table~\ref{tab:anchors} or the EVM baselines used in our evaluation:
\begin{itemize}
\item Gaming with guesses and an external beacon:
\url{https://github.com/maAPPsDEV/coin-flip}.
\item Token ownership and transfer:
\url{https://github.com/nibbstack/erc721} and
\url{https://github.com/softstack/erc20-token-template}.
\item Collective payments:
\url{https://github.com/divyalalwani/Batch-Contract} and
\url{https://github.com/0xcaff/splitter-contract}.
\item Taxed transfers:
\url{https://github.com/Giulio2002/Ethereum-Taxable-Token}.
\end{itemize}

\section{Limitations and Future Directions}
\label{app:future}

\itpara{Limitations.}
Beyond the scope boundaries in \S\ref{subsec:limitations}, we highlight three implementation-level limitations of the checker/DSL design.
First, general-purpose references, arrays, and maps are unsupported, so
large structured state must be encoded directly (e.g., through graph encodings),
which can be expensive.
Second, the expression language has no intra-transaction loop construct;
iterative behavior is encoded through transaction-chain workflows.
Third, the checker design supports at most one RI formula and one RI instance
per transaction, limiting some composability patterns.

\itpara{Conjunction of RIs.}
Currently, each transaction carries at most one RI formula and one RI-instance label;
the instance is either fresh or inherited from the unique label induced by its input(s).
Complex workflows may require combining multiple RI obligations, for example, a transaction subject to both taxation and temporal constraints.
This motivates mechanisms for safe conjunction of invariants.
One possible extension is to allow outputs to carry an optional $(\mathit{ri\_id}, \lambda)$ pair and to invoke the RI interpreter on each attached formula over the current transaction, its inputs, and its outputs.
RI-enabled outputs could then be treated as another script flavor, allowing deployment through the usual transaction-layer extension pattern.

\itpara{Production-quality DSLs.}
The current DSL and benchmark tooling operate close to the logical RI
representation.
A production-quality developer-facing layer would require higher-level
abstractions, stronger typing, modular composition, and actionable diagnostics
while preserving validation-time evaluability and integration with existing
build and verification workflows.
Production-grade compilation and execution back ends for Plutus-, Move-,
Solidity-style toolchains and other virtual machines are outside the current tooling.

\itpara{Outer Logic for Ledger.}
An outer meta-logic could reason over ledger extensions and transaction patterns using per-transaction RIs and the inner logic of \S\ref{sec:dslri}.
Such reasoning requires trace-level assumptions beyond the semantics developed here.
The inner semantics exposes designated outputs and one-step re-checking as an interface for this outer layer.
Two natural extensions are a trace-based logic that proves when $\s{U}$ resolves at designated spends and a compositional treatment that lifts
per-obligation progress to whole workflows.

\itpara{Black-hole RIs and Satisfiability.}
Stronger tool and language support could reason about RI satisfiability.
Black-hole behavior is not unique to RIs, but RI conditions are logical
formulas drawn from a fixed, intentionally small constraint language,
providing a structured basis for static checks, counterexample generation,
and developer guidance.
Satisfiability-oriented tooling is outside the current tooling.

\section{Proof of Soundness}
\label{app:detailproof}

\begin{theorem}[Soundness]
For any valid ledger $\Gamma$, transaction $\tx$, and proposition $p \in bexpr$,
if $\Gamma, \tx \vdash p$, then $\Gamma, \tx \models p$.
\end{theorem}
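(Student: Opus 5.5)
Our plan is to prove a stronger, value-level invariant by induction on the height of a field-well-formed derivation: if $\Gamma, \tx \vdash p$ is derivable, then $\tx \neq \bmbot_{\tx}$ and $\beval(p) \in \nonf$. Field well-formedness comes free because only field-well-formed derivations are considered. For the same reason, $\s{FieldWF}_{\Gamma,\tx}(p)$ holds at the conclusion.

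The propositional rules are handled first, case by case, with the Kleene truth tables. The key fact is that $\{\s{U},\bmtop\}$ is the designated set of the strong Kleene logic K3, taken here as an admissibility relation. In particular, $\neg\s{U}=\s{U}$, so excluded middle $\varphi\vee\neg\varphi$ never evaluates to $\bmbot$: if $\varphi$ is $\s{U}$ the disjunction is $\s{U}$, and otherwise one disjunct is $\bmtop$. This rule needs no premise.

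The remaining propositional rules reduce as follows.
\begin{itemize}
\item For $\wedge$-introduction, the minimum of two designated values is designated.
\item For $\wedge$-elimination, $\min(a,b)\neq\bmbot$ forces $a,b\neq\bmbot$.
\item For $\vee$-introduction, the maximum dominates the designated disjunct.
\item $\to$-introduction and its inverse are definitional, since $\varphi\to\psi$ abbreviates $\neg\varphi\vee\psi$.
\item De Morgan and distribution are equalities of K3 truth functions, so the two sides have identical values.
\item Case analysis is a check on the value of $\max(a,a')$ against $\psi$. If $\psi=\bmbot$, both premises force $a,a'\neq\bmbot$. If neither is $\bmtop$, then the conclusion's $\neg(a\vee a')$ is $\s{U}$, hence designated.
\item Tautology (T) is the K3 identity $\neg\varphi_1\vee\neg\varphi_2\vee\psi$ on both sides, up to associativity.
\item Ex falso is vacuous: its premise $[\rhd]\bmbot$ evaluates to $\bmbot$ whenever the modal target exists, so the induction hypothesis makes that case impossible. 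If the target is absent, the conclusion $\rhd\varphi$ evaluates to $\s{U}$ directly.
\end{itemize}

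Next we treat the modal rules (K, $\vee$D, $\wedge$D), each in both directions, for $\rhd\in\{\inp_i,\oup_i\}$. We will first prove a lemma: for any Boolean connective context $C$, $\s{bEval}_{\Gamma,\tx}(\rhd\, C(\varphi,\psi)) = \s{bEval}_{\Gamma,\tx}(C(\rhd\varphi,\rhd\psi))$. The proof splits on whether the target exists. If $\s{prev}_i(\tx,\tx')$ or $\s{succ}^\Gamma_i(\tx,\tx')$ holds for some $\tx'\in\Gamma$, both sides become $C$ applied to $\s{bEval}_{\Gamma,\tx'}(\varphi)$ and $\s{bEval}_{\Gamma,\tx'}(\psi)$. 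Uniqueness of that $\tx'$ comes from collision-free identifiers and unique realization in a valid ledger; this is where validity of $\Gamma$ is used. If no target exists, the left side is $\s{U}$ and the right side is $C(\s{U},\s{U})$, which is $\s{U}$ for $\neg,\wedge,\vee$ (and hence for $\to$). With the lemma in hand, each double-line rule preserves the value exactly, and designation follows from the induction hypothesis.

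The main obstacle is bookkeeping: the nested evaluation context, which is $(\Gamma,\tx')$ rather than $(\Gamma,\tx)$, and the field-well-formedness side condition. $\s{FieldWF}$ is defined so that operands under an absent modal target are not traversed. We must therefore check that it is preserved in the right direction across each rule, so that the evaluator is defined at every judgment we appeal to. We would handle this by noting that $\s{FieldWF}_{\Gamma,\tx}(\rhd\chi)$ is equivalent to $\s{FieldWF}_{\Gamma,\tx'}(\chi)$ when the target exists, and holds trivially when it is absent. Since the derivation is assumed field-well-formed at every node anyway, this only guarantees that the evaluator is total on all formulas appearing in the derivation.
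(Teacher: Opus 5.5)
Your proposal is correct and follows the paper's proof essentially step for step: induction on the last rule, Kleene truth tables for the propositional rules, and a split on whether the referenced predecessor or successor exists for ex falso and for K, $\vee$D, $\wedge$D. Your commutation lemma gives equality of values, so it also covers the bottom-to-top direction of the double-line rules (which the paper treats only top-down), and it makes explicit that validity of $\Gamma$ is used for uniqueness of the referenced transaction. One slip to fix: in case analysis with $\psi = \bmbot$, the premises force $a, a' \neq \bmtop$ (not $\neq \bmbot$), so $\neg(a \vee a') \in \{\s{U}, \bmtop\}$ rather than necessarily $\s{U}$; this is what your following sentence actually relies on.
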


\begin{proof}
Fix a valid ledger $\Gamma$ and a transaction $\tx$.
We prove the theorem by induction on the last rule used in a derivation of $\Gamma, \tx \vdash p$.
We use the following facts repeatedly.
First, $\Gamma, \tx \models q$ iff $\tx \neq \bmbot_{\tx}$ and $\beval(q) \in \nonf$.
Second, for $\rhd \in \{\inp_i, \oup_i\}$, if the referenced transaction does not exist, then $\beval(\rhd q) = \s{U}$ by definition of $\s{bEval}$.
Third, all propositional identities used below are checked by the truth tables of Kleene's three-valued logic.

\begin{itemize}[topsep = 0pt]

\item\emph{$\wedge$-introduction.}
Assume $\Gamma, \tx \vdash \varphi$ and $\Gamma, \tx \vdash \psi$.
By the induction hypotheses, $\Gamma, \tx \models \varphi$ and $\Gamma, \tx \models \psi$.
Hence $\beval(\varphi), \beval(\psi) \in \nonf$.
By the K3 truth table for $\wedge$, this implies
$\beval(\varphi \wedge \psi) = \beval(\varphi) \wedge \beval(\psi) \in \nonf$.
Therefore, $\Gamma, \tx \models \varphi \wedge \psi$.

\item\emph{$\to$-introduction.}
Assume $\Gamma, \tx \vdash \neg \varphi \vee \psi$.
By the induction hypothesis, $\Gamma, \tx \models \neg \varphi \vee \psi$.
By syntax, $\varphi \to \psi$ is an abbreviation for $\neg \varphi \vee \psi$.
Hence $\beval(\varphi \to \psi) = \beval(\neg \varphi \vee \psi) \in \nonf$.
Therefore, $\Gamma, \tx \models \varphi \to \psi$.

\item\emph{$\wedge$-elimination left/right.}
Assume $\Gamma, \tx \vdash \varphi \wedge \psi$.
By the induction hypothesis, $\Gamma, \tx \models \varphi \wedge \psi$.
Hence $\beval(\varphi \wedge \psi) = \beval(\varphi) \wedge \beval(\psi) \in \nonf$.
By the K3 truth table for $\wedge$, this implies $\beval(\varphi) \in \nonf$ and $\beval(\psi) \in \nonf$.
Therefore, $\Gamma, \tx \models \varphi$ and $\Gamma, \tx \models \psi$.

\item\emph{Exclusion.}
We must show $\Gamma, \tx \models \varphi \vee \neg \varphi$.
Let $v \coloneqq \beval(\varphi)$.
If $v \in \{\bmbot, \bmtop\}$, then $v \vee \neg v = \bmtop$.
If $v = \s{U}$, then $v \vee \neg v = \s{U}$.
In all cases, $\beval(\varphi \vee \neg \varphi) \in \nonf$.
Therefore, $\Gamma, \tx \models \varphi \vee \neg \varphi$.

\item\emph{ex falso quodlibet (without $\rhd$).}
Assume $\Gamma, \tx \vdash \bmbot$.
By the induction hypothesis, $\Gamma, \tx \models \bmbot$.
But $\beval(\bmbot) = \bmbot \notin \nonf$, contradiction.
Hence this premise is semantically impossible.
Therefore, the conclusion $\Gamma, \tx \models \varphi$ follows vacuously.

\item\emph{ex falso quodlibet (with $\rhd \in \{\inp_i, \oup_i\}$).}
Assume $\Gamma, \tx \vdash \rhd \bmbot$.
By the induction hypothesis, $\Gamma, \tx \models \rhd \bmbot$.
We show $\Gamma, \tx \models \rhd \varphi$.

If the referenced predecessor/successor transaction does not exist, then
$\beval(\rhd \bmbot) = \s{U}$ and $\beval(\rhd \varphi) = \s{U}$ by definition of $\s{bEval}$.
Hence $\Gamma, \tx \models \rhd \varphi$.

Otherwise, let $\tx' \in \Gamma$ be the referenced transaction.
Then $\beval(\rhd \bmbot) = \s{bEval}_{\Gamma, \tx'}(\bmbot) = \bmbot \notin \nonf$,
contradicting $\Gamma, \tx \models \rhd \bmbot$.
Thus this branch is impossible.

Therefore, in all cases, $\Gamma, \tx \models \rhd \varphi$.

\item\emph{$\vee$-introduction left/right.}
We prove the left case.
Assume $\Gamma, \tx \vdash \varphi$.
By the induction hypothesis, $\Gamma, \tx \models \varphi$, so $\beval(\varphi) \in \nonf$.
By the K3 truth table for $\vee$, $\beval(\varphi \vee \psi) = \beval(\varphi) \vee \beval(\psi) \in \nonf$.
Hence $\Gamma, \tx \models \varphi \vee \psi$.
The right case is symmetric.

\item\emph{$\vee$-introduction (the implication-combining rule).}
Assume $\Gamma, \tx \vdash \varphi \to \psi$ and $\Gamma, \tx \vdash \varphi' \to \psi$.
By the induction hypotheses,
$\beval(\varphi \to \psi) \in \nonf$
and
$\beval(\varphi' \to \psi) \in \nonf$.
We must show $\beval((\varphi \vee \varphi') \to \psi) \in \nonf$.

Write
$x \coloneqq \beval(\varphi)$,
$y \coloneqq \beval(\varphi')$,
and
$z \coloneqq \beval(\psi)$.
Then
$\beval(\varphi \to \psi) = \neg x \vee z \in \nonf$
and
$\beval(\varphi' \to \psi) = \neg y \vee z \in \nonf$.

If $z \in \nonf$, then
$\beval((\varphi \vee \varphi') \to \psi) = \neg(x \vee y) \vee z \in \nonf$
immediately.

If $z = \bmbot$, then $\neg x \vee z \in \nonf$ implies $x \in \{\bmbot, \s{U}\}$,
and similarly $y \in \{\bmbot, \s{U}\}$.
Hence $x \vee y \in \{\bmbot, \s{U}\}$, so $\neg(x \vee y) \in \nonf$.
Therefore, $\neg(x \vee y) \vee z \in \nonf$.

In both cases, $\Gamma, \tx \models (\varphi \vee \varphi') \to \psi$.

\item\emph{De Morgan.}
Assume
$\Gamma, \tx \vdash \neg(\varphi_1 \wedge \varphi_2 \wedge \ldots \wedge \varphi_\ell)$.
By the induction hypothesis,
$\Gamma, \tx \models \neg(\varphi_1 \wedge \cdots \wedge \varphi_\ell)$.
We must show
$\Gamma, \tx \models \neg\varphi_1 \vee \neg\varphi_2 \vee \cdots \vee \neg\varphi_\ell$.

This is a propositional K3 identity.
A direct induction on $\ell$ suffices.
The base case $\ell = 1$ is immediate.
The step from $\ell - 1$ to $\ell$ follows by applying the binary De Morgan law to
$\neg\bigl((\varphi_1 \wedge \cdots \wedge \varphi_{\ell - 1}) \wedge \varphi_\ell\bigr)$
and then using the induction hypothesis on the prefix.
Hence the conclusion holds.

\item\emph{Distribution.}
Assume $\Gamma, \tx \vdash (\varphi_1 \vee \varphi_2) \wedge \psi$.
By the induction hypothesis, $\Gamma, \tx \models (\varphi_1 \vee \varphi_2) \wedge \psi$.
We must show $\Gamma, \tx \models (\varphi_1 \wedge \psi) \vee (\varphi_2 \wedge \psi)$.

This is again a propositional K3 identity.
Let
\[
x \coloneqq \beval(\varphi_1), \qquad
y \coloneqq \beval(\varphi_2), \qquad
z \coloneqq \beval(\psi).
\]
Then $(x \vee y) \wedge z \in \nonf$ implies
$(x \wedge z) \vee (y \wedge z) \in \nonf$ by the K3 truth table.
Therefore,
$\Gamma, \tx \models (\varphi_1 \wedge \psi) \vee (\varphi_2 \wedge \psi)$.

\item\emph{Disjunctive distribution ($\vee$D).}
Assume $\Gamma, \tx \vdash \rhd(\varphi \vee \psi)$.
By the induction hypothesis, $\Gamma, \tx \models \rhd(\varphi \vee \psi)$.
We show $\Gamma, \tx \models \rhd\varphi \vee \rhd\psi$.

If the referenced predecessor/successor transaction does not exist, then
$\beval(\rhd(\varphi \vee \psi)) = \beval(\rhd\varphi) = \beval(\rhd\psi) = \s{U}$.
Hence $\beval(\rhd\varphi \vee \rhd\psi) = \s{U} \in \nonf$.

Otherwise, let $\tx'$ be the referenced transaction.
Then
\begin{align*}
\beval(\rhd(\varphi \vee \psi))
&= \s{bEval}_{\Gamma, \tx'}(\varphi \vee \psi)
= \s{bEval}_{\Gamma, \tx'}(\varphi) \vee \s{bEval}_{\Gamma, \tx'}(\psi) \\
&= \beval(\rhd\varphi) \vee \beval(\rhd\psi) \\
&= \beval(\rhd\varphi \vee \rhd\psi).
\end{align*}
Thus, $\beval(\rhd\varphi \vee \rhd\psi) \in \nonf$, and $\Gamma, \tx \models \rhd\varphi \vee \rhd\psi$.

\item\emph{K-distribution (K).}
Assume $\Gamma, \tx \vdash \rhd(\varphi \to \psi)$.
By the induction hypothesis, $\Gamma, \tx \models \rhd(\varphi \to \psi)$.
We show $\Gamma, \tx \models \rhd\varphi \to \rhd\psi$.

If the referenced predecessor/successor transaction does not exist, then
$\beval(\rhd(\varphi \to \psi)) = \beval(\rhd\varphi) = \beval(\rhd\psi) = \s{U}$,
so
$\beval(\rhd\varphi \to \rhd\psi) = \neg \s{U} \vee \s{U} = \s{U} \in \nonf$.

Otherwise, let $\tx'$ be the referenced transaction.
Then
\begin{align*}
\beval(\rhd(\varphi \to \psi))
&= \s{bEval}_{\Gamma, \tx'}(\varphi \to \psi)
= \neg \s{bEval}_{\Gamma, \tx'}(\varphi) \vee \s{bEval}_{\Gamma, \tx'}(\psi) \\
&= \neg \beval(\rhd\varphi) \vee \beval(\rhd\psi) \\
&= \beval(\rhd\varphi \to \rhd\psi).
\end{align*}
Thus, $\Gamma, \tx \models \rhd\varphi \to \rhd\psi$.

\item\emph{Conjunctive distribution ($\wedge$D).}
Assume $\Gamma, \tx \vdash \rhd(\varphi \wedge \psi)$.
By the induction hypothesis, $\Gamma, \tx \models \rhd(\varphi \wedge \psi)$.
We show $\Gamma, \tx \models \rhd\varphi \wedge \rhd\psi$.

If the referenced predecessor/successor transaction does not exist, then
$\beval(\rhd(\varphi \wedge \psi)) = \beval(\rhd\varphi) = \beval(\rhd\psi) = \s{U}$,
so
$\beval(\rhd\varphi \wedge \rhd\psi) = \s{U} \wedge \s{U} = \s{U} \in \nonf$.

Otherwise, let $\tx'$ be the referenced transaction.
Then
\begin{align*}
\beval(\rhd(\varphi \wedge \psi))
&= \s{bEval}_{\Gamma, \tx'}(\varphi \wedge \psi)
= \s{bEval}_{\Gamma, \tx'}(\varphi) \wedge \s{bEval}_{\Gamma, \tx'}(\psi) \\
&= \beval(\rhd\varphi) \wedge \beval(\rhd\psi) \\
&= \beval(\rhd\varphi \wedge \rhd\psi).
\end{align*}
Thus, $\Gamma, \tx \models \rhd\varphi \wedge \rhd\psi$.

\item\emph{Tautology (T).}
Assume $\Gamma, \tx \vdash \varphi_1 \to \varphi_2 \to \psi$.
By the induction hypothesis, $\Gamma, \tx \models \varphi_1 \to \varphi_2 \to \psi$.
We show $\Gamma, \tx \models \varphi_1 \wedge \varphi_2 \to \psi$.

Write
$x \coloneqq \beval(\varphi_1)$,
$y \coloneqq \beval(\varphi_2)$,
and
$z \coloneqq \beval(\psi)$.
Then
\[
\beval(\varphi_1 \to \varphi_2 \to \psi)
= \neg x \vee \neg y \vee z
\in \nonf.
\]
We need to show
\[
\beval((\varphi_1 \wedge \varphi_2) \to \psi)
= \neg(x \wedge y) \vee z
\in \nonf.
\]
This is a propositional K3 identity
(equivalently, a De Morgan instance plus associativity of~$\vee$).
Hence the conclusion follows.

\end{itemize}
\end{proof}

This implies the corollary below.

\begin{corollary}
For any valid ledger $\Gamma$ and proposition $p \in bexpr$, if $\Gamma \vdash p$, then $\Gamma \models p$.
\end{corollary}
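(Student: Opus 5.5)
The corollary follows from the Soundness theorem just proved, applied pointwise. By the notational convention of \S\ref{subsec:syntax}, $\Gamma \vdash p$ abbreviates that $\Gamma, \tx \vdash p$ is derivable for every $\tx \in \Gamma$, and likewise $\Gamma \models p$ abbreviates $\Gamma, \tx \models p$ for every $\tx \in \Gamma$. So I fix a valid ledger $\Gamma$ and an arbitrary $\tx \in \Gamma$, and aim to show $\Gamma, \tx \models p$.

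The steps, in order, are as follows. First, unfold the hypothesis $\Gamma \vdash p$ to obtain a derivation of $\Gamma, \tx \vdash p$ for the chosen $\tx$. Here I use the standing convention that the deduction system $\mathcal{N}$ is considered only on field-well-formed derivations, so this derivation is field-well-formed. Second, invoke the Soundness theorem with this $\Gamma$, $\tx$ and $p$. Its hypotheses are exactly that $\Gamma$ is valid and that a (field-well-formed) derivation exists, so it yields $\Gamma, \tx \models p$. Third, since $\tx \in \Gamma$ was arbitrary, conclude $\Gamma \models p$ by re-folding the abbreviation.

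The only point needing care is the side condition $\tx \neq \bmbot_{\tx}$ in the admissibility relation. Every $\tx \in \Gamma$ is a genuine transaction of the ledger, hence distinct from the distinguished absent-reference sentinel. In any case, this condition is already part of the conclusion the Soundness theorem delivers.

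I also note the degenerate case where $\Gamma$ has no transactions. Both $\Gamma \vdash p$ and $\Gamma \models p$ are then vacuous, so the implication holds trivially.

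No step is a real obstacle. The argument is a universal-quantifier lift of the theorem, and the only thing to check is that the quantification ranges and the field-well-formedness convention match on both sides.
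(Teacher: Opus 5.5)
Your proposal is correct and matches the paper's proof: unfold $\Gamma \vdash p$ into $\Gamma, \tx \vdash p$ for each $\tx \in \Gamma$, apply the Soundness theorem pointwise, and refold into $\Gamma \models p$. Your remarks on the field-well-formedness convention, the $\tx \neq \bmbot_{\tx}$ side condition, and the empty-ledger case are accurate, but the paper does not need them, since the Soundness theorem already delivers them.
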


\begin{proof}
By definition, $\Gamma \vdash p$ implies $\Gamma, \tx \vdash p$ for each $\tx\in\Gamma$.
By soundness, $\Gamma, \tx \models p$ holds for each $\tx\in\Gamma$.
Hence $\Gamma \models p$.
\end{proof}

\fi

\end{document}